\newcommand{\wh}{\widehat}
\newcommand{\ispa}[1]{\langle \,#1 \,\rangle }
\newcommand{\sgn}{\operatorname{sgn}\nolimits} 
\newcommand{\ol}{\overline}
\newcommand{\mb}{\mathbb}
\newcommand{\dbar}{d\hspace{-1pt}\Bar{}\,}
\newcommand{\ccal}{\mathcal{C}}
\newcommand{\fcal}{\mathcal{F}}
\newcommand{\mcal}{\mathcal{M}}
\newcommand{\mr}{\mathrm} 
\newcommand{\ep}{\epsilon}
\newcommand{\re}{{\rm Re}\,}
\newcommand{\im}{{\rm Im}\,}
\newcommand{\dsp}{\displaystyle}
\newcommand{\wlim}{\operatornamewithlimits{{\rm w}-lim}}
\newcommand{\map}{\mathrm{Map}\,}
\newtheorem{thm}{{\sc Theorem}}[section]
\newtheorem{cor}[thm]{{\sc Corollary}}
\newtheorem{lem}[thm]{{\sc Lemma}}
\newtheorem{prop}[thm]{{\sc Proposition}}
\theoremstyle{definition}
\newtheorem{rem}[thm]{{\sc Remark}}
\begin{document}

\title[Eigenfunction expansion for quantum walks]
{An eigenfunction expansion formula for\\ one-dimensional two-state quantum walks}
\author{Tatsuya Tate}
\address{Mathematical Institute, Graduate School of Sciences, Tohoku University, 
Aoba, Sendai 980-8578, Japan. }
\email{tatsuya.tate.c6@tohoku.ac.jp}
\thanks{The author is partially supported by JSPS KAKENHI Grant Number  18K03267, 17H06465.}
\date{\today}

\renewcommand{\thefootnote}{\fnsymbol{footnote}}
\renewcommand{\theequation}{\thesection.\arabic{equation}}
\renewcommand{\labelenumi}{{\rm (\arabic{enumi})}}
\renewcommand{\labelenumii}{{\rm (\alph{enumii})}}
\numberwithin{equation}{section}

\begin{abstract}
The purpose of this paper is to give a direct proof of an eigenfunction expansion formula 
for one-dimensional 2-state quantum walks, which is an analog of 
that for Sturm-Liouville operators due to Weyl, Stone, Titchmarsh and Kodaira. 
In the context of the theory of CMV matrix it had been already established by Gesztesy-Zinchenko. 
Our approach is restricted to the class of quantum walks mentioned above whereas 
it is direct and it gives some important properties of Green functions. 
The properties given here enable us to give a concrete formula 
for a positive-matrix-valued measure, which gives directly the spectral measure, 
in a simplest case of the so-called two-phase model. 
\end{abstract}

\maketitle
\section{Introduction}\label{INTRO}
The quantum walks are certain unitary operators, defined below, in a discrete setting, 
and they are sometimes regarded as a quantum counterpart of the classical random walks. 
The homogeneous two-state quantum walks (in one dimension with constant coin matrix) is well understood 
(see, for example, \cite{K}, \cite{GJS}, \cite{ST}), and recently the scattering-theoretical aspect, 
as a perturbation of homogeneous walks, are intensively investigated 
(see \cite{Mo}, \cite{MS}, \cite{RSA1}, \cite{RSA2}, \cite{MSSSS}). 
The Schr\"{o}dinger operators in one dimension are often called the Sturm-Liouville operators and they are well-studied. 
Thus it would be rather natural to understand resemblances between one-dimensional quantum walks and Sturm-Liouville operators. 
The purpose of the present paper is to give a proof of an eigenfunction expansion formula for 1-dimensional 
two-state quantum walks which is analogous to classical formulas of Weyl \cite{W1}, \cite{W2}, Stone \cite{S}, 
Titchmarsh \cite{T} and Kodaira \cite{Kd} for Sturm-Liouville operators. 
The theory of eigenfunction expansion for Sturm-Liouville operators are discussed, for example, 
in \cite{Ma}, \cite{RS2}, \cite{RS3}, \cite{KoM} and a short review can be found in \cite{OPUC2}. 
Probabilistic aspects of one-dimensional quantum walks are also intensively investigated. 
The notion of {\it transfer matrix} is introduced in \cite{KKK} to construct stationary 
measures from eigenfunctions for quantum walks, and it is suitable for our analysis.  
Then our basic idea in this paper is to use transfer matrix to develop a theory analogous to that for Sturm-Liouville operators. 

Before going to explain our setting-up, we should mention about the work by Gesztesy-Zinchenko \cite{GZ} 
on Weyl-Titchmarsh theory for CMV matrices with Verblunsky coefficients in the unit disk. 
The notion of CMV matrices has been introduced by Cantero-Moral-Vel\'{a}zquez \cite{CMV} 
and has further developed and deepened by Simon \cite{OPUC1}, \cite{OPUC2}. 
The one-dimensional two-state quantum walks are special CMV matrices and the theory of CMV matrices 
applied to this class of quantum walks in \cite{CGMV}, \cite{FO} and other works. 
Therefore, many of our results in this paper are essentially contained in \cite{GZ}. 
However, since our presentation and proofs are direct without using the theory of CMV matrices, 
and formulas are given in usual representation of unitary evolutions for quantum walks. 
Although our approach only works for the class of quantum walks mentioned above, 
the setting of our presentation could have advantageous aspect when the 
quantum walks are applied and used in areas different from pure mathematics 
such as information sciences or quantum physics. 
Furthermore, it seems that a property of the Green function, 
Theorem $\ref{RF1}$ below, is new, and it can be used to give a concrete formula, 
Theorem $\ref{TP1}$, to compute the positive-matrix-valued measure, 
which gives directly the spectral resolution,  
for certain special simplest case of the so-called two-phase model \cite{EKST}. 

Now, let us prepare notation to mention some of results in the paper. 
All the inner products in the paper are complex linear in the first variable and the anti-complex linear in the second. 
We denote the standard Hermitian inner product of two dimensional complex vector space $\mb{C}^{2}$ 
by $\ispa{\cdot,\cdot}_{\mb{C}^{2}}$ and the standard orthonormal basis of $\mb{C}^{2}$ by $\{e_{L}, e_{R}\}$, 
\[
e_{L}=
\begin{bmatrix}
1 \\
0
\end{bmatrix}
,\quad 
e_{R}=
\begin{bmatrix}
0 \\
1
\end{bmatrix}.  
\]
The orthogonal projection onto each 1-dimensional subspace, $\mb{C}e_{L}$, $\mb{C}e_{R}$, is denoted by $\pi_{L}$, $\pi_{R}$. 
In general, the set of maps from a set $X$ to another set $Y$ is denoted by $\map(X,Y)$. 
We fix $\mathcal{C} \in \map(\mb{Z},\mr{U}(2))$, where $\mr{U}(2)$ is the group of unitary $2 \times 2$ matrices, and define a linear map 
\begin{equation}\label{QW1}
U(\ccal) \colon \map(\mb{Z},\mb{C}^{2}) \to \map(\mb{Z},\mb{C}^{2})
\end{equation}
by the following formula, 
\begin{equation}\label{QW2}
U(\ccal)\Psi(x)=\pi_{L} \ccal(x+1) \Psi(x+1) +\pi_{R} \ccal(x-1) \Psi(x-1) 
\end{equation}
where $\Psi \in \map(\mb{Z},\mb{C}^{2})$, $x \in \mb{Z}$. 
Let $\ell^{2}(\mb{Z},\mb{C}^{2})$ be the Hilbert space of $\ell^{2}$-functions whose inner product is given by 
\[
\ispa{f,g}=\sum_{x \in \mb{Z}} \ispa{f(x),g(x)}_{\mb{C}^{2}}
\]
for $f,g \in \ell^{2}(\mb{Z},\mb{C}^{2})$. 
The linear map $U(\ccal)$ defined in $\eqref{QW2}$ becomes a unitary operator on $\ell^{2}(\mb{Z},\mb{C}^{2})$ 
when it is restricted to $\ell^{2}(\mb{Z},\mb{C}^{2})$, and it preserves the space $C_{0}(\mb{Z},\mb{C}^{2})$ 
of finitely supported $\mb{C}^{2}$-valued functions. 
In this paper, we call the linear map defined in $\eqref{QW1}$, $\eqref{QW2}$ the {\it quantum walk} with 
the {\it coin matrix} $\ccal \colon \mb{Z} \to \mr{U}(2)$. 
We write
\begin{equation}\label{coin1}
\ccal(x)=
\begin{bmatrix}
a_{x} & b_{x} \\
c_{x} & d_{x}
\end{bmatrix}
\quad 
\triangle_{x}=\det \ccal(x)=a_{x}d_{x}-b_{x}c_{x} \quad (x \in \mb{Z}). 
\end{equation}
Throughout this paper, we assume the following. 
\begin{equation}\label{Ass1}
a_{x} \neq 0 \qquad (\forall x \in \mb{Z}). 
\end{equation}
Under the assumption $\eqref{Ass1}$, the unitarity of the matrix $\ccal(x)$ causes 
$d_{x} \neq 0$ for any $x \in \mb{Z}$. 
\begin{thm}[\cite{KKK}, \cite{KS}, \cite{MSSSS}]\label{TM1}
Suppose that the coin matrix $\ccal$ satisfies the assumption $\eqref{Ass1}$. 
For any $x \in \mb{Z}$ and any $\lambda \in \mb{C}^{\times}:=\mb{C} \setminus \{0\}$, 
we define the $2 \times 2$ matrix $T_{\lambda}(x)$ by 
\begin{equation}\label{TRM1}
T_{\lambda}(x)=
\begin{bmatrix}
{\dsp \frac{1}{a_{x+1}} (\lambda -\lambda^{-1}c_{x} b_{x+1}) }& 
{\dsp -\lambda^{-1} \frac{b_{x+1}d_{x}}{a_{x+1}} }\\
{\dsp \lambda^{-1} c_{x} }&{\dsp \lambda^{-1} d_{x}}
\end{bmatrix}, 
\end{equation}
and the $2 \times 2$ matrix $F_{\lambda}(x)$ by 
\begin{equation}\label{TRM2}
F_{\lambda}(x)  = 
\begin{cases}
I & (x=0), \\
{\dsp T_{\lambda}(x-1) T_{\lambda}(x-2) \cdots T_{\lambda}(1) T_{\lambda}(0) }& (x \geq 1), \\
{\dsp T_{\lambda}(x)^{-1} T_{\lambda}(x+1)^{-1} \cdots T_{\lambda}(-2)^{-1} T_{\lambda}(-1)^{-1} }& (x \leq -1). 
\end{cases}
\end{equation}
For any $u \in \mb{C}^{2}$, we define $\Phi_{\lambda}(u) \in \map(\mb{Z},\mb{C}^{2})$ by 
\begin{equation}\label{EF1}
\Phi_{\lambda}(u)(x)  = F_{\lambda}(x)u \quad (x \in \mb{Z}). 
\end{equation}
Then the map $\Phi_{\lambda} \colon \mb{C}^{2} \to \map(\mb{Z},\mb{C}^{2})$ is injective 
and the eigenspace $\mcal^{\lambda}$ of $U(\ccal)$ in $\map(\mb{Z},\mb{C}^{2})$ with 
an eigenvalue $\lambda \in \mb{C}^{\times}$ coincides with the image of $\Phi_{\lambda}$. 
Hence $\dim \mcal^{\lambda}=2$ for each such $\lambda$. 
Furthermore, we have $\dim \mcal^{\lambda} \cap \ell^{2}(\mb{Z},\mb{C}^{2}) \leq 1$ for any $\lambda \in S^{1}$. 
\end{thm}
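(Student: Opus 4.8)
The plan is to reduce the eigenvalue equation $U(\ccal)\Psi=\lambda\Psi$ to the first-order recursion $\Psi(x+1)=T_\lambda(x)\Psi(x)$, read off the first three assertions as bookkeeping on this recursion, and then isolate the genuine content in a Wronskian-type conserved quantity for the last assertion.

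First I would write $U(\ccal)\Psi=\lambda\Psi$ in components. Putting $\Psi(x)={}^{t}(\Psi_L(x),\Psi_R(x))$ and inserting the explicit forms of $\pi_L,\pi_R$ and of $\ccal(x)$ from $\eqref{coin1}$ into $\eqref{QW2}$, the equation splits into the two scalar relations $a_{x+1}\Psi_L(x+1)+b_{x+1}\Psi_R(x+1)=\lambda\Psi_L(x)$ and $c_{x-1}\Psi_L(x-1)+d_{x-1}\Psi_R(x-1)=\lambda\Psi_R(x)$ for every $x\in\mb{Z}$. Reindexing the second relation by $x\mapsto x+1$ and solving for $\Psi_R(x+1)$ yields the bottom row of $T_\lambda(x)$; substituting the result into the first relation and dividing by $a_{x+1}$ — which is exactly where Assumption $\eqref{Ass1}$ is used — yields the top row. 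Thus the eigenvalue equation is equivalent to $\Psi(x+1)=T_\lambda(x)\Psi(x)$ for all $x$. Since $\det T_\lambda(x)=d_x/a_{x+1}\neq 0$, each $T_\lambda(x)$ is invertible, so a solution is uniquely determined by $u=\Psi(0)$ and iterating the recursion in both directions gives precisely $\Psi(x)=F_\lambda(x)u=\Phi_\lambda(u)(x)$. This identifies $\mcal^\lambda$ with the image of $\Phi_\lambda$; injectivity is immediate from $\Phi_\lambda(u)(0)=F_\lambda(0)u=u$, so $\Phi_\lambda\colon\mb{C}^2\to\mcal^\lambda$ is an isomorphism and $\dim\mcal^\lambda=2$.

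For the final assertion I would introduce a Wronskian. Given $\Psi,\Phi\in\mcal^\lambda$, set $M(x)=[\Psi(x)\,|\,\Phi(x)]$ and $W(x)=\det M(x)=\Psi_L(x)\Phi_R(x)-\Psi_R(x)\Phi_L(x)$. From $M(x+1)=T_\lambda(x)M(x)$ and the computation $\det T_\lambda(x)=d_x/a_{x+1}$ one obtains $a_{x+1}W(x+1)=d_x W(x)$. The crucial point is that unitarity of $\ccal(x)$ forces $|a_x|=|d_x|$, whence $|a_{x+1}|\,|W(x+1)|=|a_x|\,|W(x)|$; that is, $x\mapsto|a_x|\,|W(x)|$ is constant on $\mb{Z}$. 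This is the analog of the constancy of the Wronskian in Sturm–Liouville theory, and it is what turns an otherwise non-constant quantity into a conserved current.

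Now suppose $\lambda\in S^1$ and $\Psi,\Phi\in\mcal^\lambda\cap\ell^2(\mb{Z},\mb{C}^2)$. Then $\Psi(x),\Phi(x)\to 0$ as $|x|\to\infty$, so $W(x)\to 0$; since $|a_x|\leq 1$ (from $|a_x|^2+|b_x|^2=1$), the constant value $|a_x|\,|W(x)|$ must equal $0$, and as $a_x\neq 0$ this forces $W(x)\equiv 0$. Hence $\{\Psi(0),\Phi(0)\}$ is linearly dependent in $\mb{C}^2$, and multiplying the dependence relation by $F_\lambda(x)$ (equivalently, applying the isomorphism $\Phi_\lambda$) shows $\Psi$ and $\Phi$ are linearly dependent in $\map(\mb{Z},\mb{C}^2)$. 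Therefore $\dim\mcal^\lambda\cap\ell^2(\mb{Z},\mb{C}^2)\leq 1$. I expect the only delicate step to be pinning down $\det T_\lambda(x)=d_x/a_{x+1}$ together with the unitarity identity $|a_x|=|d_x|$, since these are exactly what make $|a_x|\,|W(x)|$ conserved; the remainder is routine manipulation of the recursion.
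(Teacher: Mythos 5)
Your proof is correct, and its first half --- splitting $U(\ccal)\Psi=\lambda\Psi$ into components, solving to get $\Psi(x+1)=T_{\lambda}(x)\Psi(x)$, and deducing $\mcal^{\lambda}=\Phi_{\lambda}(\mb{C}^{2})$ with $\dim\mcal^{\lambda}=2$ --- is essentially identical to the paper's appendix. The genuine difference is in the last assertion. The paper does not work with the Wronskian of the $T_{\lambda}$-recursion directly: it introduces the half-shift $(Jf)(x)={}^{t}[f_{L}(x-1),\,f_{R}(x)]$ and a second transfer matrix $S_{\lambda}(x)$ (see $\eqref{TM22}$) governing $Jf$, whose whole point is that $|\det S_{\lambda}(x)|=|d_{x}/a_{x}|=1$, so that $|\det[(J\Psi)(x)\,(J\Phi)(x)]|$ is constant on the nose. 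You instead stay with the $T_{\lambda}$-Wronskian $W(x)$, whose multiplier $d_{x}/a_{x+1}$ is not unimodular, and repair it with the weight $|a_{x}|$: from $a_{x+1}W(x+1)=d_{x}W(x)$ and the unitarity identity $|a_{x}|=|d_{x}|$ you conclude that $|a_{x}|\,|W(x)|$ is conserved. These are in fact the same invariant in disguise: using the recursion for the $R$-components one computes $\det[(J\Psi)(x)\,(J\Phi)(x)]=\lambda^{-1}d_{x-1}W(x-1)$, which for $|\lambda|=1$ has modulus $|a_{x-1}|\,|W(x-1)|$. Your version is more self-contained, avoiding $J$ and $S_{\lambda}$ entirely; what the paper's route buys is contact with the transfer matrix used in \cite{KS}, \cite{MSSSS} and a marginally stronger conclusion, since it assumes only that one solution lies in $\ell^{2}(\mb{Z},\mb{C}^{2})$ while the other is merely bounded, whereas you take both in $\ell^{2}$ --- which is all the theorem as stated requires. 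Your limiting argument (the constant $c=|a_{x}|\,|W(x)|\le|W(x)|\to 0$ as $|x|\to\infty$, hence $W\equiv 0$ since $a_{x}\neq 0$) and the final passage from $W(0)=0$ to linear dependence via $s_{0}\Psi+t_{0}\Phi=\Phi_{\lambda}\bigl(s_{0}\Psi(0)+t_{0}\Phi(0)\bigr)=0$ are both sound.
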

The matrices $T_{\lambda}(x)$ is called the {\it transfer matrix}\footnote{The $2 \times 2$ matrix-valued function 
$T_{\lambda}(x)$ is introduced in \cite{KKK}. There is another matrix called transfer matrix used in \cite{KS}. See Appendix in this paper.}, 
and it is useful to describe various functions and quantities related to the quantum walk $U(\ccal)$. 
For example, the Green function $R_{\lambda}(x,y)$, which is the kernel function of the resolvent 
\begin{equation}\label{RES1}
R(\lambda)=(U-\lambda)^{-1}\quad (\lambda \in \mb{C} \setminus \sigma(U)) 
\end{equation}
of the restriction $U$ of $U(\ccal)$ to $\ell^{2}(\mb{Z},\mb{C}^{2})$, where $\sigma(U)$ 
denotes the spectrum of the operator $U$, can be described as follows. 
\begin{thm}\label{Green}
We define $\mr{z}_{L},\mr{z}_{R} \in \map(\mb{Z},\mr{M}_{2}(\mb{C}))$, 
where $\mr{M}_{2}(\mb{C})$ denotes the space of complex $2 \times 2$ matrices, by 
\begin{equation}\label{ZLR1}
\mr{z}_{L}(x)=
\begin{bmatrix}
1 & 0 \\
-c_{x}/d_{x} & 0 
\end{bmatrix},\quad 
\mr{z}_{R}(x) = 
\begin{bmatrix}
0 & -b_{x} /a_{x} \\
0 & 1
\end{bmatrix}. 
\end{equation}
We set $\mr{x}_{0}(\lambda)=R_{\lambda}(0,0)$. Then, we have 
\begin{equation}\label{GF1}
R_{\lambda}(x,y)=
\begin{cases}
F_{\lambda}(x) [\mr{x}_{0}(\lambda)+\lambda^{-1} \mr{z}_{L}(0)] F_{\lambda^{*}} (y)^{*} 
& (y<x), \\
F_{\lambda}(x) [\mr{x}_{0}(\lambda)+\lambda^{-1} \mr{z}_{R}(0)] F_{\lambda^{*}} (y)^{*} 
& (y>x),
\end{cases}
\end{equation}
where $\lambda^{*}=\ol{\lambda}^{-1}$, and for each $x \in \mb{Z}$, we have 
\begin{equation}\label{GF2}
\begin{split}
R_{\lambda}(x,x) & = F_{\lambda}(x) [\mr{x}_{0}(\lambda) +\lambda^{-1}\mr{z}_{L}(0)]F_{\lambda^{*}} (x)^{*} 
-\lambda^{-1} \mr{z}_{L}(x) \\
& = F_{\lambda}(x) [\mr{x}_{0}(\lambda) +\lambda^{-1}\mr{z}_{R}(0)]F_{\lambda^{*}} (x)^{*} 
-\lambda^{-1} \mr{z}_{R}(x)
\end{split}
\end{equation}
\end{thm}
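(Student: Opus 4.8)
The plan is to characterize the Green function column-by-column and then propagate it in each variable by the transfer matrix. Fix $y \in \mb{Z}$ and $u \in \mb{C}^{2}$, and let $\delta_{y}u$ denote the element of $\ell^{2}(\mb{Z},\mb{C}^{2})$ supported at $y$ with value $u$. Since $\lambda \notin \sigma(U)$, the function $\Psi := R(\lambda)(\delta_{y}u)$ is the unique $\ell^{2}$-solution of $(U-\lambda)\Psi = \delta_{y}u$, and by definition $\Psi(x)=R_{\lambda}(x,y)u$. Writing this equation out in the two coin components via $\eqref{QW2}$, the three-term relation collapses to the first-order recursion $\Psi(x+1)=T_{\lambda}(x)\Psi(x)$ at every site except $x\in\{y-1,y\}$, where the source enters. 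Hence, away from $y$, each column of $x\mapsto R_{\lambda}(x,y)$ solves $(U(\ccal)-\lambda)\Psi=0$, so by Theorem $\ref{TM1}$ there are constant matrices $A_{L}(y),A_{R}(y)\in\mr{M}_{2}(\mb{C})$ with $R_{\lambda}(x,y)=F_{\lambda}(x)A_{L}(y)$ for $y<x$ and $R_{\lambda}(x,y)=F_{\lambda}(x)A_{R}(y)$ for $y>x$. This produces the left factor $F_{\lambda}(x)$ of $\eqref{GF1}$.

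Next I would extract the second-variable structure from the companion identity $R(\lambda)(U-\lambda)=I$. Reading off the $(x,y)$-entry of the matrix kernel of $U(\ccal)$ from $\eqref{QW2}$ gives, for $y\neq x$, the three-term relation
\[
R_{\lambda}(x,y-1)\pi_{L}\ccal(y) + R_{\lambda}(x,y+1)\pi_{R}\ccal(y) = \lambda R_{\lambda}(x,y).
\]
Taking adjoints and using that $\ccal(y)$ is unitary (so $\ccal(y)^{-*}=\ccal(y)$, $\pi_{L}^{*}=\pi_{L}$, $\pi_{R}^{*}=\pi_{R}$), this becomes exactly the transfer recursion $H(y+1)=T_{\lambda^{*}}(y)H(y)$ for $H(y):=R_{\lambda}(x,y)^{*}$ with spectral parameter $\lambda^{*}=\ol{\lambda}^{-1}$; the relation $(\lambda^{*})^{-1}=\ol{\lambda}$ is precisely what matches the entries of $T_{\lambda^{*}}(y)$ in $\eqref{TRM1}$. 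Consequently $R_{\lambda}(x,y)=B_{L}(x)F_{\lambda^{*}}(y)^{*}$ for $y<x$ and $R_{\lambda}(x,y)=B_{R}(x)F_{\lambda^{*}}(y)^{*}$ for $y>x$, which supplies the right factor $F_{\lambda^{*}}(y)^{*}$. Comparing the two one-sided representations forces the middle factor to be a constant matrix, so $R_{\lambda}(x,y)=F_{\lambda}(x)M_{L}F_{\lambda^{*}}(y)^{*}$ for $y<x$ and $R_{\lambda}(x,y)=F_{\lambda}(x)M_{R}F_{\lambda^{*}}(y)^{*}$ for $y>x$, with $M_{L},M_{R}\in\mr{M}_{2}(\mb{C})$ independent of $x$ and $y$.

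It then remains to identify these constants and to treat the diagonal. Setting $y=0$, where $F_{\lambda^{*}}(0)=I$, reduces the problem to computing the jump of $R_{\lambda}(\cdot,0)$ across $0$. I would feed the homogeneous solutions back into the source-modified recursion at the two exceptional sites $x=-1$ and $x=0$; a direct computation with $T_{\lambda}(0)$ and the entries of $\ccal(0)$ shows that $\Psi(0)=\mr{x}_{0}(\lambda)u$ and the one-sided boundary values differ from it by exactly $\lambda^{-1}\mr{z}_{L}(0)u$ on the right and $\lambda^{-1}\mr{z}_{R}(0)u$ on the left, yielding $M_{L}=\mr{x}_{0}(\lambda)+\lambda^{-1}\mr{z}_{L}(0)$ and $M_{R}=\mr{x}_{0}(\lambda)+\lambda^{-1}\mr{z}_{R}(0)$, which is $\eqref{GF1}$. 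For the diagonal $\eqref{GF2}$, the site $x=y$ is itself a source site, so $R_{\lambda}(x,x)$ is a boundary value corrected by the source; running the same computation at a general $y=x$ produces the extra summand $-\lambda^{-1}\mr{z}_{L}(x)$ (resp. $-\lambda^{-1}\mr{z}_{R}(x)$). The equality of the two expressions in $\eqref{GF2}$ is then equivalent to the transfer-invariance identity
\[
F_{\lambda}(x)\bigl(\mr{z}_{L}(0)-\mr{z}_{R}(0)\bigr)F_{\lambda^{*}}(x)^{*} = \mr{z}_{L}(x)-\mr{z}_{R}(x),
\]
which I would verify by induction on $x$ using $F_{\lambda}(x+1)=T_{\lambda}(x)F_{\lambda}(x)$ and the explicit form of $T_{\lambda}$ in $\eqref{TRM1}$.

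The main obstacle is the second-variable analysis: one must convert $R(\lambda)(U-\lambda)=I$ into a transfer recursion and recognize, through the unitarity of $\ccal(y)$, that the correct spectral parameter is $\lambda^{*}=\ol{\lambda}^{-1}$ and that the right factor is $F_{\lambda^{*}}(y)^{*}$ rather than a naive transpose or inverse of $F_{\lambda}(y)$. Closely tied to this is the careful bookkeeping of the source across the two exceptional sites, which is what pins the middle matrices to the precise expressions $\lambda^{-1}\mr{z}_{L}(0)$ and $\lambda^{-1}\mr{z}_{R}(0)$ rather than to undetermined constants; the transfer-invariance identity needed for $\eqref{GF2}$ is an essential but routine consistency check.
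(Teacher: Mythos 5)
Your proposal is correct, but it takes a genuinely different route from the paper's. The paper proves Theorem \ref{Green} by first building, in Section \ref{CONJU}, explicit solution operators for the inhomogeneous equation and its conjugate (Proposition \ref{SOLP}), whose kernels $v_{\lambda}$, $w_{\lambda}$ already carry the $\mr{z}_{L}$, $\mr{z}_{R}$ corrections, together with the defect formula $(U-\lambda)W_{\lambda}f=f-\delta_{0}\otimes\wh{f}^{\ccal}(\lambda)$ of Theorem \ref{RESW}; the Green formula then follows by specializing $f=\delta_{y}\otimes u$ in the two resulting representations of $R(\lambda)f$ and eliminating, with Lemma \ref{REC1} converting between the $\mr{z}_{L}$ and $\mr{z}_{R}$ forms. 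You instead run the classical Sturm--Liouville Green-function argument directly on the kernel: $(U-\lambda)R(\lambda)=I$ makes each column solve the eigenvalue equation away from the source, hence propagate by $F_{\lambda}(x)$, while $R(\lambda)(U-\lambda)=I$, after taking adjoints, makes each row solve the conjugate equation $(U(\ccal)^{*}-\ol{\lambda})H=0$ away from the diagonal, hence propagate by $F_{\lambda^{*}}(y)$; your identification of the conjugate parameter $\lambda^{*}$ and of the factor $F_{\lambda^{*}}(y)^{*}$ is exactly the crux, and it matches the paper's recurrence $\eqref{IEEX}$ for the conjugate equation. The computations you defer do close: the local matching at the exceptional sites amounts to the identities $\lambda^{-1}T_{\lambda}(x)\mr{z}_{L}(x)=a_{x+1}^{-1}\pi_{L}$ and $\lambda^{-1}T_{\lambda}(x-1)^{-1}\mr{z}_{R}(x)=d_{x-1}^{-1}\pi_{R}$, which are items (4) and (5) of Lemma \ref{FMS}, and your final transfer-invariance identity is precisely the paper's Lemma \ref{REC1}, proved there exactly as you propose (induction via item (8) of Lemma \ref{FMS}). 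One small imprecision: citing Theorem \ref{TM1} for the one-sided regions is not literally legitimate, since that theorem characterizes solutions of the eigenvalue equation on all of $\mb{Z}$; what you actually need --- and do state --- is that the one-step recursion $\Psi(x+1)=T_{\lambda}(x)\Psi(x)$ holds at every site outside $\{y-1,y\}$, which already yields $\Psi(x)=F_{\lambda}(x)F_{\lambda}(y+1)^{-1}\Psi(y+1)$ to the right of the source and its mirror image to the left, so this is harmless. As for the trade-off: your argument is shorter and self-contained for this one theorem, whereas the paper's detour through $V_{\lambda}$ and $W_{\lambda}$ pays off later --- Theorem \ref{RESW} is what motivates the QW-Fourier transform, and the representation $R_{\lambda}(x,y)u=w_{\lambda}(x,y)u+R_{\lambda}(x,0)F_{\lambda^{*}}(y)^{*}u$ is reused in the proof of Theorem \ref{WTK}, so the machinery you bypass is needed elsewhere in the paper.
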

As above, the matrix-valued holomorphic function $\mr{x}_{0}(\lambda)=R_{\lambda}(0,0)$ 
plays one of central roles in the theory developed in the present paper. 
\begin{thm}\label{infinity}
Let $\lambda \in \mb{C}^{\times} \setminus S^{1}$. 
Let $A_{L}$ (resp. $A_{R}$) be the vector subspace in $\mb{C}^{2}$ consisting of 
all the vector $w \in \mb{C}^{2}$ satisfying 
\[
\sum_{x \geq N} \|F_{\lambda}(x)w\|_{\mb{C}^{2}}^{2}<+\infty 
\quad \left(
{\text resp. } \sum_{x \leq -N} \|F_{\lambda}(x)w\|_{\mb{C}^{2}}^{2}<+\infty 
\right) 
\]
for some $N>0$. Then we have $\dim A_{L}=\dim A_{R}=1$. In particular, we have 
\[
\mr{rank} [\mr{x}_{0}(\lambda) +\lambda^{-1}\mr{z}_{L}(0)]=
\mr{rank} [\mr{x}_{0}(\lambda) +\lambda^{-1}\mr{z}_{R}(0)]=\mr{rank}R_{\lambda}(x,y)=1
\]
for any $\lambda \in \mb{C}^{\times} \setminus S^{1}$ and $x,y \in \mb{Z}$ with $x \neq y$. 
\end{thm}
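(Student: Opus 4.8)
The plan is to pin down the two lines $A_L,A_R$ by a squeeze, producing the exact equality $\mr{rank}\,M_L+\mr{rank}\,M_R=2$ from one direction and the bounds $\dim A_L\le 1$, $\dim A_R\le 1$ from the other, where I abbreviate $M_L=\mr{x}_0(\lambda)+\lambda^{-1}\mr{z}_L(0)$ and $M_R=\mr{x}_0(\lambda)+\lambda^{-1}\mr{z}_R(0)$. First I would record the inclusions coming from the resolvent. Since $\lambda\in\mb{C}^\times\setminus S^1$ and $\sigma(U)\subseteq S^1$, the resolvent $R(\lambda)$ is bounded, so for every $u\in\mb{C}^2$ and $y\in\mb{Z}$ the function $x\mapsto R_\lambda(x,y)u=(R(\lambda)\delta_y^u)(x)$ lies in $\ell^2(\mb{Z},\mb{C}^2)$. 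Reading off $\eqref{GF1}$ for $x>y$ gives $R_\lambda(x,y)u=F_\lambda(x)\,M_LF_{\lambda^*}(y)^*u$, so each $M_LF_{\lambda^*}(y)^*u$ lies in $A_L$; as $F_{\lambda^*}(y)^*$ is invertible these vectors fill $\mr{Im}\,M_L$, whence $\mr{Im}\,M_L\subseteq A_L$ and likewise $\mr{Im}\,M_R\subseteq A_R$. In particular $\mr{rank}\,M_L\le\dim A_L$ and $\mr{rank}\,M_R\le\dim A_R$.

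Next I would show $A_L\cap A_R=\{0\}$: if $w$ lies in the intersection then $\Phi_\lambda(w)(x)=F_\lambda(x)w$ is square-summable at both ends, hence globally $\ell^2$, and by Theorem $\ref{TM1}$ it is an eigenfunction of the unitary $U$ for the eigenvalue $\lambda$; since $\sigma(U)\subseteq S^1$ and $\lambda\notin S^1$ this forces $\Phi_\lambda(w)=0$, and injectivity of $\Phi_\lambda$ gives $w=0$. Consequently $\mr{Im}\,M_L\cap\mr{Im}\,M_R=0$. On the other hand a direct computation with $\eqref{ZLR1}$ and $\eqref{coin1}$ gives
\[
\det\bigl(\mr{z}_L(0)-\mr{z}_R(0)\bigr)=-\triangle_0/(a_0 d_0)\neq 0,
\]
so that $M_L-M_R=\lambda^{-1}(\mr{z}_L(0)-\mr{z}_R(0))$ is invertible; since $\mr{Im}(M_L-M_R)\subseteq\mr{Im}\,M_L+\mr{Im}\,M_R$, this yields $\mr{Im}\,M_L+\mr{Im}\,M_R=\mb{C}^2$. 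Combining the trivial intersection with the full sum gives the direct-sum decomposition $\mr{Im}\,M_L\oplus\mr{Im}\,M_R=\mb{C}^2$, i.e. $\mr{rank}\,M_L+\mr{rank}\,M_R=2$.

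The remaining, genuinely dynamical, input is the pair of upper bounds $\dim A_L\le 1$ and $\dim A_R\le 1$, for which the determinant of the transfer matrices serves as a Wronskian. A short computation from $\eqref{TRM1}$ gives $\det T_\lambda(x)=d_x/a_{x+1}$, and unitarity of $\ccal(x)$ forces $|d_x|=|a_x|$, so the product in $\eqref{TRM2}$ telescopes to $|\det F_\lambda(x)|=|a_0|/|a_x|\ge|a_0|>0$ for all $x$, using $|a_x|\le 1$. If $\dim A_L=2$ then both columns of $F_\lambda(x)$ are $\ell^2$ at $+\infty$, hence tend to $0$, forcing $\det F_\lambda(x)\to 0$ and contradicting the lower bound; thus $\dim A_L\le 1$, and symmetrically $\dim A_R\le 1$. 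Feeding these into $2=\mr{rank}\,M_L+\mr{rank}\,M_R\le\dim A_L+\dim A_R\le 2$ collapses every inequality, giving $\dim A_L=\dim A_R=1$ and $\mr{rank}\,M_L=\mr{rank}\,M_R=1$. The rank-one claim for $R_\lambda(x,y)$ with $x\ne y$ then reads off $\eqref{GF1}$, since multiplying the rank-one matrices $M_L,M_R$ by the invertible factors $F_\lambda(x)$ and $F_{\lambda^*}(y)^*$ preserves rank.

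I expect the main obstacle to be precisely the separation of the two ends. The soft arguments ($A_L\cap A_R=0$ from the absence of $\ell^2$ eigenfunctions off the circle) bound only the total $\dim A_L+\dim A_R\le 2$ and leave open the degenerate splitting in which one tail carries both decaying solutions and the other carries none. It is the Wronskian lower bound $|\det F_\lambda(x)|\ge|a_0|$, cutting each side down to dimension at most one, together with the invertibility of $M_L-M_R$ forcing the ranks to sum to $2$, that rules out the $0+2$ possibility and pins the balanced splitting $1+1$.
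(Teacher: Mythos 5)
Your proof is correct, and it takes a genuinely different route from the paper's at both of its key steps; write $M_L=\mr{x}_0(\lambda)+\lambda^{-1}\mr{z}_L(0)$ and $M_R=\mr{x}_0(\lambda)+\lambda^{-1}\mr{z}_R(0)$ as you do. For the upper bounds $\dim A_L,\dim A_R\le 1$, the paper uses Lemma $\ref{HSL}$, the divergence of $\sum_{x\ge 1}\|F_\lambda(x)\|_{\mr{HS}}^2$ and $\sum_{x\le -1}\|F_\lambda(x)\|_{\mr{HS}}^2$, proved from the unitarity identity $\ccal(x)F_\lambda(x)=\lambda\pi_L F_\lambda(x-1)+\lambda\pi_R F_\lambda(x+1)$ by a summation ending in the contradiction $D=-1$. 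You instead use the telescoped determinant bound $|\det F_\lambda(x)|=|a_0|/|a_x|\ge |a_0|>0$ (correct, since $\det T_\lambda(x)=d_x/a_{x+1}$ and unitarity gives $|d_x|=|a_x|\le 1$), so that two square-summable columns, which would force $\det F_\lambda(x)\to 0$, are impossible. This is more elementary; it is essentially the Wronskian mechanism the paper relegates to its appendix (there with $S_\lambda(x)$, to prove $\dim\mcal^\lambda\cap\ell^2(\mb{Z},\mb{C}^2)\le 1$ on $S^1$), and it in fact implies Lemma $\ref{HSL}$ in full for every $\lambda\in\mb{C}^\times$ including the unit circle, where the paper's algebraic proof of that lemma gives no information. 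For the exactness, the paper's Lemma $\ref{RNK1}$ rules out rank zero by contradiction: $M_L=0$ would make $M_R=-\lambda^{-1}(\mr{z}_L(0)-\mr{z}_R(0))$ invertible, hence $A_R=\mb{C}^2$, contradicting Lemma $\ref{HSL}$. You instead establish $A_L\cap A_R=\{0\}$ (absence of $\ell^2$ eigenfunctions off the circle, via Theorem $\ref{TM1}$ and unitarity of $U$) together with $\mr{Im}\,M_L+\mr{Im}\,M_R\supseteq \mr{Im}(M_L-M_R)=\mb{C}^2$, so that $\mr{rank}\,M_L+\mr{rank}\,M_R=2$, and then squeeze against $\mr{rank}\,M_{L/R}\le\dim A_{L/R}\le 1$. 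Both arguments rest on the resolvent inclusions $\mr{Im}\,M_{L/R}\subseteq A_{L/R}$ and on the non-singularity of $\mr{z}_L(0)-\mr{z}_R(0)$; what yours buys is a purely linear-algebraic replacement for the analytic divergence lemma and a single squeeze delivering $\dim A_{L/R}=\mr{rank}\,M_{L/R}=1$ simultaneously, while the paper's Lemma $\ref{HSL}$ is the stronger quantitative statement, of possible independent use.
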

For $\lambda \in \mb{C}^{\times} \setminus S^{1}$, let $\mr{v}_{\pm}(\lambda)$ be unit vectors satisfying 
\begin{equation}\label{VPM1}
\sum_{x \geq 1} \|F_{\lambda}(x) \mr{v}_{+}(\lambda) \|_{\mb{C}^{2}}^{2}<+\infty,\quad 
\sum_{x \leq -1} \|F_{\lambda}(x) \mr{v}_{-}(\lambda) \|_{\mb{C}^{2}}^{2}<+\infty. 
\end{equation}
The existence of these unit vectors is assured by Theorem $\ref{infinity}$. For any unit vector $u=\,\!^{t}[a, b] \in \mb{C}^{2}$, 
we denote $u^{\perp}=\,\!^{t}[-\ol{b}, \ol{a}]$. 
\begin{thm}\label{RF1}
For $\lambda \in \mb{C}^{\times} \setminus S^{1}$ the unit vectors $\mr{v}_{+}(\lambda), \mr{v}_{-}(\lambda)$ 
are linearly independent. The matrix-valued holomorphic function $\mr{x}_{0}(\lambda)=R_{\lambda}(0,0)$ is given by 
\begin{equation}\label{VX000}
\begin{split}
\mr{x}_{0}(\lambda)e_{L} & =-\lambda^{-1} \frac{\ispa{\mr{z}_{L}(0)e_{L}, \mr{v}_{+}(\lambda)^{\perp}}_{\mb{C}^{2}}}
{\ispa{\mr{v}_{-}(\lambda), \mr{v}_{+}(\lambda)^{\perp}}_{\mb{C}^{2}}} \mr{v}_{-}(\lambda),\\
\mr{x}_{0}(\lambda)e_{R}& = -\lambda^{-1} \frac{\ispa{\mr{z}_{R}(0)e_{R}, \mr{v}_{-}(\lambda)^{\perp}}_{\mb{C}^{2}}}
{\ispa{\mr{v}_{+}(\lambda), \mr{v}_{-}(\lambda)^{\perp}}_{\mb{C}^{2}}}\mr{v}_{+}(\lambda). 
\end{split}
\end{equation}
\end{thm}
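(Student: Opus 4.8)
The plan is to combine the rank-one conclusions of Theorem~\ref{infinity} with the $\ell^2$-behaviour of the resolvent kernel furnished by Theorem~\ref{Green}, and then to read off the two scalar coefficients by projecting onto the perpendicular directions.

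First I would settle the linear independence of $\mr{v}_+(\lambda)$ and $\mr{v}_-(\lambda)$. Suppose they were proportional, say $\mr{v}_-(\lambda)\in\mb{C}\,\mr{v}_+(\lambda)$. Then, by the defining summability conditions \eqref{VPM1}, the single vector $w=\mr{v}_+(\lambda)$ would satisfy both $\sum_{x\geq 1}\|F_\lambda(x)w\|_{\mb{C}^2}^2<\infty$ and $\sum_{x\leq -1}\|F_\lambda(x)w\|_{\mb{C}^2}^2<\infty$, so that $\Phi_\lambda(w)=F_\lambda(\cdot)w$ would lie in $\mcal^\lambda\cap\ell^2(\mb{Z},\mb{C}^2)$. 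Since $U$ is unitary its point spectrum is contained in $S^1$, whereas $\lambda\in\mb{C}^\times\setminus S^1$; hence $\mcal^\lambda\cap\ell^2(\mb{Z},\mb{C}^2)=\{0\}$, and the injectivity of $\Phi_\lambda$ (Theorem~\ref{TM1}) forces $w=0$, a contradiction.

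The crux is to identify the ranges of the two bracketed matrices. Since $R(\lambda)$ is bounded on $\ell^2(\mb{Z},\mb{C}^2)$, for each fixed $y\in\mb{Z}$ and $u\in\mb{C}^2$ the function $x\mapsto R_\lambda(x,y)u$ is square-summable. Its tail as $x\to+\infty$ is governed by the $y<x$ branch of \eqref{GF1}, namely $F_\lambda(x)\,[\mr{x}_0(\lambda)+\lambda^{-1}\mr{z}_L(0)]F_{\lambda^*}(y)^*u$; by the very definition of $A_L$ in Theorem~\ref{infinity}, square-summability forces $[\mr{x}_0(\lambda)+\lambda^{-1}\mr{z}_L(0)]F_{\lambda^*}(y)^*u\in A_L$ for every $u$. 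As $F_{\lambda^*}(y)^*$ is invertible, this yields $\mr{Im}\,[\mr{x}_0(\lambda)+\lambda^{-1}\mr{z}_L(0)]\subseteq A_L=\mb{C}\,\mr{v}_+(\lambda)$. The tail as $x\to-\infty$, governed by the $y>x$ branch, gives in the same way $\mr{Im}\,[\mr{x}_0(\lambda)+\lambda^{-1}\mr{z}_R(0)]\subseteq A_R=\mb{C}\,\mr{v}_-(\lambda)$. By the rank-one assertion of Theorem~\ref{infinity} both inclusions are equalities, and I expect this translation from operator boundedness to a range condition to be the main point requiring care.

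Finally I would extract the coefficients. The explicit shapes \eqref{ZLR1} give $\mr{z}_L(0)e_R=0$ and $\mr{z}_R(0)e_L=0$, whence
\[
\mr{x}_0(\lambda)e_R=[\mr{x}_0(\lambda)+\lambda^{-1}\mr{z}_L(0)]e_R\in\mb{C}\,\mr{v}_+(\lambda),\qquad
\mr{x}_0(\lambda)e_L=[\mr{x}_0(\lambda)+\lambda^{-1}\mr{z}_R(0)]e_L\in\mb{C}\,\mr{v}_-(\lambda),
\]
so $\mr{x}_0(\lambda)e_L=\alpha\,\mr{v}_-(\lambda)$ and $\mr{x}_0(\lambda)e_R=\beta\,\mr{v}_+(\lambda)$ for scalars $\alpha,\beta$. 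To pin down $\alpha$, apply $[\mr{x}_0(\lambda)+\lambda^{-1}\mr{z}_L(0)]$ to $e_L$: the result $\alpha\,\mr{v}_-(\lambda)+\lambda^{-1}\mr{z}_L(0)e_L$ lies in $\mb{C}\,\mr{v}_+(\lambda)$, so pairing with $\mr{v}_+(\lambda)^\perp$ (orthogonal to $\mr{v}_+(\lambda)$ by the definition of $u^\perp$) annihilates the $\mr{v}_+(\lambda)$-part and yields
\[
\alpha=-\lambda^{-1}\frac{\ispa{\mr{z}_L(0)e_L,\mr{v}_+(\lambda)^\perp}_{\mb{C}^2}}{\ispa{\mr{v}_-(\lambda),\mr{v}_+(\lambda)^\perp}_{\mb{C}^2}}.
\]
The symmetric computation with $[\mr{x}_0(\lambda)+\lambda^{-1}\mr{z}_R(0)]e_R$ and $\mr{v}_-(\lambda)^\perp$ produces $\beta$, giving exactly \eqref{VX000}. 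The two denominators are nonzero precisely because $\mr{v}_+(\lambda)$ and $\mr{v}_-(\lambda)$ are independent: the vanishing $\ispa{\mr{v}_-(\lambda),\mr{v}_+(\lambda)^\perp}_{\mb{C}^2}=0$ would force $\mr{v}_-(\lambda)\in\mb{C}\,\mr{v}_+(\lambda)$, contradicting the first part of the statement.
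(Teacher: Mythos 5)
Your proposal is correct, and it reaches \eqref{VX000} by a route that is genuinely, if modestly, different from the paper's. The linear-independence argument coincides with the paper's. For the main formula, the paper does not argue via range inclusions: it first characterizes $\mr{x}_{0}(\lambda)u$ as the \emph{unique} vector $w$ satisfying both $\sum_{x\geq 1}\|F_{\lambda}(x)[w+\lambda^{-1}\mr{z}_{L}(0)u]\|_{\mb{C}^{2}}^{2}<+\infty$ and $\sum_{x\leq -1}\|F_{\lambda}(x)[w+\lambda^{-1}\mr{z}_{R}(0)u]\|_{\mb{C}^{2}}^{2}<+\infty$ (the uniqueness half resting on invertibility of $U-\lambda$ on $\ell^{2}$), then \emph{constructs} a candidate by solving the $2\times 2$ system \eqref{CFF1} in the skew basis $\{\mr{v}_{+}(\lambda),\mr{v}_{-}(\lambda)\}$, verifies that this candidate meets the two conditions, and only at the end converts the coefficients into the perpendicular-pairing form via the identity $1-|m(\lambda)|^{2}=|\det[\mr{v}_{+}(\lambda)\ \mr{v}_{-}(\lambda)]|^{2}$, $m(\lambda)=\ispa{\mr{v}_{+}(\lambda),\mr{v}_{-}(\lambda)}_{\mb{C}^{2}}$, arriving at \eqref{BA}. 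You instead use only the ``forward'' information: the resolvent columns lie in $\ell^{2}$ and have the form \eqref{GF1}, which forces the ranges of $\mr{x}_{0}(\lambda)+\lambda^{-1}\mr{z}_{L}(0)$ and $\mr{x}_{0}(\lambda)+\lambda^{-1}\mr{z}_{R}(0)$ into $A_{L}=\mb{C}\mr{v}_{+}(\lambda)$ and $A_{R}=\mb{C}\mr{v}_{-}(\lambda)$ respectively (this inclusion is exactly the computation the paper performs just before Lemma~\ref{RNK1}, so it is available to you); then the observations $\mr{z}_{L}(0)e_{R}=\mr{z}_{R}(0)e_{L}=0$ and the pairing against $\mr{v}_{\pm}(\lambda)^{\perp}$ extract the two scalars by linear algebra alone, landing directly on the stated formula. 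The trade-offs are real but small: your argument needs the dimension count of Theorem~\ref{infinity} (so that $A_{L}$ and $A_{R}$ are lines), while the paper's verification needs only the existence of $\mr{v}_{\pm}(\lambda)$ plus the uniqueness characterization; conversely, you never invoke that uniqueness. Your route is shorter and makes the appearance of the $\perp$-pairings transparent, whereas the paper's detour through \eqref{CFF1} produces the auxiliary representation \eqref{VX0} (expressing $\mr{x}_{0}(\lambda)e_{L}$ also as $-\lambda^{-1}\mr{z}_{L}(0)e_{L}+a_{L}(\lambda)\mr{v}_{+}(\lambda)$, and similarly for $e_{R}$), which the paper reuses later in the homogeneous example, see \eqref{MT1}.
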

The eigenfunction expansion theorem due to Weyl \cite{W1}, \cite{W2}, Stone \cite{S}, Titchmarsh \cite{T} and Kodaira \cite{Kd} 
(where historical notes can be found in \cite{Kd} and \cite{RS2}) 
is regarded as an inversion formula for a generalized Fourier transform defined by 
eigenfunctions for Sturm-Liouville operators. 
Let us state eigenfunction expansion formula for the quantum walk $U(\ccal)$ defined 
by the coin matrix $\ccal$ satisfying $\eqref{Ass1}$. 
For any $f \in C_{0}(\mb{Z},\mb{C}^{2})$, we define a function $\fcal_{\ccal}[f]$ on $\mb{C}^{\times}$ by 
\begin{equation}\label{QWF1}
\fcal_{\ccal}[f](\lambda)=\wh{f}^{\ccal}(\lambda):=\sum_{x \in \mb{Z}} F_{\lambda^{*}}(x)^{*}f(x),\qquad 
\lambda \in \mb{C}^{\times},\ \lambda^{*}:=1/\ol{\lambda}. 
\end{equation}
The sum in $\eqref{QWF1}$ is finite because the support of $f$ is finite. 
Therefore, the function $\fcal_{\ccal}[f]$ is a Laurent polynomial in $\lambda \in \mb{C}^{\times}$. 
We call $\fcal_{\ccal}[f]=\wh{f}^{\ccal}$ the {\it QW-Fourier transform} of $f$. 
\begin{thm}\label{WTK}
There exists a positive-matrix-valued measure $\Sigma$ on $S^{1}$ such that we have the following. 
\par
\medskip
\noindent{\rm (1)} The resolvent $R(\lambda)$ is written as 
\begin{equation}\label{GF3}
\ispa{R(\lambda)f,g}=\int_{S^{1}} \frac{1}{\zeta -\lambda} 
\ispa{d\Sigma(\zeta)\fcal_{\ccal}[f](\zeta),\fcal_{\ccal}[g](\zeta)}_{\mb{C}^{2}}. 
\end{equation}
The positive-matrix valued measure $\Sigma$ satisfying $\eqref{GF3}$ is unique. 
\par
\medskip
\noindent{\rm (2)} For any $f,g \in C_{0}(\mb{Z},\mb{C}^{2})$, we have 
\begin{equation}\label{PF1}
\ispa{f,g}=\int_{S^{1}} \ispa{d\Sigma(\zeta)\fcal_{\ccal}[f](\zeta),\fcal_{\ccal}[g](\zeta)}_{\mb{C}^{2}}. 
\end{equation}
\noindent{\rm (3)} Let 
\[
U=\int_{S^{1}} \lambda \,dE(\lambda)
\]
be the spectral resolution of the unitary operator $U$, where $E$ is a projection-valued measure on $S^{1}$. 
Then, for each Borel set $A$ in $S^{1}$, the projection $E(A)$ on $\ell^{2}(\mb{Z},\mb{C}^{2})$ 
is written as 
\begin{equation}\label{Sk1}
[E(A)f](x)=\int_{A} F_{\zeta}(x) d\Sigma(\zeta) \fcal_{\ccal}[f](\zeta),\quad f \in C_{0}(\mb{Z},\mb{C}^{2}). 
\end{equation}
In particular the following inversion formula holds for $f \in C_{0}(\mb{Z},\mb{C}^{2})$. 
\begin{equation}\label{INV1}
f(x)=\int_{S^{1}} F_{\zeta}(x) d\Sigma(\zeta) \fcal_{\ccal}[f](\zeta). 
\end{equation}
\end{thm}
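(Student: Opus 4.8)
The plan is to construct the positive-matrix-valued measure $\Sigma$ as the spectral measure of $U$ localized at the origin, to prove the resolvent representation (1) by substituting the explicit Green function of Theorem $\ref{Green}$ into the bilinear form $\ispa{R(\lambda)f,g}$, and then to deduce the Parseval identity (2) and the projection formula (3) from (1) by an elementary contour integration over $S^{1}$ together with Stone's formula. Throughout, $\lambda^{*}=1/\ol{\lambda}$ as in $\eqref{QWF1}$, and I will repeatedly use that $\fcal_{\ccal}[f](\lambda)$ is a Laurent polynomial in $\lambda\in\mb{C}^{\times}$ whenever $f\in C_{0}(\mb{Z},\mb{C}^{2})$.

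First I would define $\Sigma$ by $\ispa{\Sigma(A)u,v}_{\mb{C}^{2}}:=\ispa{E(A)(\delta_{0}\otimes u),\,\delta_{0}\otimes v}$ for Borel $A\subset S^{1}$ and $u,v\in\mb{C}^{2}$, where $\delta_{0}\otimes u\in C_{0}(\mb{Z},\mb{C}^{2})$ is supported at the origin with value $u$. This is manifestly a positive-matrix-valued measure with $\Sigma(S^{1})=I$, and the spectral theorem applied at the single site $x=0$ shows that $\mr{x}_{0}(\lambda)=R_{\lambda}(0,0)$ is the Cauchy transform
\[
\mr{x}_{0}(\lambda)=\int_{S^{1}}\frac{1}{\zeta-\lambda}\,d\Sigma(\zeta),\qquad \lambda\in\mb{C}^{\times}\setminus S^{1}.
\]
The explicit evaluation in Theorem $\ref{RF1}$ is not logically needed for existence, but it identifies $\Sigma$ concretely and feeds the computation behind Theorem $\ref{TP1}$.

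The core step is $\eqref{GF3}$. For $f,g\in C_{0}(\mb{Z},\mb{C}^{2})$ I expand $\ispa{R(\lambda)f,g}=\sum_{x,y}\ispa{R_{\lambda}(x,y)f(y),g(x)}_{\mb{C}^{2}}$ and insert $\eqref{GF1}$–$\eqref{GF2}$. Using $\eqref{QWF1}$ together with the identity $\sum_{x}F_{\lambda}(x)^{*}g(x)=\fcal_{\ccal}[g](\lambda^{*})$, the part carrying $\mr{x}_{0}(\lambda)$ collapses into $\ispa{\mr{x}_{0}(\lambda)\fcal_{\ccal}[f](\lambda),\fcal_{\ccal}[g](\lambda^{*})}_{\mb{C}^{2}}$, while the terms carrying $\lambda^{-1}\mr{z}_{L}(0),\lambda^{-1}\mr{z}_{R}(0)$ and the diagonal corrections $-\lambda^{-1}\mr{z}_{L}(x),-\lambda^{-1}\mr{z}_{R}(x)$ give a remainder which, because $f,g$ are finitely supported, is a Laurent polynomial in $\lambda$ and hence holomorphic across $S^{1}$. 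Substituting the Cauchy representation of $\mr{x}_{0}$ and comparing with the target integral, the main term differs from $\int_{S^{1}}(\zeta-\lambda)^{-1}\ispa{d\Sigma(\zeta)\fcal_{\ccal}[f](\zeta),\fcal_{\ccal}[g](\zeta)}_{\mb{C}^{2}}$ only through the integrand $(\zeta-\lambda)^{-1}\bigl[\,\fcal_{\ccal}[g](\lambda^{*})^{*}\,\fcal_{\ccal}[f](\lambda)-\fcal_{\ccal}[g](\zeta)^{*}\,\fcal_{\ccal}[f](\zeta)\,\bigr]$; since the bracket is a Laurent polynomial in $\lambda$ vanishing at $\lambda=\zeta$, it is divisible by $(\lambda-\zeta)$, so this difference is again a Laurent polynomial. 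Consequently $\ispa{R(\lambda)f,g}$ minus the target is a Laurent polynomial holomorphic on all of $\mb{C}^{\times}$; it tends to $0$ as $\lambda\to\infty$ (since $R(\lambda)\to0$) and is regular at $\lambda=0$ (since $R(0)=-U^{-1}$), hence vanishes identically, proving $\eqref{GF3}$. Uniqueness follows by Stieltjes inversion, the family $\{\fcal_{\ccal}[f]:f\in C_{0}(\mb{Z},\mb{C}^{2})\}$ exhausting the $\mb{C}^{2}$-valued Laurent polynomials and therefore separating positive matrix measures on $S^{1}$. I expect the main obstacle to be precisely the careful reorganization of this double sum, in particular the split into the off-diagonal cases $y<x$, $y>x$ and the diagonal $y=x$, where the explicit form $\eqref{ZLR1}$ of $\mr{z}_{L},\mr{z}_{R}$ and the transfer-matrix recursion $\eqref{TRM1}$ must be tracked to confirm that all non-Cauchy contributions are genuinely polynomial.

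Finally I would deduce (2) and (3). Since $\sigma(U)\subset S^{1}$, the Riesz functional calculus gives $\ispa{f,g}=\frac{-1}{2\pi\iu}\oint_{|\lambda|=r}\ispa{R(\lambda)f,g}\,d\lambda$ for any $r>1$; inserting $\eqref{GF3}$, interchanging the two integrations (justified as both are over a compact contour against a finite measure), and using $\frac{-1}{2\pi\iu}\oint_{|\lambda|=r}(\zeta-\lambda)^{-1}\,d\lambda=1$ for $\zeta\in S^{1}$ yields $\eqref{PF1}$. For (3), Stone's formula for the unitary $U$ expresses $\ispa{E(A)f,g}$ as the boundary jump of $\ispa{R(\lambda)f,g}$ across the arc $A$ (taking the symmetric limit from $|\lambda|<1$ and $|\lambda|>1$ to accommodate possible atoms at the endpoints); by $\eqref{GF3}$ this jump equals $\int_{A}\ispa{d\Sigma(\zeta)\fcal_{\ccal}[f](\zeta),\fcal_{\ccal}[g](\zeta)}_{\mb{C}^{2}}$. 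Specializing $g=\delta_{x}\otimes v$ and using $\fcal_{\ccal}[g](\zeta)=F_{\zeta}(x)^{*}v$ on $S^{1}$ to move $F_{\zeta}(x)$ out of the inner product gives the kernel formula $\eqref{Sk1}$, and the choice $A=S^{1}$ produces the inversion formula $\eqref{INV1}$.
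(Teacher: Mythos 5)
Your proposal is correct, and it takes a genuinely different route from the paper. The paper never defines $\Sigma$ through the spectral measure of $U$: it forms the m-Carath\'eodory function $\mr{x}(\lambda)=I+2\lambda\mr{x}_{0}(\lambda)$ (Lemma \ref{mC1}), obtains $\Sigma$ from the matrix Herglotz representation, and then, for each fixed $f$, compares the scalar Carath\'eodory function $H_{f}(\lambda)=\ispa{K(\lambda)f,f}$ with its representing measure $\nu_{f}$; the identification $d\nu_{f}=\ispa{d\Sigma\,\wh{f}^{\ccal},\wh{f}^{\ccal}}_{\mb{C}^{2}}$ is carried out by a delicate weak-$*$ boundary-value limit $r\uparrow 1$ (see $\eqref{CH2}$ and the estimates following it), which is needed precisely because $\mr{x}$ may be singular on $S^{1}$; finally $\eqref{GF3}$ for general $f,g$ follows by polarization. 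Your route replaces all of that boundary analysis: defining $\Sigma$ as the spectral measure localized at the origin makes the Cauchy-transform identity $\mr{x}_{0}(\lambda)=\int_{S^{1}}(\zeta-\lambda)^{-1}d\Sigma(\zeta)$ immediate, and your Liouville-type argument --- the difference between $\ispa{R(\lambda)f,g}$ and the target integral is a Laurent polynomial on $\mb{C}^{\times}$, vanishing at infinity and regular at $0$, hence identically zero --- finishes (1) purely algebraically. The computational core (expanding $\ispa{R(\lambda)f,g}$ by the Green function of Theorem \ref{Green} and noting that every non-$\mr{x}_{0}$ contribution is a Laurent polynomial) is the same device as the paper's Lemma \ref{RE1}, but your use of it is lighter. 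What the paper's heavier route buys is the boundary-value characterization $\eqref{xmat3}$, $d\Sigma=\wlim_{r\uparrow 1}\re\mr{x}(r\zeta)\,\dbar\zeta$, which is exactly what Section \ref{EXS} uses to compute $\Sigma$ explicitly (Theorems \ref{CONT1} and \ref{TP1}); with your construction that formula would have to be derived separately.

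Two minor repairs. First, $R(0)=U^{-1}=U^{*}$, not $-U^{-1}$; this is harmless, since your argument only needs both sides to be regular at $\lambda=0$. Second, in (3) you can avoid appealing to a ``Stone formula for unitaries'' (and its endpoint-atom subtlety) altogether: by $\eqref{GF3}$ the complex measures $\ispa{dE(\zeta)f,g}$ and $\ispa{d\Sigma(\zeta)\fcal_{\ccal}[f](\zeta),\fcal_{\ccal}[g](\zeta)}_{\mb{C}^{2}}$ have the same Cauchy transform off $\sigma(U)$; expanding that transform at $\lambda=0$ and at $\lambda=\infty$ recovers all moments $\int_{S^{1}}\zeta^{n}\,d\mu(\zeta)$, $n\in\mb{Z}$, so the two measures coincide as Borel measures, which is $\eqref{Sk1}$ after taking $g=\delta_{x}\otimes v$. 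This is the same uniqueness argument you already invoke in part (1), and it is essentially how the paper concludes (3) from the identity $\ispa{h(U)f,g}=\int_{S^{1}}h(\zeta)\ispa{d\Sigma(\zeta)\wh{f}^{\ccal}(\zeta),\wh{g}^{\ccal}(\zeta)}_{\mb{C}^{2}}$ and uniqueness of the spectral measure.
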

\begin{cor}\label{spect}
The following holds. 
\begin{enumerate}
\item The spectrum $\sigma(U)$ coincides with the support of $\Sigma$; 
\item $\lambda \in S^{1}$ is an eigenvalue of $U$ if and only if $\Sigma(\{\lambda\}) \neq 0$. 
When $\lambda$ is an eigenvalue of $U$, the projection $E(\{\lambda\})$  onto the eigenspace of $\lambda$ 
is given by 
\[
[E(\{\lambda\})f](x)=F_{\lambda}(x)\Sigma(\{\lambda\}) \fcal_{\ccal}[f](\lambda). 
\]
\end{enumerate}
\end{cor}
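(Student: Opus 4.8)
The plan is to derive the corollary entirely from Theorem $\ref{WTK}$, in particular from the spectral-projection formula $\eqref{Sk1}$, by first recasting it as a sesquilinear identity and then invoking standard facts about the support and the atoms of a projection-valued measure. First I would establish, for every Borel set $A \subseteq S^{1}$ and all $f,g \in C_{0}(\mb{Z},\mb{C}^{2})$, the identity
\[
\ispa{E(A)f,g}=\int_{A} \ispa{d\Sigma(\zeta)\fcal_{\ccal}[f](\zeta),\fcal_{\ccal}[g](\zeta)}_{\mb{C}^{2}}.
\]
This follows by pairing $\eqref{Sk1}$ with $g$, interchanging the finite sum over $x$ with the integral, and using that for $\zeta \in S^{1}$ one has $\zeta^{*}=1/\ol{\zeta}=\zeta$, so that $\sum_{x} F_{\zeta}(x)^{*}g(x)=\fcal_{\ccal}[g](\zeta)$ by $\eqref{QWF1}$. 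Taking $A=S^{1}$ recovers $\eqref{PF1}$.

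The key step is the following equivalence: for any Borel set $A$, one has $E(A)=0$ if and only if $\Sigma(A)=0$. The implication $\Sigma(A)=0 \Rightarrow E(A)=0$ is immediate from the identity above, since the integrand then vanishes and $E(A)$, being bounded and zero on the dense subspace $C_{0}(\mb{Z},\mb{C}^{2})$, must vanish. For the converse I would use that constant functions lie in the range of the QW-Fourier transform: if $f$ is the function supported at $x=0$ with value $u \in \mb{C}^{2}$, then $\fcal_{\ccal}[f](\zeta)=F_{\zeta^{*}}(0)^{*}u=u$ because $F_{\lambda}(0)=I$ by $\eqref{TRM2}$. Hence $E(A)=0$ yields
\[
0=\ispa{E(A)f,f}=\int_{A}\ispa{d\Sigma(\zeta)u,u}_{\mb{C}^{2}}=\ispa{\Sigma(A)u,u}_{\mb{C}^{2}}
\]
for every $u$, and since $\Sigma(A)$ is positive semidefinite this forces $\Sigma(A)=0$.

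With this equivalence in hand, part (1) follows at once: by definition $\supp E$ and $\supp \Sigma$ are the complements of the largest open sets on which $E$, respectively $\Sigma$, vanish, and the equivalence shows these open sets coincide; together with the standard identity $\sigma(U)=\supp E$ for the spectral measure of a unitary operator, this gives $\sigma(U)=\supp \Sigma$. For part (2), applying the equivalence to the singleton $A=\{\lambda\}$ gives $E(\{\lambda\}) \neq 0$ if and only if $\Sigma(\{\lambda\}) \neq 0$; since $\lambda$ is an eigenvalue of the unitary $U$ precisely when $E(\{\lambda\}) \neq 0$, this establishes the eigenvalue criterion, and $E(\{\lambda\})$ is then the orthogonal projection onto the corresponding eigenspace. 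The explicit formula is read off directly from $\eqref{Sk1}$ with $A=\{\lambda\}$: the integral over a single point collapses to the atom, yielding $[E(\{\lambda\})f](x)=F_{\lambda}(x)\Sigma(\{\lambda\})\fcal_{\ccal}[f](\lambda)$.

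The routine parts are the interchange of the finite sum with the integral and the invocation of the standard spectral-theoretic identifications $\sigma(U)=\supp E$ and ``$\lambda$ an eigenvalue $\iff E(\{\lambda\})\neq 0$''. The one place that needs a little care is the converse in the equivalence, but the presence of constant functions in the range of $\fcal_{\ccal}$ (coming from $F_{\lambda}(0)=I$) makes even this step short, so I do not anticipate a serious obstacle.
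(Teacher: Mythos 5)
Your proof is correct and is essentially the argument the paper intends: the paper states Corollary \ref{spect} without a separate proof, treating it as a standard consequence of Theorem \ref{WTK}, and the sesquilinear identity $\ispa{E(A)f,g}=\int_{A}\ispa{d\Sigma(\zeta)\fcal_{\ccal}[f](\zeta),\fcal_{\ccal}[g](\zeta)}_{\mb{C}^{2}}$ that you derive from $\eqref{Sk1}$ is exactly the identity the paper establishes (via uniqueness of the spectral measure) in its proof of Theorem \ref{WTK}, from which $\eqref{Sk1}$ was read off. Your use of $\fcal_{\ccal}[\delta_{0}\otimes u]=u$ to pass from $E(A)=0$ to $\Sigma(A)=0$, and the standard identifications $\sigma(U)=\supp E$ and ``$\lambda$ eigenvalue $\iff E(\{\lambda\})\neq 0$'', complete the corollary exactly as intended.
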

We refer the readers to \cite{GT}, \cite{GZ} for properties of matrix-valued measures. 
We note that the matrix-valued function $\mr{x}_{0}(\lambda)$ is not an m-Carath\'{e}odory function 
in the sense of \cite{OPUC1} because our operator is unitary. 
Instead, we use the matrix 
\begin{equation}\label{xmat}
\mr{x}(\lambda)=I+2\lambda \mr{x}_{0}(\lambda)
\end{equation}
which is indeed a m-Carath\'{e}odory function.  
The positive-matrix-valued measure $\Sigma$ is then 
a boundary value of the function $\mr{x}(\lambda)$ in the sense that the $\Sigma$ satisfies 
\begin{equation}\label{xmat2}
\mr{x}(\lambda)=\int_{S^{1}} \frac{\zeta +\lambda}{\zeta -\lambda}\,d\Sigma(\zeta)\quad (|\lambda| <1),  
\end{equation}
and the positive-matrix-valued measure $\Sigma$ is characterized as the boundary value of $\mr{x}_{0}(\lambda)$, 
\begin{equation}\label{xmat3}
d\Sigma(\zeta)=\wlim_{r \uparrow 1} \re \mr{x}(r\zeta)=
\wlim_{r \uparrow 1}[r \zeta \mr{x}_{0}(r\zeta) -r^{-1}\zeta \mr{x}_{0}(r^{-1}\zeta)].
\end{equation}
Let $C(S^{1}\!,\mb{C}^{2})$ be the space of continuous $\mb{C}^{2}$-valued function on $S^{1}$. 
For any $k, l \in C(S^{1},\mb{C}^{2})$, we define 
\begin{equation}\label{FS1}
\ispa{k,l}_{\Sigma}=\int_{S^{1}} \ispa{d\Sigma(\zeta)k(\zeta),l(\zeta)}_{\mb{C}^{2}}. 
\end{equation}
It is clear that this defines an inner product on $C(S^{1}\!,\mb{C}^{2})$. 
We denote $L^{2}(S^{1}\!,\mb{C}^{2})_{\Sigma}$ the completion of $C(S^{1}\!,\mb{C}^{2})$ by the norm 
defined by the inner product $\eqref{FS1}$. 
\begin{thm}\label{FTT}
The map $\fcal_{\ccal}$ extends to a unitary operator from 
$\ell^{2}(\mb{Z},\mb{C}^{2})$ to $L^{2}(S^{1}\!, \mb{C}^{2})_{\Sigma}$. 
The quantum walk $U$ on $\ell^{2}(\mb{Z},\mb{C}^{2})$ is unitarily equivalent to the operator 
defined by the multiplication by $\lambda \in S^{1}$ on $L^{2}(S^{1}\!, \mb{C}^{2})_{\Sigma}$, namely, we have 
\begin{equation}\label{QWF2}
\fcal_{\ccal}[U(\ccal)f](\lambda)=\lambda \fcal_{\ccal}[f](\lambda)
\end{equation}
for any $f \in \ell^{2}(\mb{Z},\mb{C}^{2})$. 
\end{thm}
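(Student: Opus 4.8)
The plan is to derive the theorem entirely from the Parseval identity \eqref{PF1} together with two structural facts about $\fcal_\ccal$. \emph{Isometry.} The identity \eqref{PF1} of Theorem \ref{WTK}(2) states that $\ispa{f,g}=\ispa{\fcal_\ccal[f],\fcal_\ccal[g]}_{\Sigma}$ for all $f,g\in C_{0}(\mb{Z},\mb{C}^{2})$, i.e. $\fcal_\ccal$ preserves inner products on the dense subspace $C_{0}(\mb{Z},\mb{C}^{2})\subset\ell^{2}(\mb{Z},\mb{C}^{2})$. Hence $\fcal_\ccal$ extends uniquely to an isometry of $\ell^{2}(\mb{Z},\mb{C}^{2})$ into $L^{2}(S^{1},\mb{C}^{2})_{\Sigma}$, and in particular its range is a closed subspace. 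It then remains to establish the intertwining relation \eqref{QWF2} and the surjectivity of $\fcal_\ccal$.

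\emph{Intertwining.} For $f\in C_{0}(\mb{Z},\mb{C}^{2})$ and a fixed vector $w\in\mb{C}^{2}$, the defining formula \eqref{QWF1} gives $\ispa{\fcal_\ccal[f](\lambda),w}_{\mb{C}^{2}}=\ispa{f,\Phi_{\lambda^{*}}(w)}$, where the right-hand side is the finite sum $\sum_{x}\ispa{f(x),F_{\lambda^{*}}(x)w}_{\mb{C}^{2}}$. Applying this with $U(\ccal)f$ in place of $f$ and moving $U(\ccal)$ across the pairing (legitimate because $f$ is finitely supported and $U(\ccal)$ is a local operator whose pointwise adjoint coincides with $U(\ccal)^{-1}$), I would use that $\Phi_{\lambda^{*}}(w)$ is an eigenfunction of $U(\ccal)$ with eigenvalue $\lambda^{*}$, namely Theorem \ref{TM1}, so that $U(\ccal)^{-1}\Phi_{\lambda^{*}}(w)=\ol{\lambda}\,\Phi_{\lambda^{*}}(w)$. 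Since the inner product is conjugate-linear in its second slot, this yields $\ispa{\fcal_\ccal[U(\ccal)f](\lambda),w}_{\mb{C}^{2}}=\lambda\,\ispa{\fcal_\ccal[f](\lambda),w}_{\mb{C}^{2}}$, hence \eqref{QWF2} for $f\in C_{0}(\mb{Z},\mb{C}^{2})$. Because multiplication by $\lambda$ is bounded on $L^{2}(S^{1},\mb{C}^{2})_{\Sigma}$ (as $|\lambda|=1$ on $S^{1}$), the relation extends by continuity to all of $\ell^{2}(\mb{Z},\mb{C}^{2})$.

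\emph{Surjectivity (the main point).} The decisive observation is that $F_{\lambda}(0)=I$ by \eqref{TRM2}, so for the function $\delta_{0}\otimes u\in C_{0}(\mb{Z},\mb{C}^{2})$ supported at the origin with value $u$ we get $\fcal_\ccal[\delta_{0}\otimes u](\zeta)=F_{\zeta^{*}}(0)^{*}u=u$; thus every constant function lies in the range of $\fcal_\ccal$. Since $U(\ccal)^{\pm1}$ preserves $C_{0}(\mb{Z},\mb{C}^{2})$, the already-proven relation \eqref{QWF2} gives $\fcal_\ccal[U(\ccal)^{n}(\delta_{0}\otimes u)](\zeta)=\zeta^{n}u$ for every $n\in\mb{Z}$ and $u\in\mb{C}^{2}$, so the range contains all $\mb{C}^{2}$-valued Laurent polynomials. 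Taking $f=g=\delta_{0}\otimes u$ in \eqref{PF1} shows $\int_{S^{1}}\ispa{d\Sigma(\zeta)u,u}_{\mb{C}^{2}}=\|u\|^{2}$, whence $\Sigma$ has finite total mass and $\|k\|_{\Sigma}\le C\|k\|_{\infty}$ for $k\in C(S^{1},\mb{C}^{2})$. As $C(S^{1},\mb{C}^{2})$ is dense in $L^{2}(S^{1},\mb{C}^{2})_{\Sigma}$ by construction and the Laurent polynomials are dense in $C(S^{1},\mb{C}^{2})$ for the sup norm by Stone--Weierstrass, the Laurent polynomials are dense in $L^{2}(S^{1},\mb{C}^{2})_{\Sigma}$. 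The range of $\fcal_\ccal$ is therefore both closed and dense, hence equal to $L^{2}(S^{1},\mb{C}^{2})_{\Sigma}$, and $\fcal_\ccal$ is unitary.

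I expect the surjectivity step to be the real obstacle: the isometry and the intertwining are essentially bookkeeping on top of \eqref{PF1} and Theorem \ref{TM1}, whereas proving that the range exhausts $L^{2}(S^{1},\mb{C}^{2})_{\Sigma}$ requires producing a dense family inside it. The trick that removes the difficulty is the normalization $F_{\lambda}(0)=I$, which places the constants in the range, combined with the invariance of the range under multiplication by $\lambda$ coming from \eqref{QWF2}; together these generate the Laurent polynomials, after which finiteness of $\Sigma$ and Stone--Weierstrass finish the argument. The one point needing care is that all of this must be run within the finitely supported functions, using that $U(\ccal)$ and its inverse preserve $C_{0}(\mb{Z},\mb{C}^{2})$, so that only the intertwining relation on $C_{0}(\mb{Z},\mb{C}^{2})$ is needed.
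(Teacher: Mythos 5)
Your proposal is correct, and its overall architecture (isometry from \eqref{PF1}, then unitarity via density of the range) is the same as the paper's; the genuine divergence is in how \eqref{QWF2} is proved. The paper establishes \eqref{QWF2} for $f\in C_{0}(\mb{Z},\mb{C}^{2})$ at the end of Section \ref{CONJU} by direct computation: it writes out $T_{\lambda}(y)^{-1}$ explicitly, checks the identities $T_{\lambda^{*}}(y-1)^{*-1}\pi_{L}\ccal(y)=\lambda\,\ccal(y)^{*}\pi_{L}\ccal(y)$ and $T_{\lambda^{*}}(y)^{*}\pi_{R}\ccal(y)=\lambda\,\ccal(y)^{*}\pi_{R}\ccal(y)$, and shifts summation indices. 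You instead observe that $\ispa{\wh{f}^{\ccal}(\lambda),w}_{\mb{C}^{2}}=\ispa{f,\Phi_{\lambda^{*}}(w)}$ and move $U(\ccal)$ across this pairing onto the eigenfunction $\Phi_{\lambda^{*}}(w)$ supplied by Theorem \ref{TM1}; this is shorter and explains conceptually why $\fcal_{\ccal}$ diagonalizes $U$ --- the transform is precisely the pairing against generalized eigenfunctions. The one ingredient your route needs that the paper does not explicitly record is that the formal adjoint inverts $U(\ccal)$ on both sides on all of $\map(\mb{Z},\mb{C}^{2})$: the paper states only $U(\ccal)U(\ccal)^{*}=I$ (proof of Proposition \ref{SOLP}), while converting $U(\ccal)\Phi_{\lambda^{*}}(w)=\lambda^{*}\Phi_{\lambda^{*}}(w)$ into $U(\ccal)^{*}\Phi_{\lambda^{*}}(w)=\ol{\lambda}\,\Phi_{\lambda^{*}}(w)$ requires $U(\ccal)^{*}U(\ccal)=I$ (or injectivity of $U(\ccal)$ on $\map(\mb{Z},\mb{C}^{2})$); that identity is a one-line pointwise computation, so this is bookkeeping rather than a gap, but you should state it. For surjectivity, your argument and the paper's coincide up to phrasing: both rest on $\fcal_{\ccal}[U^{n}(\delta_{0}\otimes u)](\zeta)=\zeta^{n}u$, the bound $\|p\|_{\Sigma}^{2}\leq 4\|p\|_{\infty}^{2}$, and sup-norm density of Laurent polynomials in $C(S^{1}\!,\mb{C}^{2})$; the paper shows any $k$ orthogonal to the range vanishes, whereas you note that the range of an isometry of a complete space is closed and contains a dense family --- equivalent formulations. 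Your final extension of \eqref{QWF2} to all of $\ell^{2}(\mb{Z},\mb{C}^{2})$ by boundedness of multiplication by $\zeta$ on $L^{2}(S^{1}\!,\mb{C}^{2})_{\Sigma}$ matches the paper's closing remark.
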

The organization of the paper is as follows. In Section $\ref{CONJU}$ we solve two equations, 
an inhomogeneous eigenvalue equation and its conjugate. 
The definition of QW-Fourier transform $\eqref{QWF1}$ comes from the fact 
that it gives a defect of the left-inverse obtained by solving conjugate equation to 
the inhomogeneous eigenvalue equation to be the right-inverse. 
See Theorem $\ref{RESW}$ for a precise statement. 
The solutions to these equations are used to prove Theorem $\ref{WTK}$ in Section $\ref{GREENT}$. 
Some of properties of the Green kernel, such as Theorem $\ref{infinity}$, is proved also in Section $\ref{GREENT}$. 
In Section $\ref{INTEGRAL}$ we give proofs of Theorems $\ref{WTK}$ and $\ref{FTT}$. 
Finally we compute two example, homogeneous quantum walks and a simplest case of the two-phase model in Section $\ref{EXS}$. 
\section{Inhomogeneous eigenvalue equations and its conjugate}\label{CONJU}
Let $f \in \map(\mb{Z},\mb{C}^{2})$, $\lambda \in \mb{C}^{\times}$ and $w \in \mb{C}^{2}$. 
We consider the following equation, 
\begin{equation}\label{IEE1}
(U(\ccal)-\lambda)\Psi=f,\quad \Psi(0)=w. 
\end{equation}
Any $\Psi(a)$ with $a \in \mb{Z}$ can be chosen for initial value, 
but we have chosen $a=0$ for simplicity of notation.
To prove Theorem $\ref{Green}$, it is important to construct solutions to the equation $\eqref{IEE1}$ 
and its conjugate equation
\begin{equation}\label{IEE2}
(U(\ccal)^{*} -\ol{\lambda})\Phi=f,\quad \Phi(0)=w, 
\end{equation}
where the map 
\begin{equation}\label{CQW1}
U(\ccal)^{*} \colon \map(\mb{Z},\mb{C}^{2}) \to \map(\mb{Z},\mb{C}^{2})
\end{equation} 
is the extension of the formal adjoint (on $C_{0}(\mb{Z},\mb{C}^{2})$) of $U(\ccal)$ given by 
\begin{equation}\label{CQW2}
(U(\ccal)^{*}\Phi )(x)=\ccal(x)^{*}\pi_{L} \Phi(x-1)+\ccal(x)^{*}\pi_{R} \Phi(x+1) \quad (x \in \mb{Z}). 
\end{equation}
The formal adjoint $A^{*}$ of a linear map $A \colon \map(\mb{Z},\mb{C}^{2}) \to \map(\mb{Z},\mb{C}^{2})$ 
preserving $C_{0}(\mb{Z},\mb{C}^{2})$ is the unique linear map $A^{*} \colon C_{0}(\mb{Z},\mb{C}^{2})$ satisfying 
\[
\ispa{A^{*}f,g}=\ispa{f,Ag} \quad (f,g \in C_{0}(\mb{Z},\mb{C}^{2})). 
\]
\begin{prop}\label{SOLP}
\noindent{\rm (1)} \hspace{1pt} 
We define a kernel function $v_{\lambda} \in \map(\mb{Z}^{2},\mr{M}_{2}(\mb{C}))$ as follows. 
\begin{equation}\label{Vker}
\begin{split}
v_{\lambda}(x,0) & =
\begin{cases}
\lambda^{-1} F_{\lambda}(x) \mr{z}_{L}(0) & (x \geq 1), \\
\lambda^{-1}F_{\lambda}(x) \mr{z}_{R}(0) & (x \leq -1),
\end{cases}
\\
v_{\lambda}(x,y) & = 
\begin{cases}
\lambda^{-1}F_{\lambda}(x)F_{\lambda}(y)^{-1}(\mr{z}_{L}(y)-\mr{z}_{R}(y)) & (1 \leq y \leq x-1), \\
-\lambda^{-1} \mr{z}_{R}(x) & (1 \leq y =x),  \\
\lambda^{-1} F_{\lambda}(x) F_{\lambda}(y)^{-1} (\mr{z}_{R}(y) -\mr{z}_{L}(y)) & (x+1 \leq y \leq -1), \\
-\lambda^{-1} \mr{z}_{L}(x) & (x=y \leq -1), \\
0 & (\text{{\rm otherwise}}).
\end{cases}
\end{split}
\end{equation}
Then for each $f \in \map(\mb{Z},\mb{C}^{2})$, $\lambda \in \mb{C}^{\times}$ and $w \in \mb{C}^{2}$, 
the equation $\eqref{IEE1}$ has a unique solution given by 
\begin{equation}\label{SOL1}
\Psi=\Phi_{\lambda}(w)+V_{\lambda}f, 
\end{equation}
where $\Phi_{\lambda}(w)$ is defined in $\eqref{EF1}$ and $V_{\lambda}f$ is defined by  
\begin{equation}\label{VLAM}
V_{\lambda}f(x)=
\sum_{y \in \mb{Z}} v_{\lambda}(x,y)f(y). 
\end{equation}
\noindent{\rm (2)} We define a kernel function $w_{\lambda}^{o} \in \map(\mb{Z}^{2},\mr{M}_{2}(\mb{C}^{2}))$ 
as follows. 
\begin{equation}\label{W0ker}
\begin{split}
w_{\lambda}^{o}(x,0) & = 
\begin{cases}
\lambda^{*}F_{\lambda^{*}}(x)\mr{z}_{R}(0)^{*} & (x \geq 1), \\
\lambda^{*}F_{\lambda^{*}}(x)\mr{z}_{L}(0)^{*} & (x \leq -1),
\end{cases}
\\
w_{\lambda}^{o}(x,y) & = 
\begin{cases}
\lambda^{*}F_{\lambda^{*}}(x)F_{\lambda^{*}}(y)^{-1}(\mr{z}_{R}(y)^{*} -\mr{z}_{L}(y)^{*})  & 
(1 \leq y \leq x-1), \\
-\lambda^{*}\mr{z}_{L}(x)^{*} & 
(1 \leq y =x), \\
\lambda^{*} F_{\lambda^{*}}(x)F_{\lambda^{*}}(y)^{-1} (\mr{z}_{L}(y)^{*}-\mr{z}_{R}(y)^{*})   & 
(x+1 \leq y \leq -1), \\
-\lambda^{*}\mr{z}_{R}(x)^{*} & (x=y \leq -1), \\
0 & (\text{{\rm otherwise}}). 
\end{cases}
\end{split}
\end{equation}
Then for each $f \in \map(\mb{Z},\mb{C}^{2})$, $\lambda \in \mb{C}^{\times}$ and $w \in \mb{C}^{2}$, 
the equation $\eqref{IEE2}$ has a unique solution given by 
\begin{equation}\label{SOL2}
\Phi=\Phi_{\lambda^{*}}(w)+W_{\lambda}^{o}f, 
\end{equation}
where $W_{\lambda}^{o}f$ is defined by 
\begin{equation}\label{WLAM}
W_{\lambda}^{o}f(x)=
\sum_{y \in \mb{Z}} w_{\lambda}^{o}(x,y)f(y). 
\end{equation}
\end{prop}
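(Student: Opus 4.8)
The plan is to convert the difference equation $\eqref{IEE1}$ into an equivalent first-order recurrence governed by the transfer matrix, to solve that recurrence by a discrete variation-of-parameters (Duhamel) formula, and then to recognize the resulting convolution kernel as the kernel $v_{\lambda}$ written in $\eqref{Vker}$. Writing $\Psi(x)={}^{t}[\psi_{L}(x),\psi_{R}(x)]$ and $f(x)={}^{t}[f_{L}(x),f_{R}(x)]$, the two scalar components of $(U(\ccal)-\lambda)\Psi=f$ read
\[
a_{x+1}\psi_{L}(x+1)+b_{x+1}\psi_{R}(x+1)-\lambda\psi_{L}(x)=f_{L}(x),\qquad
c_{x-1}\psi_{L}(x-1)+d_{x-1}\psi_{R}(x-1)-\lambda\psi_{R}(x)=f_{R}(x).
\]
Combining the first equation at $x$ with the second at $x+1$ and solving for $\Psi(x+1)$ (legitimate since $a_{x+1}\neq 0$), I would obtain exactly $\Psi(x+1)=T_{\lambda}(x)\Psi(x)+s_{\lambda}(x)$ with $s_{\lambda}(x)=A_{x}f(x)+B_{x+1}f(x+1)$, where $A_{x}$ has sole nonzero entry $a_{x+1}^{-1}$ in the top-left slot and $B_{x+1}=-\lambda^{-1}\mr{z}_{R}(x+1)$. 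As $x$ ranges over $\mb{Z}$ this family of recurrences is equivalent to $\eqref{IEE1}$. Because $\det T_{\lambda}(x)=d_{x}/a_{x+1}\neq 0$, the recurrence determines $\Psi$ uniquely in both directions from the datum $\Psi(0)=w$, which gives the uniqueness assertion; the homogeneous part of the solution is $F_{\lambda}(x)w=\Phi_{\lambda}(w)(x)$ by the definition $\eqref{TRM2}$.

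Next I would write the Duhamel formula. For $x\geq 1$ it reads $\Psi(x)=F_{\lambda}(x)w+\sum_{k=0}^{x-1}F_{\lambda}(x)F_{\lambda}(k+1)^{-1}s_{\lambda}(k)$, and for $x\leq -1$ the backward iteration (using $T_{\lambda}(x)^{-1}$) gives $\Psi(x)=F_{\lambda}(x)w-\sum_{k=x}^{-1}F_{\lambda}(x)F_{\lambda}(k+1)^{-1}s_{\lambda}(k)$. Substituting $s_{\lambda}(k)=A_{k}f(k)+B_{k+1}f(k+1)$ and shifting the index in the $B$-term so that every contribution is attached to a single $f(y)$, the coefficient of $f(y)$ for $1\leq y\leq x-1$ collapses, after factoring $F_{\lambda}(x)F_{\lambda}(y)^{-1}$ and using $F_{\lambda}(y+1)^{-1}=F_{\lambda}(y)^{-1}T_{\lambda}(y)^{-1}$, to $F_{\lambda}(x)F_{\lambda}(y)^{-1}[T_{\lambda}(y)^{-1}A_{y}+B_{y}]$. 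Everything then rests on the local $2\times 2$ identity
\[
T_{\lambda}(y)^{-1}A_{y}=\lambda^{-1}\mr{z}_{L}(y),\qquad B_{y}=-\lambda^{-1}\mr{z}_{R}(y),\qquad T_{\lambda}(y)^{-1}A_{y}+B_{y}=\lambda^{-1}(\mr{z}_{L}(y)-\mr{z}_{R}(y)),
\]
a direct computation from $\eqref{TRM1}$ once $\det T_{\lambda}(y)=d_{y}/a_{y+1}$ is recorded. The boundary value $y=0$ (where $F_{\lambda}(1)^{-1}=T_{\lambda}(0)^{-1}$) and the diagonal $y=x$ (where only the $B$-term survives) reproduce the first two lines of $\eqref{Vker}$, while the backward iteration produces the sign-flipped off-diagonal blocks, the boundary term $\lambda^{-1}F_{\lambda}(x)\mr{z}_{R}(0)$, and the diagonal $-\lambda^{-1}\mr{z}_{L}(x)$ for $x\leq -1$.

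For part (2) I would run the identical scheme on the conjugate equation. The key preliminary is $(\ccal(x)^{*})^{-1}=\ccal(x)$, so that the two components of $\eqref{IEE2}$, namely $\ccal(x)^{*}\,{}^{t}[\phi_{L}(x-1),\phi_{R}(x+1)]=f(x)+\ol{\lambda}\Phi(x)$, can be solved for $\phi_{L}(x-1)$ and $\phi_{R}(x+1)$ with unconjugated coin entries; a short elimination across consecutive sites then yields the first-order recurrence $\Phi(x+1)=T_{\lambda^{*}}(x)\Phi(x)+\wt s_{\lambda}(x)$ with precisely the transfer matrix $T_{\lambda^{*}}$ (consistent with $U(\ccal)^{*}=U(\ccal)^{-1}$ on $\map(\mb{Z},\mb{C}^{2})$, which is immediate from $\ccal(x)\ccal(x)^{*}=I$, so the homogeneous solutions are $\Phi_{\lambda^{*}}(w)(x)=F_{\lambda^{*}}(x)w$). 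The same Duhamel-plus-reindex computation reduces the kernel to the local identity $T_{\lambda^{*}}(y)^{-1}\wt A_{y}+\wt B_{y}=\lambda^{*}(\mr{z}_{R}(y)^{*}-\mr{z}_{L}(y)^{*})$; here the appearance of the adjoints $\mr{z}_{L}^{*},\mr{z}_{R}^{*}$ is exactly where the unitarity relations $-b_{y}/a_{y}=\ol{c_{y}}/\ol{d_{y}}$ and $c_{y}/d_{y}=-\ol{b_{y}}/\ol{a_{y}}$ (that is, $\ccal(x)^{-1}=\ccal(x)^{*}$ written out entrywise) are used to turn the unconjugated coin ratios coming from the source into the conjugated matrices of $\eqref{W0ker}$.

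The genuinely delicate points, and where I expect to spend the most effort, are (i) setting up the conjugate recurrence in the first place, since $U(\ccal)^{*}$ couples $\Phi(x-1),\Phi(x),\Phi(x+1)$ and the first-order form is not immediate, and (ii) the bookkeeping across the five regimes of $\eqref{Vker}$ and $\eqref{W0ker}$ — the $\pm 1$ index shift in the source moves the endpoints of each range and is what creates the distinct diagonal and boundary entries. Everything else is the routine $2\times 2$ algebra behind the local identities, together with the unitarity substitution needed to see $\mr{z}_{L}^{*},\mr{z}_{R}^{*}$ emerge.
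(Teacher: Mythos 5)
Your proposal is correct, but it takes a genuinely different route from the paper's own proof. The paper argues by \emph{verification}: it takes the kernel $w_{\lambda}^{o}$ of $\eqref{W0ker}$ as given, forms $A_{\lambda}(x,y)=\ccal(x)^{*}\pi_{L}w_{\lambda}^{o}(x-1,y)+\ccal(x)^{*}\pi_{R}w_{\lambda}^{o}(x+1,y)-\ol{\lambda}w_{\lambda}^{o}(x,y)$, and checks case by case with Lemma $\ref{FMS}$ that $A_{\lambda}(x,y)=\delta_{x,y}I$; uniqueness is inherited from Theorem $\ref{TM1}$ (via $U(\ccal)U(\ccal)^{*}=I$), and part (1) is declared symmetric to part (2). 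You instead \emph{derive} the kernels: you recast $\eqref{IEE1}$ as the first-order recurrence $\Psi(x+1)=T_{\lambda}(x)\Psi(x)+A_{x}f(x)+B_{x+1}f(x+1)$ and solve it by discrete variation of parameters, so the five regimes of $\eqref{Vker}$ fall out of the index bookkeeping together with the local identities $T_{\lambda}(y)^{-1}A_{y}=\lambda^{-1}\mr{z}_{L}(y)$ (which is exactly Lemma $\ref{FMS}$, (4)) and $B_{y}=-\lambda^{-1}\mr{z}_{R}(y)$; I checked these identities and the forward and backward Duhamel sums, and they reproduce $\eqref{Vker}$ exactly, including the boundary column $y=0$ and both diagonals. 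Your conjugate recurrence in part (2) is also right: it is precisely the recurrence $\eqref{IEEX}$ that the paper records in the Remark immediately after its proof, where the author notes, without carrying it out, that the kernel ``can be deduced from this recurrence equation''; the needed local identities $T_{\lambda^{*}}(y)^{-1}\wt{A}_{y}=\lambda^{*}\mr{z}_{R}(y)^{*}$ and $\wt{B}_{y}=-\lambda^{*}\mr{z}_{L}(y)^{*}$ do hold, with unitarity entering exactly where you say, through $\mr{z}_{L}(y)^{*}=a_{y}^{-1}\pi_{L}\ccal(y)$ and $\mr{z}_{R}(y)^{*}=d_{y}^{-1}\pi_{R}\ccal(y)$ (identities the paper itself uses later, in the proof of Theorem $\ref{RESW}$). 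As for what each approach buys: yours explains where the formulas $\eqref{Vker}$, $\eqref{W0ker}$ come from and makes uniqueness self-contained (invertibility of the transfer matrices determines $\Psi$ from $\Psi(0)=w$ in both directions, with no appeal to Theorem $\ref{TM1}$), at the cost of heavier reindexing; the paper's verification is mechanically shorter in each case and leans on machinery already in place, but presupposes the answer.
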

We need some of the following formulas to prove Proposition $\ref{SOLP}$. 
\begin{lem}\label{FMS}
For any $\lambda \in \mb{C}^{\times}$, $x \in \mb{Z}$, we have the following. 
\begin{enumerate}
\item $\dsp F_{\lambda}(x+1)=T_{\lambda}(x)F_{\lambda}(x)$, 
\item $\dsp \pi_{L}\ccal(x) T_{\lambda}(x-1)=\lambda \pi_{L}$, 
\item $\dsp \pi_{R} \ccal(x) T_{\lambda}(x)^{-1} =\lambda \pi_{R}$, 
\item $\dsp \mr{z}_{L}(x)=\frac{\lambda}{a_{x+1}} T_{\lambda}(x)^{-1}\pi_{L}=\frac{\triangle_{x}}{d_{x}}\ccal(x)^{*}\pi_{L}$, 
\item $\dsp \mr{z}_{R}(x)=\frac{\lambda}{d_{x-1}} T_{\lambda}(x-1) \pi_{R}=\frac{\triangle_{x}}{a_{x}} \ccal(x)^{*} \pi_{R}$, 
\item $\dsp \mr{z}_{L}(x)^{*}+\mr{z}_{R}(x)=\mr{z}_{L}(x)+\mr{z}_{R}(x)^{*}=I$, 
\item $a_{x}\mr{z}_{L}(x)^{*}+d_{x}\mr{z}_{R}(x)^{*}=\pi_{L} \ccal(x) \mr{z}_{L}(x)^{*}+\pi_{R}\ccal(x)\mr{z}_{R}(x)^{*}=\ccal(x)$, 
\item $T_{\lambda}(x)[\mr{z}_{L}(x)-\mr{z}_{R}(x)]T_{\lambda^{*}}(x)^{*}=[\mr{z}_{L}(x+1)-\mr{z}_{R}(x+1)]$. 
\end{enumerate}
\end{lem}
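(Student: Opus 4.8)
The plan is to verify the eight identities essentially by direct computation from the definitions \eqref{TRM1}, \eqref{TRM2}, \eqref{ZLR1} and \eqref{coin1}, invoking the unitarity of $\ccal(x)$ only where starred matrices appear. I would dispatch the purely algebraic identities first and leave the Wronskian-type identity (8) for last, since the earlier ones feed into it.

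Identity (1) is immediate from the recursive structure of \eqref{TRM2}: for $x\geq 0$ one has $F_{\lambda}(x+1)=T_{\lambda}(x)\bigl(T_{\lambda}(x-1)\cdots T_{\lambda}(0)\bigr)=T_{\lambda}(x)F_{\lambda}(x)$, for $x\leq -1$ the inverse factors telescope to the same relation, and the boundary cases $x=0,-1$ are checked against $F_{\lambda}(0)=I$. For identities (2) and (3) I would multiply out the $2\times2$ matrices: in (2) the off-diagonal cancellation $\lambda^{-1}c_{x-1}b_{x}$ against $-\lambda^{-1}c_{x-1}b_{x}$ produces $\lambda\pi_{L}$; identity (3) requires first recording $\det T_{\lambda}(x)=d_{x}/a_{x+1}$, so that
\[
T_{\lambda}(x)^{-1}=\begin{bmatrix} \lambda^{-1}a_{x+1} & \lambda^{-1}b_{x+1} \\ -\lambda^{-1}a_{x+1}c_{x}/d_{x} & d_{x}^{-1}(\lambda-\lambda^{-1}c_{x}b_{x+1})\end{bmatrix},
\]
after which $\pi_{R}\ccal(x)T_{\lambda}(x)^{-1}=\lambda\pi_{R}$ is read off directly.

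For identities (4) and (5), each asserts two expressions. The first in each is a direct consequence of the inverse above: the first column of $T_{\lambda}(x)^{-1}$ scaled by $\lambda/a_{x+1}$ is exactly the nonzero column of $\mr{z}_{L}(x)$, and the second column of $T_{\lambda}(x-1)$ scaled by $\lambda/d_{x-1}$ is the nonzero column of $\mr{z}_{R}(x)$. The second expression, involving $\ccal(x)^{*}$, is where unitarity enters: I would use $\ccal(x)^{*}=\ccal(x)^{-1}=\triangle_{x}^{-1}\,\mr{adj}\,\ccal(x)$, equivalently the scalar relations $\triangle_{x}\ol{a_{x}}=d_{x}$, $\triangle_{x}\ol{b_{x}}=-c_{x}$, $\triangle_{x}\ol{c_{x}}=-b_{x}$, $\triangle_{x}\ol{d_{x}}=a_{x}$, to match $\frac{\triangle_{x}}{d_{x}}\ccal(x)^{*}\pi_{L}$ and $\frac{\triangle_{x}}{a_{x}}\ccal(x)^{*}\pi_{R}$ with $\mr{z}_{L}(x)$ and $\mr{z}_{R}(x)$. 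Identity (6) is the row-orthonormality hidden in $\ccal(x)\ccal(x)^{*}=I$: the relation $a_{x}\ol{c_{x}}+b_{x}\ol{d_{x}}=0$ and its conjugate force the $(1,2)$ and $(2,1)$ entries of $\mr{z}_{L}(x)^{*}+\mr{z}_{R}(x)$ and of $\mr{z}_{L}(x)+\mr{z}_{R}(x)^{*}$ to vanish, leaving $I$. For identity (7) I would first note that, since $\mr{z}_{L}(x)^{*}$ has vanishing second row and $\mr{z}_{R}(x)^{*}$ vanishing first row, $\pi_{L}\ccal(x)\mr{z}_{L}(x)^{*}=a_{x}\mr{z}_{L}(x)^{*}$ and $\pi_{R}\ccal(x)\mr{z}_{R}(x)^{*}=d_{x}\mr{z}_{R}(x)^{*}$, which identifies the two expressions; their common value is then checked to equal $\ccal(x)$ using the same relations as in (6).

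The hard part will be identity (8), the transfer of the bilinear form $\mr{z}_{L}-\mr{z}_{R}$. Here I would use identity (4) to replace $T_{\lambda}(x)\mr{z}_{L}(x)$ by $\tfrac{\lambda}{a_{x+1}}\pi_{L}$ and compute $T_{\lambda}(x)\mr{z}_{R}(x)$ directly, so that $T_{\lambda}(x)[\mr{z}_{L}(x)-\mr{z}_{R}(x)]$ is an explicit upper-triangular matrix; it then remains to multiply on the right by $T_{\lambda^{*}}(x)^{*}$ and identify the result with $\mr{z}_{L}(x+1)-\mr{z}_{R}(x+1)$. The genuine obstacle is the bookkeeping of three conjugations at once: $\lambda^{*}=\ol{\lambda}^{-1}$ sends $\lambda\mapsto\ol{\lambda}^{-1}$, the adjoint conjugates every coin entry, and $\det T_{\lambda}(x)=d_{x}/a_{x+1}$ must be tracked through the scalar prefactors. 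Crucially, the right-hand side contains $-c_{x+1}/d_{x+1}$, a site-$x+1$ quantity, whereas the left-hand side only sees $a_{x+1},b_{x+1}$ through $T_{\lambda}(x)$ and its conjugate; I would therefore rewrite $-c_{x+1}/d_{x+1}=\ol{b_{x+1}}/\ol{a_{x+1}}$ via unitarity of $\ccal(x+1)$ before matching entries, after which the four entries can be checked one by one. An alternative that may shorten the conjugation bookkeeping is to start from the factorization $\mr{z}_{L}(x)-\mr{z}_{R}(x)=\triangle_{x}\,\ccal(x)^{*}\,\mr{diag}(1/d_{x},-1/a_{x})$ supplied by the second forms in (4) and (5), though the final matching still rests on unitarity at both sites $x$ and $x+1$.
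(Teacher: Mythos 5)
Your proposal is correct and takes essentially the same route as the paper, whose entire proof of this lemma is the single remark that the formulas ``can be checked directly from the definitions.'' Your outline supplies exactly that verification, and its key ingredients all check out: $\det T_{\lambda}(x)=d_{x}/a_{x+1}$ and the resulting formula for $T_{\lambda}(x)^{-1}$, the unitarity relations $\triangle_{x}\ol{a_{x}}=d_{x}$, $\triangle_{x}\ol{b_{x}}=-c_{x}$, $\triangle_{x}\ol{d_{x}}=a_{x}$, $a_{x}\ol{c_{x}}+b_{x}\ol{d_{x}}=0$ used in (4)--(7), and, for (8), the upper-triangular form of $T_{\lambda}(x)[\mathrm{z}_{L}(x)-\mathrm{z}_{R}(x)]$ obtained from (4) together with the rewriting $-c_{x+1}/d_{x+1}=\ol{b_{x+1}}/\ol{a_{x+1}}$ from unitarity at site $x+1$.
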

These formulas can be checked directly from the definitions. 
\par
\medskip
\noindent {\it Proof of Proposition $\ref{SOLP}$.}\hspace{3pt}
Since $v_{\lambda}(x,y)$ and $w_{\lambda}^{o}(x,y)$ do not vanish only when 
$y$ lies between $0$ and $x$, the sums in $\eqref{VLAM}$ and $\eqref{WLAM}$ are finite. 
Thus, $V_{\lambda}f$ and $W_{\lambda}^{o}f$ are well-defined as elements in $\map(\mb{Z},\mb{C}^{2})$. 
The proof of two assertions (1), (2) are similar and hence we only give a proof of (2). 
Since $U(\ccal)U(\ccal)^{*}=I$, the uniqueness of the solution to $\eqref{IEE2}$ follows from Theorem $\ref{TM1}$. 
Thus it is enough to check that the function defined by $\eqref{SOL2}$ solves the equation $\eqref{IEE2}$. 
Since the function $\Phi_{\lambda^{*}}(w)$ satisfies $(U(\ccal)^{*}-\ol{\lambda})\Phi_{\lambda^{*}}(w)=0$, 
it is enough to check that the function $\Phi=W_{\lambda}^{o}f$ solves the equation $\eqref{IEE2}$. 
For simplicity of notation we set 
\[
A_{\lambda}(x,y):=\ccal(x)^{*}\pi_{L} w_{\lambda}^{o}(x-1,y) +
\ccal(x)^{*}\pi_{R} w_{\lambda}^{o}(x+1,y) -\ol{\lambda} w_{\lambda}^{o}(x,y). 
\]
Then it is enough to show
\begin{equation}\label{SOLw}
A_{\lambda}(x,y)=\delta_{x,y}I 
\end{equation}
for any $x,y \in \mb{Z}$, where $I$ is $2 \times 2$ identity matrix and $\delta_{x,y}$ is Kronecker's delta. 
Suppose first that $x \geq 2$. 
By the definition of $w_{\lambda}^{o}$ and Lemma $\ref{FMS}$, (1), (2), (3), we have 
\[
\begin{split}
A_{\lambda}(x,0) & =
\lambda^{*}\left\{
\ccal(x)^{*}\pi_{L} F_{\lambda^{*}}(x-1) +
\ccal(x)^{*}\pi_{R} F_{\lambda^{*}}(x+1) 
\right\}\mr{z}_{R}(0)^{*} - F_{\lambda^{*}}(x)\mr{z}_{R}(0)^{*} \\
& = \lambda^{*}\ccal(x)^{*}\left\{
\pi_{L}T_{\lambda}(x-1)^{-1} +\pi_{R} T_{\lambda^{*}}(x) 
\right\}F_{\lambda^{*}}(x) \mr{z}_{R}(0)^{*}- F_{\lambda^{*}}(x)\mr{z}_{R}(0)^{*} \\
& =\lambda^{*} \ccal(x)^{*}\left\{
\lambda^{*-1} \pi_{L} \ccal(x) +\lambda^{*-1} \pi_{R} \ccal(x)
\right\}F_{\lambda^{*}}(x) \mr{z}_{R}(0)^{*}- F_{\lambda^{*}}(x)\mr{z}_{R}(0)^{*} \\
& = \ccal(x)^{*}\ccal(x)F_{\lambda^{*}}(x) \mr{z}_{R}(0)^{*}- F_{\lambda^{*}}(x)\mr{z}_{R}(0)^{*}=0. 
\end{split}
\]
This computation also holds for $x=1$ because $F_{\lambda^{*}}(0)=I$ and $\pi_{L} \mr{z}_{R}(0)^{*}=0$. 
For $1 \leq y \leq x-2$, similar computation shows 
\[
\begin{split}
A_{\lambda}(x,y) & = \lambda^{*} \ccal(x)^{*} 
\left\{
\pi_{L} T_{\lambda^{*}}(x-1)^{-1} +\pi_{R} T_{\lambda^{*}} (x) 
\right\} F_{\lambda^{*}} (x) F_{\lambda^{*}} (y)^{-1} 
[\mr{z}_{R}(y)^{*} -\mr{z}_{L}(y)^{*}] \\
& \hspace{20pt} - F_{\lambda^{*}} (x)F_{\lambda^{*}} (y)^{-1} [\mr{z}_{R}(y)^{*} -\mr{z}_{L}(y)^{*}] \\
& = \lambda^{*} \ccal(x)^{*} \left\{
\lambda^{*-1} \pi_{L}\ccal(x) +\lambda^{*-1} \pi_{R} \ccal(x) 
\right\} F_{\lambda^{*}} (x) F_{\lambda^{*}} (y)^{-1} 
[\mr{z}_{R}(y)^{*} -\mr{z}_{L}(y)^{*}] \\
& \hspace{20pt} - F_{\lambda^{*}} (x)F_{\lambda^{*}} (y)^{-1} [\mr{z}_{R}(y)^{*} -\mr{z}_{L}(y)^{*}] \\
& = 0
\end{split}
\]
When $y=x-1$, we see 
\[
\begin{split}
A_{\lambda}(x,x-1) & = -\lambda^{*} \ccal(x)^{*} \left\{
\pi_{L} T_{\lambda^{*}} (x-1)^{-1} +\pi_{R} T_{\lambda^{*}} (x) 
\right\} F_{\lambda^{*}} (x)F_{\lambda^{*}} (x-1)^{-1} \mr{z}_{L}(x-1)^{*} \\
& \hspace{20pt} +\lambda^{*} \ccal(x)^{*} \pi_{R} F_{\lambda^{*}}(x+1)F_{\lambda^{*}}(x-1)^{-1} \mr{z}_{R}(x-1)^{*} \\
& \hspace{20pt} -F_{\lambda^{*}} (x)F_{\lambda^{*}}(x-1)^{-1} [\mr{z}_{R}(x-1)^{*}-\mr{z}_{L}(x-1)^{*}] \\
& = \lambda^{*} \ccal(x)^{*} \pi_{R} F_{\lambda^{*}}(x+1) F_{\lambda^{*}}(x-1)^{-1} \mr{z}_{R}(x-1)^{*} 
-F_{\lambda^{*}}(x) F_{\lambda^{*}}(x-1)^{-1} \mr{z}_{R}(x-1)^{*} \\
& = \left(
\lambda^{*} \ccal(x)^{*} \pi_{R} T_{\lambda^{*}}(x) -I
\right) F_{\lambda^{*}}(x) F_{\lambda^{*}}(x-1)^{-1} \mr{z}_{R}(x-1)^{*} \\
& = -\ccal(x)^{*} \pi_{L} \ccal(x) T_{\lambda^{*}} (x-1) \mr{z}_{R}(x-1)^{*} \\
& = -\lambda^{*}\ccal(x)^{*} \pi_{L} \mr{z}_{R}(x-1)^{*}=0.
\end{split}
\]
A similar computation also works well for $A_{\lambda}(1,0)$. 
For $x \geq 1$, similar computation shows 
\[
A_{\lambda}(x,x)=\ccal(x)^{*} 
\left\{
\pi_{R} \ccal(x) \mr{z}_{R}(x)^{*} +\pi_{L} \ccal(x) \mr{z}_{L}(x)^{*}
\right\}=\ccal(x)^{*}\ccal(x)=I, 
\]
which shows the equation $\eqref{SOLw}$ for $x \geq 1$. 
Let us consider the case $x=0$. Since $w_{\lambda}^{o}(0,y)=0$, we see 
\[
A_{\lambda}(0,y)=\ccal(0)^{*} \pi_{L}w_{\lambda}^{o}(-1,y)+\ccal(0)^{*}\pi_{R} w_{\lambda}^{o}(1,y). 
\]
Since $\pi_{R}\mr{z}_{L}(y)^{*}=\pi_{L} \mr{z}_{R}(y)^{*}=0$ for any $y \in \mb{Z}$, we see $A_{\lambda}(0,\pm 1)=0$. 
By Lemma $\ref{FMS}$, (2), (3), (7), we have 
\[
\begin{split}
A_{\lambda}(0,0) & =\ccal(0)^{*} \left\{
\lambda^{*} \pi_{L} T_{\lambda^{*}}(-1)^{-1}\mr{z}_{L}(0)^{*} +\lambda^{*} \pi_{R} T_{\lambda^{*}}(0) \mr{z}_{R}(0)^{*}
\right\} \\
& = \ccal(0)^{*} \left\{
\pi_{L} \ccal(0) \mr{z}_{L}(0)^{*} +\pi_{R} \ccal(0) \mr{z}_{R}(0)^{*}
\right\} \\
& = \ccal(0)^{*} \ccal(0)=I, 
\end{split}
\]
which completes the proof of $\eqref{SOLw}$ for $x \geq 0$. 
The equation $\eqref{SOLw}$ for $x \leq -1$ can be checked by a similar computation and we omit it. 
\hfill$\square$
\par
\medskip
\begin{rem}
The equation $\eqref{IEE2}$ is equivalent to the recurrence equation 
\begin{equation}\label{IEEX}
\begin{split}
\Phi(x+1) = & T_{\lambda^{*}}(x) \Phi(x) -\frac{\lambda^{*}}{a_{x+1}} \pi_{L} \ccal(x+1) f(x+1)  \\
& +\left(
I-\frac{1}{a_{x+1}} \pi_{L} \ccal(x+1) 
\right) \pi_{R} \ccal(x) f(x)
\end{split}
\end{equation}
with the condition $\Phi(0)=w$. The explicit formula for the kernel function $w_{\lambda}^{o}$ 
can be deduced from this recurrence equation. 
\hfill$\square$
\end{rem}
\medskip
\begin{cor}\label{WLAM2}
For $\lambda \in \mb{C}^{\times}$ we define $w_{\lambda} \in \map(\mb{Z}^{2},\mr{M}_{2}(\mb{C}))$ by 
\begin{equation}\label{Wker}
w_{\lambda}(x,y)=w_{\lambda}^{o}(y,x)^{*} \quad (x,y \in \mb{Z}). 
\end{equation}
Let $W_{\lambda} \colon C_{0}(\mb{Z},\mb{C}^{2}) \to \map(\mb{Z},\mb{C}^{2})$ be a 
map defined by 
\begin{equation}\label{OPW}
(W_{\lambda}f)(x)=\sum_{y \in \mb{Z}} w_{\lambda}(x,y)f(y)\quad (f \in C_{0}(\mb{Z},\mb{C}^{2})). 
\end{equation}
Then, $W_{\lambda}f$ satisfies 
\begin{equation}\label{CQW3}
W_{\lambda}(U(\ccal)-\lambda)f=f
\end{equation}
for any $f \in C_{0}(\mb{Z},\mb{C}^{2})$. 
\end{cor}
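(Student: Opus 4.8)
The plan is to exploit the duality between $W_\lambda$ and $W_\lambda^o$ built into the definition $\eqref{Wker}$, together with the right-inverse property of $W_\lambda^o$ that is already contained in the proof of Proposition $\ref{SOLP}$. Indeed, the verification $\eqref{SOLw}$, namely $A_\lambda(x,y)=\delta_{x,y}I$, says precisely that $W_\lambda^o$ is a right inverse of $U(\ccal)^*-\ol{\lambda}$ on all of $\map(\mb{Z},\mb{C}^2)$:
\[
(U(\ccal)^*-\ol{\lambda})(W_\lambda^o g)=g\qquad (g\in\map(\mb{Z},\mb{C}^2)).
\]
Since $w_\lambda(x,y)=w_\lambda^o(y,x)^*$, the operator $W_\lambda$ is the formal adjoint of $W_\lambda^o$, and the identity $\eqref{CQW3}$ should fall out by transposing the displayed relation. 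I would therefore first establish the adjoint relation, then read off $\eqref{CQW3}$ weakly, and finally upgrade it to a pointwise identity.

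First I would prove that for all $f,g\in C_{0}(\mb{Z},\mb{C}^{2})$,
\[
\ispa{W_\lambda f,g}=\ispa{f,W_\lambda^o g}.
\]
This is a direct rearrangement: both sides are finite double sums over $\supp g\times\supp f$, and the identity $w_\lambda(x,y)^*=w_\lambda^o(y,x)$ moves the matrix from one side of the $\mb{C}^2$-inner product to the other. Because the sums are finite, no convergence issue arises even though neither $W_\lambda f$ nor $W_\lambda^o g$ need lie in $\ell^2$ or have finite support.

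Next I would compute, for $f,g\in C_{0}$, using that $(U(\ccal)-\lambda)f\in C_{0}$ since $U(\ccal)$ preserves $C_{0}$:
\begin{align*}
\ispa{W_\lambda (U(\ccal)-\lambda)f,\,g}
&=\ispa{(U(\ccal)-\lambda)f,\,W_\lambda^o g} \\
&=\ispa{f,\,(U(\ccal)^*-\ol{\lambda})(W_\lambda^o g)}=\ispa{f,g},
\end{align*}
where the first equality is the adjoint relation applied to $(U(\ccal)-\lambda)f$, the second is a formal integration by parts carrying $U(\ccal)-\lambda$ across to $U(\ccal)^*-\ol{\lambda}$, and the third is the right-inverse property above. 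Testing the resulting identity $\ispa{W_\lambda(U(\ccal)-\lambda)f,g}=\ispa{f,g}$ against the function equal to $v\in\mb{C}^2$ at a single site $z$ and $0$ elsewhere gives $(W_\lambda(U(\ccal)-\lambda)f)(z)=f(z)$ for every $z$, which is $\eqref{CQW3}$ as an identity in $\map(\mb{Z},\mb{C}^2)$.

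The one point requiring care — and the main obstacle — is the middle integration-by-parts step, because $W_\lambda^o g$ is generally not finitely supported, so the formal-adjoint identity $\ispa{U(\ccal)\varphi,\psi}=\ispa{\varphi,U(\ccal)^*\psi}$ cannot be invoked in its $C_{0}\times C_{0}$ form. The remedy is that here $\varphi=f\in C_{0}$ is compactly supported, so the relevant sum collapses to the finite set $\supp f$; a one-line index shift in $\eqref{CQW2}$ then yields $\ispa{U(\ccal)f,\psi}=\ispa{f,U(\ccal)^*\psi}$ for every $\psi\in\map(\mb{Z},\mb{C}^2)$, applied with $\psi=W_\lambda^o g$. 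The same compact-support observation legitimizes every interchange of summation above, so once this is in place the three displayed equalities hold verbatim and the proof is complete.
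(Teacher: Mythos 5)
Your proof is correct and takes essentially the same route as the paper, which disposes of the corollary in one line by noting that $W_{\lambda}$ is the formal adjoint of $W_{\lambda}^{o}$ and invoking Proposition \ref{SOLP}, (2) --- precisely your duality computation. The points you elaborate (the finite-double-sum adjoint relation, the validity of $\ispa{(U(\ccal)-\lambda)f,\psi}=\ispa{f,(U(\ccal)^{*}-\ol{\lambda})\psi}$ when only $f$ is finitely supported, and the pointwise upgrade by testing against $\delta_{z}\otimes v$) are exactly the details the paper leaves implicit.
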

Note that $W_{\lambda}$ defined in $\eqref{OPW}$ is the formal adjoint of $W_{\lambda}^{o}$. 
Corollary $\ref{WLAM2}$ follows from the definition of the formal adjoint and Proposition $\ref{SOLP}$, (2). 
The kernel function $w_{\lambda}$ is given explicitly by the following. 
\begin{equation}\label{Wker2}
\begin{split}
w_{\lambda}(0,y) & =
\begin{cases}
\lambda^{-1} \mr{z}_{R}(0) F_{\lambda^{*}}(y)^{*} & (y \geq 1) \\
\lambda^{-1} \mr{z}_{L}(0) F_{\lambda^{*}}(y)^{*} & (y \leq -1)
\end{cases}
\\
w_{\lambda}(x,y) & = 
\begin{cases}
\lambda^{-1} (\mr{z}_{R}(x)-\mr{z}_{L}(x)) F_{\lambda^{*}}(x)^{*-1} F_{\lambda^{*}}(y)^{*} 
& (1 \leq x \leq y-1) \\
-\lambda^{-1} \mr{z}_{L}(x) & (1 \leq x=y) \\
\lambda^{-1} (\mr{z}_{L}(x)-\mr{z}_{R}(x)) F_{\lambda^{*}}(x)^{*-1} F_{\lambda^{*}}(y)^{*} 
& (y+1 \leq x \leq -1) \\
-\lambda^{-1} \mr{z}_{R} (x) & (y =x \leq -1) \\
0 & (\text{{\rm otherwise}})
\end{cases}
\end{split}
\end{equation}
One of the most important properties of the operator $W_{\lambda}$ is the following. 
\begin{thm}\label{RESW}
For any $f \in C_{0}(\mb{Z},\mb{C}^{2})$, we have $W_{\lambda}f \in C_{0}(\mb{Z},\mb{C}^{2})$ and 
\begin{equation}\label{RESWe}
(U -\lambda) W_{\lambda}f =f -\delta_{0} \otimes \wh{f}^{\ccal}(\lambda), 
\end{equation}
where the QW-Fourier transform $\wh{f}^{\ccal}$ of $f$ is defined in $\eqref{QWF1}$. 
\end{thm}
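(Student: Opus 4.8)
The plan is to deduce the identity $\eqref{RESWe}$ from the left-inverse property already proved in Corollary $\ref{WLAM2}$ by a duality argument, exploiting that $W_{\lambda}$ is the formal adjoint of the solution operator $W_{\lambda}^{o}$ of Proposition $\ref{SOLP}$, (2). The motivating slogan, stated in the introduction, is that the QW-Fourier transform measures exactly the defect preventing the left inverse from being a right inverse, so the computation should reproduce $\eqref{QWF1}$ verbatim.

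First I would settle the finite-support claim, which is really a bookkeeping statement about the kernel $\eqref{Wker2}$. Reading off the cases there, $w_{\lambda}(x,y)$ vanishes unless $y \geq x$ when $x \geq 1$, unless $y \leq x$ when $x \leq -1$, while $w_{\lambda}(0,y)$ may be nonzero for every $y$. Hence if $\supp f \subseteq [-M,M]$, then $(W_{\lambda}f)(x)=\sum_{y} w_{\lambda}(x,y)f(y)$ can be nonzero only for $x \in [-M,M]$, so $W_{\lambda}f \in C_{0}(\mb{Z},\mb{C}^{2})$; consequently $(U-\lambda)W_{\lambda}f \in C_{0}(\mb{Z},\mb{C}^{2})$ as well, and it suffices to test $\eqref{RESWe}$ against arbitrary $g \in C_{0}(\mb{Z},\mb{C}^{2})$.

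For the identity itself, since $W_{\lambda}f \in C_{0}$ and $(U(\ccal)^{*}-\ol{\lambda})g \in C_{0}$, the formal-adjoint relation $\ispa{W_{\lambda}h_{1},h_{2}}=\ispa{h_{1},W_{\lambda}^{o}h_{2}}$ together with the definition of $U(\ccal)^{*}$ yields
\[
\ispa{(U-\lambda)W_{\lambda}f,g}=\ispa{W_{\lambda}f,(U(\ccal)^{*}-\ol{\lambda})g}=\ispa{f,W_{\lambda}^{o}(U(\ccal)^{*}-\ol{\lambda})g}.
\]
The crux is then to identify $W_{\lambda}^{o}(U(\ccal)^{*}-\ol{\lambda})g$. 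Setting $h:=(U(\ccal)^{*}-\ol{\lambda})g$, both $g$ and $\Phi_{\lambda^{*}}(g(0))+W_{\lambda}^{o}h$ solve $\eqref{IEE2}$ with right-hand side $h$ and initial value $g(0)$: indeed $W_{\lambda}^{o}h$ has vanishing value at $0$ by construction and $F_{\lambda^{*}}(0)=I$. By the uniqueness asserted in Proposition $\ref{SOLP}$, (2), the two solutions coincide, so
\[
W_{\lambda}^{o}(U(\ccal)^{*}-\ol{\lambda})g=g-\Phi_{\lambda^{*}}(g(0)).
\]

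Substituting this back and using $\Phi_{\lambda^{*}}(g(0))(x)=F_{\lambda^{*}}(x)g(0)$, I would compute
\[
\ispa{f,\Phi_{\lambda^{*}}(g(0))}=\sum_{x\in\mb{Z}}\ispa{F_{\lambda^{*}}(x)^{*}f(x),g(0)}_{\mb{C}^{2}}=\ispa{\wh{f}^{\ccal}(\lambda),g(0)}_{\mb{C}^{2}}=\ispa{\delta_{0}\otimes\wh{f}^{\ccal}(\lambda),g},
\]
the middle equality being exactly the definition $\eqref{QWF1}$ of the QW-Fourier transform. Combining the displays gives $\ispa{(U-\lambda)W_{\lambda}f,g}=\ispa{f-\delta_{0}\otimes\wh{f}^{\ccal}(\lambda),g}$ for every $g\in C_{0}$, and since both sides of $\eqref{RESWe}$ lie in $C_{0}$ this proves the claim. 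The only genuine obstacle is the identification step: one must recognize that applying $W_{\lambda}^{o}$ after $(U(\ccal)^{*}-\ol{\lambda})$ recovers $g$ up to the homogeneous solution pinned down by its initial datum $g(0)$, and that this homogeneous defect is precisely the piece which pairs with $f$ to produce $\wh{f}^{\ccal}(\lambda)$. A more computational alternative, paralleling the proof of Proposition $\ref{SOLP}$, would insert $\eqref{Wker2}$ directly into $(U-\lambda)W_{\lambda}f$ and check that the bracketed coefficient equals $\delta_{x,y}I$ away from the origin, with the residual boundary contribution at $x=0$ assembling into $\wh{f}^{\ccal}(\lambda)$; this route is valid but longer and less transparent, so I would favor the duality argument above.
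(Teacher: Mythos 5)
Your proof is correct, but it takes a genuinely different route from the paper's. The paper proves Theorem \ref{RESW} by brute force: it inserts the explicit kernel \eqref{Wker2} into $(U(\ccal)-\lambda)W_{\lambda}f(x)$ and verifies the identity case by case ($x\geq 2$, $x=\pm 1$, $x\leq -2$, and finally $x=0$, where the boundary terms assemble into $-\wh{f}^{\ccal}(\lambda)$), using the matrix identities of Lemma \ref{FMS} throughout. You instead run a duality argument: pair $(U-\lambda)W_{\lambda}f$ against an arbitrary $g\in C_{0}(\mb{Z},\mb{C}^{2})$, move the operators across the pairing using that $W_{\lambda}$ is the formal adjoint of $W_{\lambda}^{o}$, and then invoke the uniqueness part of Proposition \ref{SOLP}, (2) to get the key identity $W_{\lambda}^{o}(U(\ccal)^{*}-\ol{\lambda})g = g-\Phi_{\lambda^{*}}(g(0))$; pairing $f$ with the homogeneous piece $\Phi_{\lambda^{*}}(g(0))$ produces exactly $\ispa{\wh{f}^{\ccal}(\lambda),g(0)}_{\mb{C}^{2}}=\ispa{\delta_{0}\otimes\wh{f}^{\ccal}(\lambda),g}$. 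All the individual steps are sound: the support bookkeeping for \eqref{Wker2} is read correctly, the pairings are finite sums even though $W_{\lambda}^{o}h$ is not finitely supported (only the $C_{0}$ factor matters), the initial-value check $(W_{\lambda}^{o}h)(0)=0$, $F_{\lambda^{*}}(0)=I$ is exactly what uniqueness requires, and testing against all $g\in C_{0}$ (e.g.\ against $\delta_{x}\otimes u$) does pin down both sides pointwise. What your approach buys is conceptual economy and an explanation: it exposes the defect term as the homogeneous solution of the conjugate initial-value problem pinned by $g(0)$, which is precisely the ``defect of the left-inverse'' slogan from the introduction, and it recycles Proposition \ref{SOLP} instead of redoing kernel computations. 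What the paper's computation buys is independence from the uniqueness statement and a fully explicit verification at the level of the transfer-matrix identities, in the same style as the proof of Proposition \ref{SOLP} itself. One small presentational caveat: you frame the argument as deducing the theorem from Corollary \ref{WLAM2}, but what you actually use is the formal-adjoint relation together with Proposition \ref{SOLP}, (2); Corollary \ref{WLAM2} itself (the left-inverse identity) never enters.
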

\begin{proof}
For fixed $y \in \mb{Z}$ the function $w_{\lambda}(x,y)$ in $x$ can not vanish 
only when $x$ lies between $0$ and $y$, and hence $W_{\lambda}f \in C_{0}(\mb{Z},\mb{C}^{2})$ for $f \in C_{0}(\mb{Z},\mb{C}^{2})$. 
Let $f \in C_{0}(\mb{Z},\mb{C}^{2})$. For $x \geq 2$, $(U(\ccal)-\lambda)W_{\lambda}f(x)$ is given as 
\[
\begin{split}
U(\ccal) W_{\lambda} f (x) & = \pi_{L} \ccal(x+1) W_{\lambda} f(x+1) + \pi_{R} \ccal(x-1) W_{\lambda} f(x-1) \\
& = \sum_{y \geq x+1} \pi_{L}\ccal(x+1) w_{\lambda}(x+1,y) f(y) 
+ \sum_{y \geq x-1} \pi_{R} \ccal(x-1) w_{\lambda}(x-1,y) f(y) 
\end{split} 
\]
By Lemma $\ref{FMS}$, (4) and $\eqref{Wker2}$, we have 
\[
\pi_{L} \ccal(x+1)w_{\lambda}(x+1,x+1)=-\lambda^{-1} \frac{\triangle_{x+1}}{d_{x+1}} \pi_{L},\quad 
\pi_{R}\ccal(x-1)w_{\lambda}(x-1,x-1)=0. 
\]
Similarly, we see 
\[
\begin{split}
\pi_{R} \ccal(x-1) w_{\lambda}(x-1,y) & =\lambda^{-1} \frac{\triangle_{x-1}}{a_{x-1}} 
\pi_{R} F_{\lambda^{*}}(x-1)^{*-1} F_{\lambda^{*}}(y)^{*} \quad (y \geq x), \\
\pi_{L} \ccal(x+1) w_{\lambda}(x+1,y) & = -\lambda^{-1} \frac{\triangle_{x+1}}{d_{x+1}} 
\pi_{L} F_{\lambda^{*}}(x+1)^{*-1} F_{\lambda^{*}}(y)^{*} \quad (y \geq x+2). 
\end{split}
\]
From this we have for $x \geq 2$ 
\[
\begin{split}
U(\ccal) W_{\lambda}f (x) = & \lambda^{-1} \frac{\triangle_{x-1}}{a_{x-1}} \pi_{R} T_{\lambda^{*}} (x-1)^{*} f(x) \\
 + & \lambda^{-1} \left(
\frac{\triangle_{x-1}}{a_{x-1}} \pi_{R} T_{\lambda^{*}} (x-1)^{*} -
\frac{\triangle_{x+1}}{d_{x+1}} \pi_{L} T_{\lambda^{*}}(x)^{*-1}
\right) F_{\lambda^{*}}(x)^{*-1} \sum_{y \geq x+1} F_{\lambda^{*}}(y)^{*} f(y). 
\end{split}
\]
By taking the conjugate of (4) and (5) in Lemma \ref{FMS} 
and using (6) in the same lemma, we see
\[
\begin{gathered}
\lambda^{-1}\frac{\triangle_{x+1}}{d_{x+1}} \pi_{L} T_{\lambda^{*}}(x)^{*-1} =\frac{1}{a_{x}} \pi_{L} \ccal(x)=\mr{z}_{L}(x)^{*},\\
\lambda^{-1}\frac{\triangle_{x-1}}{a_{x-1}} \pi_{R} T_{\lambda^{*}}(x-1)^{*}=\frac{1}{d_{x}} \pi_{R} \ccal(x)=\mr{z}_{R}(x)^{*},  
\end{gathered}
\]
and $\mr{z}_{R}(x)^{*}-\mr{z}_{L}(x)^{*}=\mr{z}_{R}(x)-\mr{z}_{L}(x)$. These formulas show
\[
U(\ccal)W_{\lambda}f (x)=f(x) + \lambda W_{\lambda}f (x)  
\]
for $x \geq 2$. For other cases $x \leq -2$ and $x =\pm 1$, the assertion follows from similar computation. 
Finally, we consider the case $x=0$. By Lemma $\ref{FMS}$ again, we see 
\[
\begin{split}
\pi_{L}\ccal(1) w_{\lambda}(1,y) & = -z_{L}(0)^{*} F_{\lambda^{*}}(y)^{*} =(-I+\mr{z}_{R}(0))F_{\lambda^{*}}(y)^{*} 
=-F_{\lambda^{*}}(y)^{*}+\lambda w_{\lambda}(0,y) \quad (y \geq 1),\\
\pi_{R}\ccal(-1) w_{\lambda}(-1,y) & = -\mr{z}_{R}(0)^{*} F_{\lambda^{*}}(y)^{*} =(-I+\mr{z}_{L}(0)) F_{\lambda^{*}}(y)^{*}
=-F_{\lambda^{*}}(y)^{*}+\lambda w_{\lambda}(0,y) \quad (y \leq -1). 
\end{split}
\]
From this we conclude 
\[
\begin{split}
U(\ccal)W_{\lambda}f(0) & = \sum_{y \geq 1} \pi_{L} \ccal(1) w_{\lambda}(1,y) f(y) 
+ \sum_{y \leq -1} \pi_{R} \ccal(-1) w_{\lambda}(-1,y) f(y)  \\
& = -\sum_{y \neq 0} F_{\lambda^{*}}(y)^{*}f(y) +\sum_{y \neq 0} \lambda w_{\lambda}(0,y)f(y) \\
& = \lambda W_{\lambda}f(0) -\wh{f}^{\ccal}(\lambda) +f(0), 
\end{split}
\]
which shows the assertion. 
\end{proof}
Before proceeding the proof of Theorem $\ref{Green}$, we prove the equation $\eqref{QWF2}$ for 
$f \in C_{0}(\mb{Z},\mb{C}^{2})$ in Theorem $\ref{FTT}$. 
\par
\medskip
\noindent{\it Proof of $\eqref{QWF2}$.}\hspace{3pt}
We first note that $T_{\lambda}(y)^{-1}$ is given by 
\[
T_{\lambda}(y)^{-1}=
\begin{bmatrix}
{\dsp \lambda^{-1}a_{y+1} }&{\dsp \lambda^{-1}b_{y+1} }\\[7pt]
{\dsp -\lambda^{-1} \frac{c_{y}a_{y+1}}{d_{y}} }&{\dsp \frac{1}{d_{y}}(\lambda -\lambda^{-1}b_{y+1}c_{y})} 
\end{bmatrix}. 
\]
By using this it can be shown directly that 
\[
T_{\lambda^{*}}(y-1)^{*-1} \pi_{L} \ccal(y)=\lambda \ccal(y)^{*}\pi_{L} \ccal(y),\quad 
T_{\lambda^{*}}(y)^{*} \pi_{R} \ccal(y)=\lambda \ccal(y)^{*}\pi_{R} \ccal(y). 
\]
From this formulas, and (1) in Lemma $\ref{FMS}$, we have for $f \in C_{0}(\mb{Z},\mb{C}^{2})$, 
\[
\begin{split}
\fcal_{\ccal}[U(\ccal)f](\lambda) & = \sum_{y \in \mb{Z}} F_{\lambda^{*}}(y)(U(\ccal)f)(y) \\
& = \sum_{y \in \mb{Z}} F_{\lambda^{*}}(y)^{*} 
\left(
\pi_{L}\ccal(y+1)f(y+1) +\pi_{R}\ccal(y-1) f(y-1) 
\right) \\
& = \sum_{y \in \mb{Z}} F_{\lambda^{*}}(y)^{*} T_{\lambda^{*}}(y-1)^{*-1} \pi_{L}\ccal(y)f(y) 
+\sum_{y \in \mb{Z}} F_{\lambda^{*}}(y)^{*}T_{\lambda^{*}}(y)^{*} \pi_{R} \ccal(y)f(y) \\
& = \lambda \sum_{y \in \mb{Z}} F_{\lambda^{*}}(y)^{*} 
\left(
\ccal(y)^{*} \pi_{L} \ccal(y)+ \ccal(y)^{*} \pi_{R} \ccal(y)
\right) f(y) \\
& = \lambda \sum_{y \in \mb{Z}} F_{\lambda^{*}}(y)^{*}f(y)=\lambda \fcal_{\ccal}[f],  
\end{split}
\]
which shows $\eqref{QWF2}$. 
\hfill$\square$
\par
\medskip
\section{Green function and its properties} \label{GREENT}
As in the previous sections, we denote $U$ the restriction of $U(\ccal)$ to 
the Hilbert space $\ell^{2}(\mb{Z},\mb{C}^{2})$. The operator $U$ is unitary 
and hence its spectrum, denoted $\sigma(U)$, is a subset of a unit circle $S^{1}$. 
For $\lambda \in \mb{C} \setminus \sigma(U)$, we denote $R(\lambda)$ the resolvent $(U-\lambda)^{-1}$. 
The Green function $R_{\lambda} \in \map(\mb{Z}^{2},\mr{M}_{2}(\mb{C}))$ 
is then defined by the formula 
\begin{equation}\label{DGREEN}
R_{\lambda}(x,y)u=R(\lambda)(\delta_{y} \otimes u)(x)
\end{equation}
for $\lambda \in \mb{C} \setminus \sigma(U)$, $x,y \in \mb{Z}$, and $u \in \mb{C}^{2}$. 
For $\lambda \in \mb{C} \setminus \sigma(U)$, we set 
\begin{equation}\label{X0}
\mr{x}_{0}(\lambda)=R_{\lambda}(0,0). 
\end{equation}
The function $\mr{x}_{0} \colon \mb{C} \setminus \sigma(U) \to \mr{M}_{2}(\mb{C})$ is holomorphic, 
and since $R(0)=U^{*}$, we have $\mr{x}_{0}(0)=0$ by $\eqref{CQW2}$. 
For any $f \in C_{0}(\mb{Z},\mb{C}^{2})$, the difference $R(\lambda)f-V_{\lambda}f \in \map(\mb{Z},\mb{C}^{2})$, 
where $V_{\lambda}f$ is defined in $\eqref{VLAM}$, satisfies
\[
(U(\ccal)-\lambda) (R(\lambda)f-V_{\lambda}f) =0. 
\]
Therefore, by Theorem $\ref{TM1}$, there exists a unique vector $\mr{x}_{\lambda}(f) \in \mb{C}^{2}$ such that 
\begin{equation}\label{DX1}
R(\lambda)f(x) =V_{\lambda}f(x) +\Phi_{\lambda}(\mr{x}_{\lambda}(f))(x) \quad (x \in \mb{Z}).  
\end{equation}
Since $V_{\lambda}f(0)=0$, we have 
\[
\mr{x}_{\lambda}(f)=R(\lambda)f(0),   
\]
and $\mr{x}_{0}(\lambda)u=\mr{x}_{\lambda}(\delta_{0} \otimes u)$. 
To prove Theorem $\ref{Green}$, we need the following. 
\begin{lem}\label{REC1}
For any $x \in \mb{Z}$ and $\lambda \in \mb{C}^{\times}$ we have 
\[
F_{\lambda}(x)[\mr{z}_{L}(0)-\mr{z}_{R}(0)]F_{\lambda^{*}}(x)^{*}=[\mr{z}_{L}(x)-\mr{z}_{R}(x)]. 
\]
\end{lem}
\begin{proof}
This follows from (8) in Lemma $\ref{FMS}$ and the definition of $F_{\lambda}(x)$ in $\eqref{TRM2}$. 
\end{proof}
\noindent{\it Proof of Theorem $\ref{Green}$.}\hspace{3pt} 
Setting $f=\delta_{y} \otimes u$ with $y \in \mb{Z}$, $u \in \mb{C}^{2}$ in $\eqref{DX1}$ gives 
\begin{equation}\label{GR1}
\begin{split}
R_{\lambda}(x,y)u & =v_{\lambda}(x,y)u+F_{\lambda}(x)\mr{x}_{\lambda}(\delta_{y} \otimes u) \\
& = v_{\lambda}(x,y)u+F_{\lambda}(x)R_{\lambda}(0,y)u. 
\end{split}
\end{equation}
Applying $R(\lambda)$ to the equation $\eqref{RESWe}$ in Theorem $\ref{RESW}$, we have for $f \in C_{0}(\mb{Z},\mb{C}^{2})$, 
\[
R(\lambda)f(x)=W_{\lambda}f(x) +R(\lambda) (\delta_{0} \otimes \wh{f}^{\ccal}(\lambda))
=W_{\lambda}f(x) +R_{\lambda}(x,0)\wh{f}^{\ccal}(\lambda).
\]
We set $f=\delta_{y} \otimes u$ in the above. Noting $\fcal_{\ccal}[\delta_{y} \otimes u](\lambda)=F_{\lambda^{*}}(y)^{*}u$, 
we have 
\begin{equation}\label{GR2}
R_{\lambda}(x,y)u=w_{\lambda}(x,y)u +R_{\lambda}(x,0)F_{\lambda^{*}}(y)^{*}u.
\end{equation}
Setting $y=0$ in $\eqref{GR1}$ gives 
\[
R_{\lambda}(x,0)=v_{\lambda}(x,0) +F_{\lambda}(x) \mr{x}_{0}(\lambda), 
\]
and substituting this into $\eqref{GR2}$ yields 
\begin{equation}\label{GR3}
\begin{split}
R_{\lambda}(x,y) & = w_{\lambda}(x,y) +v_{\lambda}(x,0)F_{\lambda^{*}}(y)^{*} +F_{\lambda}(x) \mr{x}_{0}(\lambda) F_{\lambda^{*}}(y)^{*} \\
& = 
\begin{cases}
{\dsp w_{\lambda}(x,y)+F_{\lambda}(x)[\mr{x}_{0}(\lambda) +\lambda^{-1}\mr{z}_{L}(0)] F_{\lambda^{*}}(y)^{*}}& (x \geq 1) \\
{\dsp w_{\lambda}(x,y)+F_{\lambda}(x)[\mr{x}_{0}(\lambda) +\lambda^{-1}\mr{z}_{R}(0)] F_{\lambda^{*}}(y)^{*}}& (x \leq -1) \\
{\dsp w_{\lambda}(0,y)+\mr{x}_{0}(\lambda)F_{\lambda^{*}}(y)^{*}}& (x=0) 
\end{cases}
\end{split}
\end{equation}
For $x=0$, it follows from $\eqref{GR3}$ that
\begin{equation}\label{GG1}
R_{\lambda}(0,y) =
\begin{cases}
{\dsp [\mr{x}_{0}(\lambda)+\lambda^{-1}\mr{z}_{R}(0)]F_{\lambda^{*}}(y)^{*}}& (y \geq 1) \\
{\dsp [\mr{x}_{0}(\lambda)+\lambda^{-1}\mr{z}_{L}(0)]F_{\lambda^{*}}(y)^{*}}& (y \leq -1). 
\end{cases}
\end{equation}
From $\eqref{Wker2}$, we have $w_{\lambda}(x,y)=0$ for the two cases $(x \leq -1,\, y \geq x+1)$ and $(x \geq 1,\, y \leq x-1)$, 
and hence we have 
\begin{equation}\label{GG2}
R_{\lambda}(x,y) = 
\begin{cases}
{\dsp F_{\lambda}(x)[\mr{x}_{0}(\lambda) +\lambda^{-1}\mr{z}_{R}(0)] F_{\lambda^{*}}(y)^{*}}& (x \leq -1,\, y \geq x+1) \\
{\dsp F_{\lambda}(x)[\mr{x}_{0}(\lambda) +\lambda^{-1}\mr{z}_{L}(0)] F_{\lambda^{*}}(y)^{*}}& (x \geq 1,\, y \leq x-1) 
\end{cases}
\end{equation}
To consider the other cases, we note that, by Lemma $\ref{REC1}$, 
\begin{equation}\label{Wker3}
w_{\lambda}(x,y) = 
\begin{cases}
{\dsp \lambda^{-1} F_{\lambda}(x)[\mr{z}_{R}(0)-\mr{z}_{L}(0)]F_{\lambda^{*}}(y)^{*} }& (1 \leq x \leq y-1) \\
{\dsp \lambda^{-1} F_{\lambda}(x)[\mr{z}_{L}(0)-\mr{z}_{R}(0)]F_{\lambda^{*}}(y)^{*} }& (y+1 \leq x \leq -1)
\end{cases}
\end{equation}
Therefore, $\eqref{GR3}$ gives 
\begin{equation}\label{GG3}
R_{\lambda}(x,y) = 
\begin{cases}
{\dsp F_{\lambda}(x)[\mr{x}_{0}(\lambda) +\lambda^{-1}\mr{z}_{R}(0)] F_{\lambda^{*}}(y)^{*}}& (1 \leq x \leq y-1) \\
{\dsp F_{\lambda}(x)[\mr{x}_{0}(\lambda) +\lambda^{-1}\mr{z}_{L}(0)] F_{\lambda^{*}}(y)^{*}}& (y+1 \leq x \leq -1) 
\end{cases}
\end{equation}
From $\eqref{GG1}$, $\eqref{GG2}$ and $\eqref{GG3}$ we conclude $\eqref{GF1}$. 
For $x=y \geq 1$, $\eqref{Wker2}$ and Lemma $\ref{REC1}$ shows 
\[
\begin{split}
R_{\lambda}(x,x) & =-\lambda^{-1}\mr{z}_{L}(x)+F_{\lambda}(x)[\mr{x}_{0}(\lambda)+\lambda^{-1}\mr{z}_{L}(0)]F_{\lambda^{*}}(x)^{*} \\
& = -\lambda^{-1} \mr{z}_{L}(x)+F_{\lambda}(x)[\mr{x}_{0}(\lambda)+
\lambda^{-1}\mr{z}_{R}(0)+\lambda^{-1}(\mr{z}_{L}(0)-\mr{z}_{R}(0))]F_{\lambda^{*}}(x)^{*} \\
& = -\lambda^{-1} \mr{z}_{L}(x)+\lambda^{-1}(\mr{z}_{L}(x)-\mr{z}_{R}(x)) +
F_{\lambda}(x)[\mr{x}_{0}(\lambda)+\lambda^{-1}\mr{z}_{R}(0)]F_{\lambda^{*}}(x)^{*} \\
& = -\lambda^{-1} \mr{z}_{R}(x)+
F_{\lambda}(x)[\mr{x}_{0}(\lambda)+\lambda^{-1}\mr{z}_{R}(0)]F_{\lambda^{*}}(x)^{*}, 
\end{split}
\]
which shows $\eqref{GF2}$ for $x =y\geq 1$. The same computation works well also for $x=y \leq -1$,  
and this completes the proof. 
\hfill$\square$
\par
\medskip
The Green function $R_{\lambda}(x,y)$ is, as above, 
expressed in terms of the products $F_{\lambda}(x)$ of the transfer matrices $T_{\lambda}(x)$ and 
the special value $\mr{x}_{0}(\lambda)=R_{\lambda}(0,0)$. 
Therefore, we will face the computation of the matrix-valued function $\mr{x}_{0}(\lambda)$ 
when we apply results in this paper. 
Theorem $\ref{infinity}$ and its Theorem $\ref{RF1}$ gives us one of methods 
to compute $\mr{x}_{0}(\lambda)$ whose proof is given in the rest of this section. 
The following phenomenon is one of most important in the proof of Theorem $\ref{infinity}$. 
\begin{lem}\label{HSL}
Let $\lambda \in \mb{C}^{\times} \setminus S^{1}$. Then we have 
\begin{equation}\label{HS1}
\sum_{x \geq 1} \|F_{\lambda}(x)\|_{\mr{HS}}^{2}=+\infty, \quad 
\sum_{x \leq -1} \|F_{\lambda}(x)\|_{\mr{HS}}^{2}=+\infty.  
\end{equation}
\end{lem}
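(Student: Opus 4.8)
The plan is to bound each summand in \eqref{HS1} from below by a \emph{fixed} positive constant, so that divergence of the series is immediate; the mechanism is that $F_{\lambda}(x)$ cannot become small in Hilbert--Schmidt norm, because its determinant stays bounded away from $0$.

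First I would compute $\det T_{\lambda}(x)$ directly from \eqref{TRM1}. A short calculation gives
\[
\det T_{\lambda}(x)=\frac{d_{x}}{a_{x+1}},
\]
which is notably independent of $\lambda$. Hence, using \eqref{TRM2}, for $x\geq 1$
\[
\det F_{\lambda}(x)=\prod_{k=0}^{x-1}\det T_{\lambda}(k)=\frac{d_{0}d_{1}\cdots d_{x-1}}{a_{1}a_{2}\cdots a_{x}},
\]
and for $x\leq -1$ the analogous product of the $\det T_{\lambda}(k)^{-1}=a_{k+1}/d_{k}$. The second step is to telescope these products in modulus using unitarity of $\ccal(x)$: from $\ccal(x)^{*}=\ccal(x)^{-1}=\triangle_{x}^{-1}\left[\begin{smallmatrix} d_{x} & -b_{x}\\ -c_{x} & a_{x}\end{smallmatrix}\right]$ together with $|\triangle_{x}|=1$ one reads off $|a_{x}|=|d_{x}|$ for every $x$. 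Then the intermediate factors cancel, leaving
\[
|\det F_{\lambda}(x)|=\frac{|d_{0}|}{|a_{x}|}\ \ (x\geq 1),\qquad |\det F_{\lambda}(x)|=\frac{|a_{0}|}{|d_{x}|}\ \ (x\leq -1).
\]
Since the columns of a unitary matrix are unit vectors we have $|a_{x}|,|d_{x}|\leq 1$, while $a_{0},d_{0}\neq 0$ by \eqref{Ass1} and the remark following it; so both right-hand sides are bounded below by $c:=\min(|d_{0}|,|a_{0}|)>0$.

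The third step is the elementary inequality $\|M\|_{\mr{HS}}^{2}\geq 2|\det M|$ valid for every $2\times 2$ matrix $M$, which follows at once from $|\det M|=|ad-bc|\leq |a||d|+|b||c|\leq \frac12\|M\|_{\mr{HS}}^{2}$ (equivalently $\sigma_{1}^{2}+\sigma_{2}^{2}\geq 2\sigma_{1}\sigma_{2}$ in terms of singular values). Applying it to $M=F_{\lambda}(x)$ yields $\|F_{\lambda}(x)\|_{\mr{HS}}^{2}\geq 2c>0$ for all $x\neq 0$, so each series in \eqref{HS1} dominates $\sum 2c=+\infty$ and the lemma follows.

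I do not anticipate a real obstacle here; the only conceptual point is that the divergence is term-by-term rather than through exponential growth, and the argument in fact never uses the hypothesis $\lambda\notin S^{1}$. It is valid for every $\lambda\in\mb{C}^{\times}$, the restriction in the statement merely reflecting the context in which the lemma is applied to Theorem \ref{infinity}. (One could instead route the determinant bound through the preserved form $F_{\lambda}(x)[\mr{z}_{L}(0)-\mr{z}_{R}(0)]F_{\lambda^{*}}(x)^{*}=\mr{z}_{L}(x)-\mr{z}_{R}(x)$ of Lemma \ref{REC1}, but the direct determinant computation above is shorter.)
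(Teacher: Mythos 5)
Your proof is correct, and it takes a genuinely different route from the paper. The paper argues by contradiction: it assumes $D=\sum_{x\geq 1}\|F_{\lambda}(x)\|_{\mr{HS}}^{2}<\infty$, uses the intertwining identity $\ccal(x)F_{\lambda}(x)=\lambda\pi_{L}F_{\lambda}(x-1)+\lambda\pi_{R}F_{\lambda}(x+1)$ together with unitary invariance and the orthogonal splitting $\|A\|_{\mr{HS}}^{2}=\|\pi_{L}A\|_{\mr{HS}}^{2}+\|\pi_{R}A\|_{\mr{HS}}^{2}$, and after shifting indices arrives at $D=|\lambda|^{2}-1+|\lambda|^{2}D$, hence $D=-1$, a contradiction; this step divides by $1-|\lambda|^{2}$ and so genuinely needs $\lambda\notin S^{1}$. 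Your argument instead exploits the concrete $2\times 2$ structure: $\det T_{\lambda}(x)=d_{x}/a_{x+1}$ (a formula the paper itself records in the Appendix when comparing $T_{\lambda}$ with $S_{\lambda}$), the unitarity relation $|a_{x}|=|d_{x}|$ to telescope the product, the bound $|a_{x}|,|d_{x}|\leq 1$, and the singular-value inequality $\|M\|_{\mr{HS}}^{2}\geq 2|\det M|$, yielding the uniform term-wise bound $\|F_{\lambda}(x)\|_{\mr{HS}}^{2}\geq 2\min(|a_{0}|,|d_{0}|)>0$. This is both more elementary and strictly stronger: it proves divergence for every $\lambda\in\mb{C}^{\times}$, including $\lambda\in S^{1}$, and it shows the summands cannot even tend to zero, whereas the paper's sum-rule argument only rules out summability and only off the unit circle. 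What the paper's approach buys in exchange is that it uses nothing about $T_{\lambda}$ beyond the eigenvalue-equation identity (no determinant computation), so it is the kind of ``conservation law'' argument one would expect to survive in settings where explicit transfer-matrix formulas are unavailable; in the present setting, however, your proof is the sharper one, and it is fully consistent with the rest of the paper (for $\lambda\in S^{1}$ it does not contradict the possible existence of $\ell^{2}$ eigenfunctions, since a bounded-below determinant forces only one singular value, not both columns, to stay large).
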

The Hilbert-Schmidt norm of the $2 \times 2$ matrix 
\[
A=\begin{bmatrix}
a & b \\
c & d
\end{bmatrix}
\]
is given as $\|A\|_{\mr{HS}}^{2}=|a|^{2}+|b|^{2}+|c|^{2}+|d|^{2}$. Lemma $\ref{HSL}$ could be proved 
by a method in \cite{MSSSS}. See Appendix in the paper. But, we give here another proof. 
\begin{proof}
Let $\lambda \in \mb{C}^{\times} \setminus S^{1}$. We have the following formula.  
\begin{equation}\label{EFM}
\ccal(x) F_{\lambda}(x)=\lambda \pi_{L}F_{\lambda}(x-1) + \lambda \pi_{R}F_{\lambda}(x+1). 
\end{equation}
The formula $\eqref{EFM}$ is used to prove Theorem $\ref{TM1}$ and is proved by using (1), (2) in Lemma $\ref{FMS}$. 
Since the proof of assertions for two sums in $\eqref{HS1}$ is identical, we prove that the sum 
\[
D:=\sum_{x \geq 1} \|F_{\lambda}(x)\|_{\mr{HS}}^{2}
\]
is infinity. To prove this we assume $D<+\infty$ and deduce a contradiction. 
In general, if $C \in \mr{U}(2)$ then for any $A \in \mr{M}_{2}(\mb{C})$, we have 
\begin{equation}\label{HSA1}
\|CA\|_{\mr{HS}}=\|CA\|_{\mr{HS}},\quad 
\|A\|_{\mr{HS}}^{2}=\|\pi_{L}A\|_{\mr{HS}}^{2}+\|\pi_{R}\|_{\mr{HS}}^{2}. 
\end{equation}
The equation $\eqref{EFM}$ and the first of the identities $\eqref{HSA1}$ give 
\[
\|F_{\lambda}(x)\|_{\mr{HS}}^{2}=\|\ccal(x) F_{\lambda}(x)\|_{\mr{HS}}^{2} 
=|\lambda|^{2} \|\pi_{L} F_{\lambda}(x-1)\|_{\mr{HS}}^{2} +|\lambda|^{2} \|\pi_{R} F_{\lambda}(x+1)\|_{\mr{HS}}^{2}. 
\]
Therefore, we have 
\[
\begin{split}
D & = |\lambda|^{2} \sum_{x \geq 1} \|\pi_{L} F_{\lambda}(x-1)\|_{\mr{HS}}^{2} 
+\|\lambda|^{2} \sum_{x \geq 1}\|\pi_{R} F_{\lambda}(x+1)\|_{\mr{HS}}^{2} \\
& = |\lambda|^{2} \sum_{x \geq 0} \|\pi_{L} F_{\lambda}(x)\|_{\mr{HS}}^{2} 
+|\lambda|^{2} \sum_{x \geq 2} \|\pi_{R} F_{\lambda}(x)\|_{\mr{HS}}^{2} \\
& = |\lambda|^{2} \|\pi_{L} F_{\lambda}(0)\|_{\mr{HS}}^{2} -|\lambda|^{2} \|\pi_{R} F_{\lambda}(1)\|_{\mr{HS}}^{2} 
+|\lambda|^{2} \sum_{x \geq 1} \left(
\|\pi_{L} F_{\lambda}(x)\|_{\mr{HS}}^{2} +\|\pi_{R} F_{\lambda}(x)\|_{\mr{HS}}^{2}
\right) \\
& = |\lambda|^{2} \|\pi_{L} F_{\lambda}(0)\|_{\mr{HS}}^{2} -|\lambda|^{2} \|\pi_{R} F_{\lambda}(1)\|_{\mr{HS}}^{2} 
+|\lambda|^{2}D.
\end{split}
\]
By the facts $F_{\lambda}(0)=I$, $F_{\lambda}(1)=T_{\lambda}(0)$ and the definition of $T_{\lambda}(x)$ show 
\[
\|\pi_{L} F_{\lambda}(0) \|_{\mr{HS}}^{2} =1,\quad 
\|\pi_{R} F_{\lambda}(1) \|_{\mr{HS}}^{2} =|\lambda|^{-2} (|c_{0}|^{2} +|d_{0}|^{2})=|\lambda|^{-2}. 
\]
By substituting this into the above equation and by noting $|\lambda| \neq 1$ by the assumption 
we conclude  
\[
D=|\lambda|^{2} -1 +|\lambda|^{2}D, \quad \text{hence} \quad  D=-1, 
\]
which is a contradiction. 
\end{proof}
By Lemma $\ref{HSL}$ the subspace $A_{L}$ and $A_{R}$ in Theorem $\ref{infinity}$ 
has dimension less than or equal to $1$. The function $R(\lambda)(\delta_{0} \otimes u)$ is 
in $\ell^{2}(\mb{Z},\mb{C}^{2})$ for any $u \in \mb{C}^{2}$ and we have, by Theorem $\ref{Green}$, 
\[
\begin{split}
\|R(\lambda)(\delta_{0} \otimes u)\|^{2} & =\sum_{x \in \mb{Z}} \|R(\lambda)(\delta_{0} \otimes u)(x)\|_{\mb{C}^{2}}^{2} 
= \sum_{x \in \mb{Z}} \|R_{\lambda}(x,0)u\|_{\mb{C}^{2}}^{2} \\
& = \|\mr{x}_{0}(\lambda)u\|_{\mb{C}^{2}}^{2} 
+ \sum_{x \geq 1} \|F_{\lambda}(x) [\mr{x}_{0}(\lambda) +\lambda^{-1} \mr{z}_{L}(0) ] u\|_{\mb{C}^{2}}^{2} \\
& \hspace{20pt} 
+\sum_{x \leq -1} \|F_{\lambda}(x) [\mr{x}_{0}(\lambda) +\lambda^{-1} \mr{z}_{R}(0) ] u\|_{\mb{C}^{2}}^{2} <+\infty.
\end{split}
\]
Since $\dim A_{L} \leq 1$ and $\dim A_{R} \leq 1$, we see 
\[
\mr{rank}[\mr{x}_{0}(\lambda) +\lambda^{-1} \mr{z}_{L}(0) ] \leq 1,\quad 
\mr{rank}[\mr{x}_{0}(\lambda) +\lambda^{-1} \mr{z}_{R}(0) ] \leq 1. 
\]
Therefore, it is enough to prove the following lemma to conclude Theorem $\ref{infinity}$. 
\begin{lem}\label{RNK1}
We have 
\[
\mr{rank}[\mr{x}_{0}(\lambda) +\lambda^{-1} \mr{z}_{L}(0) ]=\mr{rank}[\mr{x}_{0}(\lambda) +\lambda^{-1} \mr{z}_{R}(0) ]=1
\]
\end{lem}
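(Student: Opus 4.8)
The plan is to exploit the rank-at-most-one bounds already derived above from Lemma~\ref{HSL}, together with the observation that the two matrices $\mr{x}_{0}(\lambda)+\lambda^{-1}\mr{z}_{L}(0)$ and $\mr{x}_{0}(\lambda)+\lambda^{-1}\mr{z}_{R}(0)$ differ by an \emph{invertible} matrix. Since each bracket has rank at most one, it suffices to show that neither of them is the zero matrix; both ranks are then forced to equal one.

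First I would subtract the two brackets, which cancels the common term $\mr{x}_{0}(\lambda)$ and yields
\[
[\mr{x}_{0}(\lambda)+\lambda^{-1}\mr{z}_{L}(0)]-[\mr{x}_{0}(\lambda)+\lambda^{-1}\mr{z}_{R}(0)]
=\lambda^{-1}(\mr{z}_{L}(0)-\mr{z}_{R}(0)).
\]
By the explicit formulas \eqref{ZLR1} one has $\mr{z}_{L}(0)-\mr{z}_{R}(0)=\left[\begin{smallmatrix}1 & b_{0}/a_{0}\\ -c_{0}/d_{0}& -1\end{smallmatrix}\right]$, whose determinant is $-1+b_{0}c_{0}/(a_{0}d_{0})=-\triangle_{0}/(a_{0}d_{0})$. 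Since $\ccal(0)$ is unitary we have $|\triangle_{0}|=1$, and by \eqref{Ass1} and the remark following it $a_{0},d_{0}\neq 0$; hence $\mr{z}_{L}(0)-\mr{z}_{R}(0)$ is invertible, of rank two. Multiplication by the nonzero scalar $\lambda^{-1}$ does not change the rank, so the difference of the two brackets has rank two.

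The contradiction is then immediate. If $\mr{x}_{0}(\lambda)+\lambda^{-1}\mr{z}_{L}(0)=0$, the displayed identity forces $\mr{x}_{0}(\lambda)+\lambda^{-1}\mr{z}_{R}(0)=-\lambda^{-1}(\mr{z}_{L}(0)-\mr{z}_{R}(0))$, which has rank two, contradicting $\mr{rank}[\mr{x}_{0}(\lambda)+\lambda^{-1}\mr{z}_{R}(0)]\leq 1$; the symmetric argument (solving for the $\mr{z}_{L}(0)$-bracket instead) rules out the vanishing of the other bracket. Thus neither bracket vanishes, and combined with the rank-at-most-one bounds both have rank exactly one. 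I expect no serious obstacle: the only substantive input is the invertibility of $\mr{z}_{L}(0)-\mr{z}_{R}(0)$, i.e. the determinant computation above, which rests solely on the unitarity of the coin matrix and the standing assumption \eqref{Ass1}.
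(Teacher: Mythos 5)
Your proof is correct, and it closes the argument by a slightly different mechanism than the paper does. Both proofs rest on the same two pillars: the rank-at-most-one bounds established just before the lemma (from Lemma~\ref{HSL} together with the square-summability of $R(\lambda)(\delta_{0}\otimes u)$), and the invertibility of $\mr{z}_{L}(0)-\mr{z}_{R}(0)$ --- which the paper merely asserts, whereas you supply the computation $\det(\mr{z}_{L}(0)-\mr{z}_{R}(0))=-\triangle_{0}/(a_{0}d_{0})\neq 0$, valid by unitarity and $\eqref{Ass1}$. The difference lies in where the contradiction is drawn. The paper assumes $\mr{x}_{0}(\lambda)+\lambda^{-1}\mr{z}_{L}(0)=0$, substitutes this into the finiteness of $\sum_{x\leq-1}\|F_{\lambda}(x)[\mr{x}_{0}(\lambda)+\lambda^{-1}\mr{z}_{R}(0)]u\|_{\mb{C}^{2}}^{2}$, and concludes that $\sum_{x\leq-1}\|F_{\lambda}(x)[\mr{z}_{L}(0)-\mr{z}_{R}(0)]u\|_{\mb{C}^{2}}^{2}<+\infty$ for \emph{every} $u$, which by invertibility contradicts Lemma~\ref{HSL} directly --- an analytic contradiction. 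You stay entirely inside linear algebra: if one bracket vanished, the other would equal $\pm\lambda^{-1}(\mr{z}_{L}(0)-\mr{z}_{R}(0))$, of rank two, violating its own rank-at-most-one bound. In substance these are the same proof factored at different points, since the rank bound you invoke for the second bracket encodes precisely the analytic information the paper re-extracts from Lemma~\ref{HSL}; but your packaging is a bit more economical (no return to the Hilbert--Schmidt divergence) and makes explicit that, once the two rank bounds are in hand, the lemma follows from pure matrix arithmetic.
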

\begin{proof}
Suppose $\mr{rank}[\mr{x}_{0}(\lambda) +\lambda^{-1} \mr{z}_{L}(0) ]=0$, which means 
$\mr{x}_{0}(\lambda)=-\lambda^{-1} \mr{z}_{L}(0)$. Then for any $u \in \mb{C}^{2}$ we see 
\[
\sum_{x \leq -1} \|F_{\lambda}(x) [\mr{z}_{L}(0) -\mr{z}_{R}(0)]u\|_{\mb{C}^{2}}^{2}<+\infty. 
\]
However the matrix $\mr{z}_{L}(0) -\mr{z}_{R}(0)$ is non-singular and it contradicts Lemma $\ref{HSL}$. 
\end{proof}
\medskip
\noindent{\it Proof of Theorem $\ref{RF1}$.}\hspace{3pt}
Let $u \in \mb{C}^{2}$. Then $w \in \mb{C}^{2}$ satisfies 
\[
\sum_{x \geq 1} \|F_{\lambda}(x) [w +\lambda^{-1} \mr{z}_{L}(0)u] \|_{\mb{C}^{2}}^{2}<+\infty, 
\quad \text{{\rm and}} \quad 
\sum_{x \leq -1} \|F_{\lambda}(x) [w +\lambda^{-1} \mr{z}_{R}(0)u] \|_{\mb{C}^{2}}^{2}<+\infty
\]
if and only if $w=\mr{x}_{0}(\lambda)u$ since $\mr{x}_{0}(\lambda)u$ is the unique vector such that 
\[
R(\lambda)(\delta_{0} \otimes u)=V_{\lambda}(\delta_{0} \otimes u) +\Phi_{\lambda}(\mr{x}_{0}(\lambda)u). 
\]
We use this characterization of $\mr{x}_{0}(\lambda)$. 
In the statement of Theorem $\ref{RF1}$ we remark that the unit vectors $\mr{v}_{+}(\lambda)$, $\mr{v}_{-}(\lambda)$ 
form a basis in $\mb{C}^{2}$. Indeed if these are related as $\mr{v}_{+}(\lambda)=c \mr{v}_{-}(\lambda)$ with $c \in S^{1}$, 
then $\Phi_{\lambda}(\mr{v}_{+}(\lambda)) \in \ell^{2}(\mb{Z},\mb{C}^{2})$ and 
this is a contradiction since we assume $\lambda \not\in S^{1}$. 
Hence the matrix
\[
\begin{bmatrix}
1 & - \ol{m(\lambda)}  \\
m(\lambda)  & -1
\end{bmatrix}
\]
is non-singular. 
We define $a_{L}, b_{L}, a_{R},b_{R} \in \mb{C}$ by 
\begin{equation}\label{CFF1}
\begin{split}
& 
\begin{bmatrix}
a_{L}(\lambda) \\
b_{L}(\lambda)
\end{bmatrix} 
=\lambda^{-1}
\begin{bmatrix}
1 & - \ol{m(\lambda)}  \\
m(\lambda)  & -1 
\end{bmatrix}^{-1}
\begin{bmatrix}
\ispa{\mr{z}_{L}(0)e_{L}, \mr{v}_{+}(\lambda)} \\
\ispa{\mr{z}_{L}(0)e_{L}, \mr{v}_{-}(\lambda)}
\end{bmatrix},
\\
& \begin{bmatrix}
a_{R}(\lambda) \\
b_{R}(\lambda)
\end{bmatrix} 
=-\lambda^{-1}
\begin{bmatrix}
1 & - \ol{m(\lambda)}  \\
m(\lambda)  & -1
\end{bmatrix}^{-1}
\begin{bmatrix}
\ispa{\mr{z}_{R}(0)e_{R}, \mr{v}_{+}(\lambda)} \\
\ispa{\mr{z}_{R}(0)e_{R}, \mr{v}_{-}(\lambda)}
\end{bmatrix}, 
\end{split}
\end{equation}
where $m(\lambda)=\ispa{\mr{v}_{+}(\lambda),\mr{v}_{-}(\lambda)}_{\mb{C}^{2}}$. 
Then we show 
\begin{equation}\label{VX0}
\begin{split}
\mr{x}_{0}(\lambda)e_{L} & =b_{L}(\lambda)\mr{v}_{-}(\lambda)= 
-\lambda^{-1}  \mr{z}_{L}(0)e_{L} +a_{L}(\lambda)\mr{v}_{+}(\lambda),\\
\mr{x}_{0}(\lambda)e_{R}& = a_{R}(\lambda) \mr{v}_{+}(\lambda) = 
-\lambda^{-1} \mr{z}_{R}(0)e_{R}+b_{R}(\lambda) \mr{v}_{-}(\lambda). 
\end{split}
\end{equation}
The definition $\eqref{CFF1}$ shows 
\[
b_{L}(\lambda) \mr{v}_{-}(\lambda) +\lambda^{-1} \mr{z}_{L}(0)e_{L} =a_{L}(\lambda) \mr{v}_{+}(\lambda), \quad 
a_{R}(\lambda) \mr{v}_{+}(\lambda) +\lambda^{-1} \mr{z}_{R}(0)e_{R}=b_{R}(\lambda) \mr{v}_{-}(\lambda). 
\]
Therefore, the vector $w =b_{L}(\lambda)\mr{v}_{-}(\lambda)$ satisfies 
\[
\sum_{x \geq 1}\|F_{\lambda}(x)[w+\lambda^{-1}\mr{z}_{L}(0)e_{L}]\|_{\mb{C}^{2}}^{2}<+\infty,\quad 
\sum_{x \leq -1}\|F_{\lambda}(x)[w+\lambda^{-1}\mr{z}_{R}(0)e_{L}\|_{\mb{C}^{2}}^{2}<+\infty
\]
since we have $\mr{z}_{R}(0)e_{L}=0$ and the vector $\mr{v}_{-}(\lambda)$ satisfies $\eqref{VPM1}$. 
Then the remark given above gives  $b_{L}(\lambda)\mr{v}_{-}(\lambda)=\mr{x}_{0}(\lambda)e_{L}$, 
which shows the first line of $\eqref{VX0}$. The second line in $\eqref{VX0}$ can be proved similarly. 
Now, we have 
\[
1-|m(\lambda)|^{2}=|\det[\mr{v}_{+}(\lambda) \, \mr{v}_{-}(\lambda)]|^{2}. 
\]
By using this and the definition of $b_{L}$ and $a_{R}$ gives 
\begin{equation}\label{BA}
\lambda b_{L}(\lambda)=-\frac{\ispa{\mr{z}_{L}(0)e_{L}, \mr{v}_{+}(\lambda)^{\perp}}_{\mb{C}^{2}}} 
{\ispa{\mr{v}_{-}(\lambda), \mr{v}_{+}(\lambda)^{\perp}}_{\mb{C}^{2}}},\quad 
\lambda a_{R}(\lambda) =-\frac{\ispa{\mr{z}_{R}(0)e_{R}, \mr{v}_{-}(\lambda)^{\perp}}_{\mb{C}^{2}}} 
{\ispa{\mr{v}_{+}(\lambda), \mr{v}_{-}(\lambda)^{\perp}}_{\mb{C}^{2}}}, 
\end{equation}
which completes the proof. 
\hfill$\square$
\section{Integral representation for the Green kernel}\label{INTEGRAL}
As is seen in the previous sections, the function $\mr{x}_{0} \colon \mb{C} \setminus \sigma(U) \to \mr{M}_{2}(\mb{C})$ 
introduced in $\eqref{X0}$ plays one of central roles in the series of our results. 
But since 
\begin{equation}\label{CONX}
\mr{x}_{0}(\lambda)^{*}=-\lambda^{*} (I+\lambda^{*} \mr{x}_{0}(\lambda^{*})) \quad 
(\lambda \in \mb{C}^{\times} \setminus \sigma(U)), 
\end{equation}
it does not seem that the real-part $[\mr{x}_{0}(\lambda) +\mr{x}_{0}(\lambda)^{*}]/2$ has 
nice properties as the m-Carath\'{e}odory functions have. 
The definition of m-Carath\'{e}odory functions is given in the statement of Lemma $\ref{mC1}$ below. 
Instead of using $\mr{x}_{0}$, we use the following function 
\begin{equation}\label{MX0}
\mr{x}(\lambda)=I+2\lambda \mr{x}_{0}(\lambda).  
\end{equation}
By $\eqref{CONX}$, we have 
\begin{equation}\label{BVXX}
\re \mr{x}(\lambda) :=\frac{1}{2} [\mr{x}(\lambda)+\mr{x}(\lambda)^{*}]=
\lambda \mr{x}_{0}(\lambda) - \lambda^{*} \mr{x}_{0}(\lambda^{*}), 
\end{equation}
which shows the second equality in $\eqref{xmat3}$. 
\begin{lem}\label{mC1}
The function $\mr{x} \colon \mb{C} \setminus \sigma(U) \to \mr{M}_{2}(\mb{C})$ is an m-Caratheodory 
function in the sense that it is a holomorphic function on the unit disc with positive real-part and $\mr{x}(0)=I$. 
\end{lem}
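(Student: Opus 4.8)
The plan is to verify the three defining properties of an m-Carath\'{e}odory function separately: the normalization $\mr{x}(0)=I$, holomorphy on the unit disc, and positivity of the real part. The first two are essentially immediate, while the third is the heart of the matter and is obtained from the spectral theorem together with a Poisson-kernel computation.

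First I would record the two easy points. Since $\mr{x}_{0}(0)=0$ (noted just after $\eqref{X0}$), the definition $\eqref{MX0}$ gives $\mr{x}(0)=I$ at once. Because $U$ is unitary, $\sigma(U)\subseteq S^{1}$, so every $\lambda$ with $|\lambda|<1$ lies in $\mb{C}\setminus\sigma(U)$; as $\mr{x}_{0}$ is holomorphic there, so is $\mr{x}(\lambda)=I+2\lambda\mr{x}_{0}(\lambda)$ on the open unit disc.

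The substance is the positivity of $\re\mr{x}(\lambda)$ for $|\lambda|<1$. I would start from the identity $\eqref{BVXX}$, namely $\re\mr{x}(\lambda)=\lambda\mr{x}_{0}(\lambda)-\lambda^{*}\mr{x}_{0}(\lambda^{*})$, and test it against a vector $u\in\mb{C}^{2}$. Writing $\psi=\delta_{0}\otimes u$, the definitions $\eqref{DGREEN}$, $\eqref{X0}$ give $\ispa{\mr{x}_{0}(\nu)u,u}_{\mb{C}^{2}}=\ispa{R(\nu)\psi,\psi}$ for any $\nu\in\mb{C}\setminus\sigma(U)$; note that for $|\lambda|<1$ both $\lambda$ and $\lambda^{*}=\ol{\lambda}^{-1}$ (the latter of modulus $>1$) avoid $\sigma(U)\subseteq S^{1}$, so $\mr{x}_{0}(\lambda^{*})$ is defined. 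Feeding in the spectral resolution $R(\nu)=\int_{S^{1}}(\zeta-\nu)^{-1}\,dE(\zeta)$ and setting $d\rho_{\psi}(\zeta)=d\ispa{E(\zeta)\psi,\psi}$, a positive measure of total mass $\|u\|_{\mb{C}^{2}}^{2}$, yields
\[
\ispa{\re\mr{x}(\lambda)u,u}_{\mb{C}^{2}}=\int_{S^{1}}\left(\frac{\lambda}{\zeta-\lambda}-\frac{\lambda^{*}}{\zeta-\lambda^{*}}\right)d\rho_{\psi}(\zeta).
\]

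Finally I would simplify the integrand to the Poisson kernel. Using $\lambda^{*}=\ol{\lambda}^{-1}$ one finds $-\lambda^{*}/(\zeta-\lambda^{*})=1/(1-\ol{\lambda}\zeta)$, and combining the two terms over the common denominator $(\zeta-\lambda)(1-\ol{\lambda}\zeta)$ the numerator collapses to $\zeta(1-|\lambda|^{2})$; for $|\zeta|=1$ the factorization $1-\ol{\lambda}\zeta=\zeta\,\ol{(\zeta-\lambda)}$ turns the denominator into $\zeta|\zeta-\lambda|^{2}$, so the integrand equals $(1-|\lambda|^{2})/|\zeta-\lambda|^{2}>0$. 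Hence $\ispa{\re\mr{x}(\lambda)u,u}_{\mb{C}^{2}}=\int_{S^{1}}(1-|\lambda|^{2})|\zeta-\lambda|^{-2}\,d\rho_{\psi}(\zeta)\geq 0$, and indeed $>0$ for $u\neq 0$, which gives the positive (definite) real part. The only place that demands care is this kernel identity together with the bookkeeping of $\lambda^{*}$ and the verification that $\ispa{\mr{x}_{0}(\nu)u,u}_{\mb{C}^{2}}$ is exactly the resolvent quadratic form $\ispa{R(\nu)\psi,\psi}$; everything else is a direct appeal to the spectral theorem.
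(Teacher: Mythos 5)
Your proof is correct, but it takes a genuinely different route from the paper's. The paper stays entirely at the level of resolvent algebra: it views $\mr{x}(\lambda)$ as the compression of the operator $K(\lambda)=(U-\lambda)^{-1}(U+\lambda)=I+2\lambda R(\lambda)$ of $\eqref{KO1}$ to vectors of the form $\delta_{0}\otimes u$, computes $K(\lambda)^{*}=(U^{*}+\ol{\lambda})(U^{*}-\ol{\lambda})^{-1}$, and then uses the algebraic identity $(U+\lambda)(U^{*}-\ol{\lambda})+(U-\lambda)(U^{*}+\ol{\lambda})=2(1-|\lambda|^{2})I$ to obtain $\re K(\lambda)=(1-|\lambda|^{2})(U-\lambda)^{-1}(U^{*}-\ol{\lambda})^{-1}$, which is manifestly a positive operator for $|\lambda|<1$; no spectral theory is invoked. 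You instead start from $\eqref{BVXX}$, feed in the spectral theorem for the unitary $U$, and reduce positivity to that of the Poisson kernel. Your computations check out: $-\lambda^{*}/(\zeta-\lambda^{*})=1/(1-\ol{\lambda}\zeta)$, the numerator over the common denominator collapses to $\zeta(1-|\lambda|^{2})$, and on $S^{1}$ one has $1-\ol{\lambda}\zeta=\zeta\,\ol{(\zeta-\lambda)}$, so the integrand is $(1-|\lambda|^{2})/|\zeta-\lambda|^{2}>0$; since $\rho_{\psi}(S^{1})=\|u\|_{\mb{C}^{2}}^{2}>0$ for $u\neq 0$, strict positivity follows. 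What the paper's method buys is self-containedness (only resolvent identities, which also yield the formula $\eqref{Hre}$ reused later); what yours buys is that it makes the Herglotz/Poisson representation $\eqref{BV2}$ visible already at this stage, and in fact identifies the measure $\Sigma$ as the compression $\ispa{dE(\zeta)\,\delta_{0}\otimes u,\delta_{0}\otimes v}$ of the spectral measure of $U$. One small point to tidy up: at $\lambda=0$ the decomposition $\re\mr{x}(\lambda)=\lambda\mr{x}_{0}(\lambda)-\lambda^{*}\mr{x}_{0}(\lambda^{*})$ degenerates, since $\lambda^{*}=\ol{\lambda}^{-1}$ is undefined there; but at that point positivity is immediate from $\mr{x}(0)=I$ (or by continuity), so nothing is lost.
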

\begin{proof}
It is obvious that $\mr{x}$ is holomorphich on the unit disc and $\mr{x}(0)=I$. We set 
\begin{equation}\label{KO1}
K(\lambda)=(U-\lambda)^{-1}(U+\lambda) \quad (\lambda \in \mb{C} \setminus \sigma(U)). 
\end{equation}
For $\lambda \in \mb{C} \setminus \sigma(U)$ and $u ,v \in \mb{C}^{2}$, we have 
\[
\ispa{\mr{x}_{0}(0)u,v}_{\mb{C}^{2}} =\ispa{R_{\lambda}(0,0)u,v}_{\mb{C}^{2}} 
=\ispa{R(\lambda)(\delta_{0} \otimes u),\delta_{0} \otimes v},  
\]
and hence 
\[
\ispa{\mr{x}(\lambda)u,v}_{\mb{C}^{2}}=\ispa{[I+2\lambda R(\lambda)] \delta_{0} \otimes u, \delta_{0} \otimes v}
=\ispa{K(\lambda) \delta_{0} \otimes u,\delta_{0} \otimes v}. 
\]
This shows that,  
\[
\begin{split}
\ispa{\mr{x}(\lambda)^{*} u,v}_{\mb{C}^{2}} & = \ispa{u,\mr{x}(\lambda)v}_{\mb{C}^{2}} 
=\ol{\ispa{\mr{x}(\lambda)v,u}_{\mb{C}^{2}}} = \ol{\ispa{K(\lambda) \delta_{0} \otimes v, \delta_{0} \otimes u}} 
= \ispa{\delta_{0} \otimes u, K(\lambda) \delta_{0} \otimes v} \\
& = \ispa{K(\lambda)^{*}(\delta_{0} \otimes u),\delta_{0} \otimes v} 
= \ispa{[(U^{*}+\ol{\lambda})(U^{*}-\ol{\lambda})^{-1}] \delta_{0} \otimes u,\delta_{0} \otimes v}
\end{split}
\]
Therefore, the real-part $\re \mr{x}(\lambda)$ of $\mr{x}(\lambda)$ satisfies 
\[
\begin{split}
\re{\ispa{\mr{x}(\lambda)u,u}_{\mb{C}^{2}}} & = \ispa{[\re \mr{x}(\lambda)]u,u}_{\mb{C}^{2}} 
= (1-|\lambda|^{2}) \ispa{(U-\lambda)^{-1}(U^{*}-\ol{\lambda})^{-1} \delta_{0} \otimes u,\delta_{0} \otimes u} \\
& = (1-|\lambda|^{2}) \|R(\lambda)(\delta_{0} \otimes u)\|^{2} >0
\end{split}
\]
if $|\lambda|<1$ and $u \neq 0$, which completes the proof. 
\end{proof}
Therefore, there exists a unique positive $2\times 2$-matrix-valued measure 
\begin{equation}\label{BV}
\Sigma=
\begin{bmatrix}
\mu_{L} & \ol{\alpha} \\
\alpha & \mu_{R}
\end{bmatrix}, 
\end{equation}
on $S^{1}$, where $\mu_{L}$ and $\mu_{R}$ are probability measures and 
$\alpha$ is a complex measure on $S^{1}$, such that we have the 
following Herglotz representation, 
\begin{equation}\label{BV2}
\mr{x}(\lambda)=\int_{S^{1}} \frac{\zeta +\lambda}{\zeta -\lambda}\,d\Sigma(\zeta),\quad 
\Sigma(S^{1}) =I. 
\end{equation}
We refer the readers to \cite{GZ}, \cite{KoM} for the proof of the above fact. 
The positive-matrix-valued measure $\Sigma$ is characterized by 
\begin{equation}\label{BV3}
\begin{split}
\int_{S^{1}} h(\zeta) \ispa{d\Sigma(\zeta)u,v}_{\mb{C}^{2}}  
& =\lim_{r \uparrow 1} \int_{S^{1}} h(\zeta) \ispa{[\re \mr{x}(r\zeta)] u,v}_{\mb{C}^{2}} \,\dbar \zeta 
\end{split}
\end{equation}
for any continuous function $h$ on $S^{1}$, 
where $\dbar \zeta$ denotes the Lebesgue measure on $S^{1}$ with the unit total mass. 
To prove Theorem $\ref{WTK}$, (1), we consider, for a fixed $f \in C_{0}(\mb{Z},\mb{C}^{2})$, 
the following function; 
\[
H_{f}(\lambda) =\ispa{K(\lambda)f,f} \quad (\lambda \in \mb{C}^{2}, \, |\lambda|<1). 
\]
The function $H_{f}$ is holomorphic on the unit disc and its real-part is 
\begin{equation}\label{Hre}
\re H_{f}(\lambda)=(1-|\lambda|^{2}) \|R(\lambda)f\|^{2},
\end{equation}
and hence $H_{f}$ is a scalar Caratheodory function with $H_{f}(0)=\|f\|^{2}$. 
Therefore, there exists a unique positive measure $\nu_{f}$ on $S^{1}$ such that 
\[
H_{f}(\lambda)=\int_{S^{1}} \frac{\zeta +\lambda}{\zeta -\lambda}\,d\nu_{f}(\zeta), 
\quad \nu_{f}(S^{1}) =\|f\|^{2}, 
\]
and the measure $\nu_{f}$ is characterized by 
\begin{equation}\label{CH1}
d\nu_{f}(\zeta)=\wlim_{r \uparrow 1} \re H_{f}(r\zeta)\,\dbar \zeta, 
\end{equation}
where the right-hand side of the above means the weak$^{*}$-limit of $\re H_{f}(r\zeta) \,\dbar \zeta$. 
In what follows we prove the formula $\eqref{GF3}$ with the matrix-valued measure appeared in $\eqref{BV}$ and $\eqref{BV2}$. 
We need the following. 
\begin{lem}\label{RE1}
We set $h_{f}(\lambda):=H_{f}(\lambda) 
-2\lambda \ispa{\mr{x}_{0}(\lambda)\wh{f}^{\ccal}(\lambda), \wh{f}^{\ccal}(\lambda^{*})}$. 
Then $h_{f}(\lambda)$ is holomorphic on $\mb{C}^{\times}$ and 
we have $\re h_{f}(\zeta)=\|\wh{f}^{\ccal}(\zeta)\|_{\mb{C}^{2}}^{2}$ for $\zeta \in S^{1}$. 
\end{lem}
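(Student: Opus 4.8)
The plan is to prove the two assertions separately: first that $h_f$ continues holomorphically across $\sigma(U)$ to all of $\mb{C}^{\times}$, and then that its real part on $S^{1}$ equals $\|\wh{f}^{\ccal}(\zeta)\|_{\mb{C}^{2}}^{2}$. For holomorphy I would rewrite $H_f$ so that the resolvent singularity is visibly cancelled by the subtracted term. Since $K(\lambda)=(U-\lambda)^{-1}(U+\lambda)=I+2\lambda R(\lambda)$ by $\eqref{KO1}$, one has $H_f(\lambda)=\|f\|^{2}+2\lambda\sum_{x,y}f(x)^{*}R_{\lambda}(x,y)f(y)$, whereas, using $(\lambda^{*})^{*}=\lambda$ in $\eqref{QWF1}$, the subtracted term is exactly $2\lambda\sum_{x,y}f(x)^{*}F_{\lambda}(x)\mr{x}_{0}(\lambda)F_{\lambda^{*}}(y)^{*}f(y)$. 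Hence $h_f(\lambda)=\|f\|^{2}+2\lambda\sum_{x,y}f(x)^{*}[R_{\lambda}(x,y)-F_{\lambda}(x)\mr{x}_{0}(\lambda)F_{\lambda^{*}}(y)^{*}]f(y)$, and by $\eqref{GR3}$ in the proof of Theorem $\ref{Green}$ the bracket equals $w_{\lambda}(x,y)+v_{\lambda}(x,0)F_{\lambda^{*}}(y)^{*}$. This is assembled only from $F_{\lambda}$, $F_{\lambda^{*}}^{*}$, $\mr{z}_{L}$, $\mr{z}_{R}$ and powers of $\lambda$ (see $\eqref{Vker}$, $\eqref{Wker2}$) and carries no resolvent; as $f\in C_{0}(\mb{Z},\mb{C}^{2})$ the sum is finite, so $h_f$ is holomorphic on $\mb{C}^{\times}$.

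For the real part the plan is to avoid boundary limits of $\re H_f$ (which would give nothing when $\sigma(U)=S^{1}$) and argue instead by a holomorphic identity. I introduce the reflected function $h_f^{\sharp}(\lambda):=\ol{h_f(\lambda^{*})}$, which is again holomorphic on $\mb{C}^{\times}$; since $\lambda^{*}=\lambda$ on $S^{1}$ we get $\re h_f(\zeta)=\frac{1}{2}(h_f(\zeta)+h_f^{\sharp}(\zeta))$ there. It therefore suffices to prove the identity $h_f(\lambda)+h_f^{\sharp}(\lambda)=2\,\wh{f}^{\ccal}(\lambda^{*})^{*}\wh{f}^{\ccal}(\lambda)$ on the open set $\mb{C}^{\times}\setminus S^{1}$, where every term is defined; the identity theorem then extends it to all of $\mb{C}^{\times}$, and restriction to $S^{1}$ gives $\re h_f(\zeta)=\wh{f}^{\ccal}(\zeta)^{*}\wh{f}^{\ccal}(\zeta)=\|\wh{f}^{\ccal}(\zeta)\|_{\mb{C}^{2}}^{2}$.

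The identity reduces to two ingredients. Substituting $\lambda\mapsto\lambda^{*}$ into the definition of $h_f$, conjugating, and collecting terms gives $h_f(\lambda)+h_f^{\sharp}(\lambda)=[H_f(\lambda)+\ol{H_f(\lambda^{*})}]-2\,\wh{f}^{\ccal}(\lambda^{*})^{*}[\lambda\mr{x}_{0}(\lambda)+\lambda^{-1}\mr{x}_{0}(\lambda^{*})^{*}]\wh{f}^{\ccal}(\lambda)$. The first ingredient is $H_f(\lambda)+\ol{H_f(\lambda^{*})}=0$: a short computation with $U^{*}=U^{-1}$ and $\ol{\lambda^{*}}=1/\lambda$ shows $K(\lambda^{*})^{*}=-K(\lambda)$, whence $\ol{H_f(\lambda^{*})}=\ispa{K(\lambda^{*})^{*}f,f}=-H_f(\lambda)$ (equivalently, this is the reflection identity satisfied by any Carath\'eodory function, and $H_f$ is one). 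The second is $\lambda\mr{x}_{0}(\lambda)+\lambda^{-1}\mr{x}_{0}(\lambda^{*})^{*}=-I$, obtained by putting $\lambda^{*}$ in place of $\lambda$ in $\eqref{CONX}$ and using $(\lambda^{*})^{*}=\lambda$. With these two facts the $H_f$-terms cancel and the bracket collapses to $-I$, leaving $2\,\wh{f}^{\ccal}(\lambda^{*})^{*}\wh{f}^{\ccal}(\lambda)$, as required.

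The main obstacle I anticipate is bookkeeping rather than conceptual: one must track carefully which slot of each inner product is conjugated, apply $(\lambda^{*})^{*}=\lambda$ and $\ol{\lambda^{*}}=1/\lambda$ consistently, and verify that $\wh{f}^{\ccal}(\lambda^{*})^{*}=\sum_{x}f(x)^{*}F_{\lambda}(x)$ and $\wh{f}^{\ccal}(\lambda)=\sum_{y}F_{\lambda^{*}}(y)^{*}f(y)$ are the genuinely holomorphic representatives whose product restricts on $S^{1}$ to $\|\wh{f}^{\ccal}(\zeta)\|_{\mb{C}^{2}}^{2}$. The one structural point that must not be skipped is routing the real-part computation through the holomorphic identity together with the identity theorem, so that the conclusion holds on the whole of $S^{1}$ and not merely on $S^{1}\setminus\sigma(U)$.
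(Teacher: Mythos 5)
Your proof is correct, and while your holomorphy step coincides with the paper's, your treatment of the real part takes a genuinely different route. For holomorphy, the paper's formula \eqref{hf1} is precisely your decomposition $h_f(\lambda)=\|f\|^2+2\lambda\sum_{x,y}f(x)^*[w_\lambda(x,y)+v_\lambda(x,0)F_{\lambda^*}(y)^*]f(y)$ written out with the explicit kernels, and both arguments rest on the finiteness of the sums and the holomorphy of $\lambda\mapsto F_{\lambda^*}(y)^*$. For the real part, the paper stays with the explicit kernels: it conjugates \eqref{hf1} and, using (6) of Lemma \ref{FMS} together with Lemma \ref{REC1}, rewrites $\ol{h_f(\lambda)}$ as \eqref{hf2}; adding the two expressions at $\lambda=\zeta\in S^1$ (where $\zeta^*=\zeta$) cancels the off-diagonal and diagonal sums and leaves $\|\wh{f}^{\ccal}(\zeta)\|_{\mb{C}^2}^2$. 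You instead argue at the operator level: the reflection $h_f^\sharp(\lambda)=\ol{h_f(\lambda^*)}$, the symmetry $K(\lambda^*)^*=-K(\lambda)$, and \eqref{CONX} recast as $\lambda\mr{x}_0(\lambda)+\lambda^{-1}\mr{x}_0(\lambda^*)^*=-I$ give the identity $h_f(\lambda)+h_f^\sharp(\lambda)=2\ispa{\wh{f}^{\ccal}(\lambda),\wh{f}^{\ccal}(\lambda^*)}_{\mb{C}^2}$ on $\mb{C}^\times\setminus S^1$, and the identity theorem transports it to $S^1$, where it reads $2\re h_f(\zeta)=2\|\wh{f}^{\ccal}(\zeta)\|_{\mb{C}^2}^2$; both ingredients check out ($K(\lambda^*)^*=-K(\lambda)$ follows from $U^*=U^{-1}$, and the $-I$ identity is \eqref{CONX} evaluated at $\lambda^*$). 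Your version is the more conceptual one: it isolates the two unitarity symmetries that force the cancellation and avoids the bookkeeping of the conjugated kernel formula \eqref{hf2}, whose sign subtleties (the flip of $\|f\|^2$ coming from $\mr{z}_L(x)^*=I-\mr{z}_R(x)$) the paper itself flags as the delicate point. What the paper's computation buys in exchange is self-containedness: it never invokes \eqref{CONX} (stated in the paper without proof) nor any analytic-continuation step, since all its identities are verified termwise in finite sums and therefore hold at every $\zeta\in S^1$ directly, including points of $\sigma(U)$. Your closing caution---that the off-circle identity must be applied to the holomorphic extension of $h_f$ constructed in the first step---is exactly what makes the identity-theorem step legitimate, so nothing is missing.
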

\begin{proof}
By the identity $K(\lambda)=I+2\lambda R(\lambda)$ and Theorem $\ref{Green}$, we have 
\begin{equation}\label{hf1}
\begin{split}
\frac{1}{2}h_{f}(\lambda) & =  
\sum_{x \in \mb{Z}} \sum_{y \leq x-1} \ispa{F_{\lambda}(x) \mr{z}_{L}(0) F_{\lambda^{*}}(y) f(y),f(x)}_{\mb{C}^{2}} \\
& \hspace{20pt} + \sum_{x \in \mb{Z}} \sum_{y \geq x+1}  
\ispa{F_{\lambda}(x) \mr{z}_{R}(0) F_{\lambda^{*}}(y)^{*} f(y), f(x)}_{\mb{C}^{2}} \\
& \hspace{20pt} + \sum_{x \in \mb{Z}} \ispa{F_{\lambda}(x) \mr{z}_{L}(0) F_{\lambda^{*}}(x)^{*}f(x),f(x)}_{\mb{C}^{2}} 
-\sum_{x \in \mb{Z}} \ispa{\mr{z}_{L}(x) f(x), f(x)}_{\mb{C}^{2}} +\frac{1}{2} \|f\|^{2}. 
\end{split}
\end{equation}
The sums are all finite sums because $f$ has a finite support. In the formula above, 
we note that $\lambda^{*}=\ol{\lambda}^{-1}$ itself is not holomorphic in $\lambda$, but 
the conjugate $F_{\lambda^{*}}(y)^{*}$ is holomorphic in $\lambda \in \mb{C}^{\times}$, 
and hence $h_{f}$ is also holomorphic on $\mb{C}^{\times}$. 
By taking the complex conjugate of the above formula and using the formula (6) in Lemma $\ref{FMS}$ 
and Lemma $\ref{REC1}$, we have 
\begin{equation}\label{hf2}
\begin{split}
\frac{1}{2}\ol{h_{f}(\lambda)} & = \ispa{\wh{f}^{\ccal}(\lambda^{*}), \wh{f}^{\ccal}(\lambda)}_{\mb{C}^{2}} -\frac{1}{2}\|f\|^{2}  \\
& \hspace{20pt} -  \sum_{x \in \mb{Z}} \sum_{y \geq x+1} \ispa{F_{\lambda^{*}}(x) \mr{z}_{R}(0) F_{\lambda}(y)^{*}f(y),f(x)}_{\mb{C}^{2}} \\
& \hspace{20pt} - \sum_{x \in \mb{Z}} \sum_{y \leq x-1} \ispa{F_{\lambda^{*}}(x) \mr{z}_{L}(0) F_{\lambda}(y)^{*} f(y), f(x)}_{\mb{C}^{2}} \\
& \hspace{20pt} -\sum_{x \in \mb{Z}} \ispa{F_{\lambda^{*}}(x) \mr{z}_{L}(0) F_{\lambda}(x)^{*} f(x), f(x)}_{\mb{C}^{2}} 
+\sum_{x \in \mb{Z}} \ispa{\mr{z}_{L}(x) f(x),f(x)}_{\mb{C}^{2}}. 
\end{split}
\end{equation}
We note that in the above, the sign of $\|f\|^{2}$ has changed form that in the formula for $h_{f}(\lambda)/2$. 
This is caused by the use of the formula $\mr{z}_{L}(x)^{*}=I-\mr{z}_{R}(x)$ in the last line of $\eqref{hf1}$. 
Now, we take $\lambda=\zeta \in S^{1}$. Then since $\zeta^{*}=\zeta$, the first two terms of $\eqref{hf1}$ and $\eqref{hf2}$ 
canceled after taking the sum of these two formulas. Therefore, we obtain the assertion. 
\end{proof}
\par
\medskip
\noindent{\it Proof of Theorem $\ref{WTK}$.}\hspace{3pt} 
We prove in the following that 
\begin{equation}\label{GFP}
d\nu_{f}(\zeta)=\ispa{d\Sigma(\zeta)\wh{f}^{\ccal}(\zeta), \wh{f}^{\ccal}(\zeta)}_{\mb{C}^{2}},  
\end{equation}
which means 
\begin{equation}\label{GFP2}
\ispa{K(\lambda)f,f}=\int_{S^{1}} \frac{\zeta +\lambda}{\zeta -\lambda} 
\ispa{d\Sigma(\zeta)\wh{f}^{\ccal}(\zeta), \wh{f}^{\ccal}(\zeta)}_{\mb{C}^{2}}. 
\end{equation}
Setting $\lambda=0$ in $\eqref{GFP2}$ shows 
\begin{equation}\label{GFP3}
\ispa{f,f}=\int_{S^{1}}\ispa{d\Sigma(\zeta)\wh{f}^{\ccal}(\zeta), \wh{f}^{\ccal}(\zeta)}_{\mb{C}^{2}}, 
\end{equation}
and hence the formula $\eqref{GF3}$ in Theorem $\ref{WTK}$ with $f=g$ follows. 
Then the polarization identity shows $\eqref{GF3}$. 
We note that the uniqueness in the statement (1) in Theorem $\ref{WTK}$ follows from 
the uniqueness of the positive-matrix-valued measure in $\eqref{BV2}$. Thus, it is enough to prove $\eqref{GFP}$.  
The function $H_{f}(\lambda)$ can be written as 
\[
H_{f}(\lambda) =\ispa{\mr{x}(\lambda)\wh{f}^{\ccal}(\lambda), \wh{f}^{\ccal}(\lambda^{*})}_{\mb{C}^{2}} 
-\ispa{\wh{f}^{\ccal}(\lambda), \wh{f}^{\ccal}(\lambda^{*}) }_{\mb{C}^{2}} +h_{f}(\lambda), 
\]
where $h_{f}$ is defined in Lemma $\ref{RE1}$. Therefore, by $\eqref{CH1}$, we have 
\begin{equation}\label{CH2}
d\nu_{f}(\zeta) = \wlim_{r \uparrow 1} \re \ispa{\mr{x}(r\zeta) \wh{f}^{\ccal} (r\zeta), \wh{f}^{\ccal}(r^{-1}\zeta)}_{\mb{C}^{2}} \,\dbar \zeta. 
\end{equation}
The matrix-valued function $\mr{x}(\lambda)$ could have singularity on $S^{1}$ 
and thus we perform computation on the right-hand side above a bit carefully as follows. 
For simplicity of notation we set 
\[
F_{L}(\lambda)=\ispa{\wh{f}^{\ccal}(\lambda),e_{L}}_{\mb{C}^{2}},\quad 
F_{R}(\lambda)=\ispa{\wh{f}^{\ccal}(\lambda),e_{R} }_{\mb{C}^{2}}. 
\]
Then $\re \ispa{\mr{x}(\lambda)\wh{f}^{\ccal}(\lambda), \wh{f}^{\ccal}(\lambda^{*}) }$ is written as 
\begin{equation}\label{CH4}
\begin{split}
2 \re \ispa{\mr{x}(\lambda)\wh{f}^{\ccal}(\lambda), \wh{f}^{\ccal}(\lambda^{*}) } 
& =  F_{L}(\lambda) \ol{F_{L}(\lambda^{*})} \ispa{\mr{x}(\lambda) e_{L},e_{L}}_{\mb{C}^{2}} 
    +F_{L}(\lambda^{*}) \ol{F_{L}(\lambda)} \ispa{\mr{x}(\lambda)^{*}e_{L},e_{L}}_{\mb{C}^{2}}  \\
   & \hspace{10pt} +F_{R}(\lambda) \ol{F_{L}(\lambda^{*})} \ispa{\mr{x}(\lambda)e_{R},e_{L}}_{\mb{C}^{2}} 
+F_{R}(\lambda^{*}) \ol{F_{L}(\lambda)} \ispa{\mr{x}(\lambda)^{*}e_{R},e_{L}}_{\mb{C}^{2}} \\
& \hspace{10pt} +F_{L}(\lambda) \ol{F_{R} (\lambda^{*})} \ispa{\mr{x}(\lambda)e_{L},e_{R}}_{\mb{C}^{2}}   
   + F_{L}(\lambda^{*})\ol{F_{R}(\lambda)} \ispa{\mr{x}(\lambda)^{*}e_{L},e_{R}}_{\mb{C}^{2}}  \\
& \hspace{10pt} + F_{R}(\lambda) \ol{F_{R}(\lambda^{*})} \ispa{\mr{x}(\lambda) e_{R}, e_{R}}_{\mb{C}^{2}} 
   + F_{R}(\lambda^{*}) \ol{F_{R}(\lambda)} \ispa{\mr{x}(\lambda)^{*}e_{R},e_{R}}_{\mb{C}^{2}}.     
\end{split}
\end{equation}
For example, we write $g(\lambda)$ for the first line of the above, that is, 
\[
g(\lambda)=F_{L}(\lambda) \ol{F_{L}(\lambda^{*})} \ispa{\mr{x}(\lambda) e_{L},e_{L}}_{\mb{C}^{2}} 
   +F_{L}(\lambda^{*}) \ol{F_{L} (\lambda)} \ispa{\mr{x}(\lambda)^{*}e_{L},e_{L}}_{\mb{C}^{2}}. 
\] 
The function $g(\lambda)$ is smooth on $\mb{C}^{\times} \setminus \sigma(U)$. 
For $0<r<1$ and $\zeta \in S^{1}$, we set 
\[
k(r,\zeta)=F_{L}(r^{-1}\zeta)-F_{L}(r\zeta)
\]
so that 
\begin{equation}\label{gA1} 
\begin{split}
g(r\zeta) & =F_{L}(r\zeta) \ol{k(r,\zeta)} \ispa{\mr{x}(r\zeta) e_{L},e_{L}}_{\mb{C}^{2}} +
\ol{F_{L}(r\zeta)} k(r,\zeta) \ispa{\mr{x}(r\zeta)^{*}e_{L},e_{L} }_{\mb{C}^{2}}  \\
& \hspace{20pt} + 2|F_{L}(r\zeta)|^{2} \ispa{ [\re \mr{x}(r\zeta)] e_{L},e_{L}}_{\mb{C}^{2}}.
\end{split}
\end{equation}
The last term in the above converges weakly to $2|F_{L}(\zeta)|^{2} \ispa{d\Sigma(\zeta)e_{L},e_{L}}$. 
Since $F_{L}(\lambda)$ is holomorphic on $\mb{C}^{\times}$, 
we have 
\[
|k(r,\zeta)| =|F_{L}(r^{-1}\zeta) -F_{L} (r\zeta)| \leq 
2\|F_{L}' \|_{L^{\infty}(A)} |1-r^{2}|, 
\]
where $A=\{\lambda \in \mb{C} \mid 1/2 \leq |\lambda| \leq 3/2\}$. 
Then for any continuous function $h(\zeta)$ on $S^{1}$ and $1/2 \leq r<1$, we have 
\[
\begin{split} 
& 
\left|
\int_{S^{1}} h(\zeta) F_{L}(r\zeta) \ol{k(r,\zeta)} \ispa{\mr{x}(r\zeta) e_{L},e_{L}}_{\mb{C}^{2}} \dbar \zeta 
\right| \\
& \leq C \|F_{L}' \|_{L^{\infty}(A)}  \|F_{L}\|_{L^{\infty}(S^{1})}   \|h\|_{L^{\infty}(S^{1})}   |1-r| 
\int_{S^{1}} \|\mr{x}(r\zeta) e_{L}\|_{\mb{C}^{2}} \dbar \zeta
\end{split}
\]
with a constant $C>0$. 
For any $u \in \mb{C}^{2}$, $\mr{x}_{0}(\lambda)u=R(\lambda)(\delta_{0} \otimes u)(0)$ and hence 
\[
\|\mr{x}_{0}(\lambda)u\|_{\mb{C}^{2}}^{2} = 
\|R(\lambda) (\delta_{0} \otimes u)(0) \|_{\mb{C}^{2}}^{2} \leq \|R(\lambda) (\delta_{0} \otimes u)\|^{2}
=(1-|\lambda|^{2})^{-1} \re H_{\delta_{0} \otimes u}(\lambda). 
\]
We note that $\mr{x}(\lambda) = I+2I \mr{x}_{0}(\lambda)$, and hence, for $|\lambda|<1$, 
\[
\|\mr{x}(\lambda)e_{L}\|_{\mb{C}^{2}} \leq 1+2\|\mr{x}_{0}(\lambda)e_{L}\|_{\mb{C}^{2}}
\leq 1+2(1-|\lambda|^{2})^{-1/2} [\re H_{\delta_{0} \otimes u}(\lambda)]^{1/2}. 
\]
By $\eqref{CH1}$, the integral 
\[
(1-r)(1-r^{2})^{-1/2} \int_{S^{1}} [\re H_{\delta_{0} \otimes u}(r\zeta)]^{1/2} \,\dbar \zeta 
\leq C(1-r)^{1/2} \left( \int_{S^{1}} \re H_{\delta_{0} \otimes u}(r\zeta) \,\dbar \zeta \right)^{1/2}
\]
tends to zero as $r \uparrow 1$. Therefore we obtain 
\[
\begin{split}
\lim_{r \uparrow 1}\int_{S^{1}} h(\zeta) g(r\zeta) \,\dbar \zeta & =
2\lim_{r \uparrow 1} \int_{S^{1}} h(\zeta) |F_{L}(r\zeta)|^{2} \ispa{[\re \mr{x}(r\zeta)] e_{L},e_{L}}_{\mb{C}^{2}} \,\dbar \zeta  \\
& =2\int_{S^{1}} h(\zeta) |F_{L}(\zeta)|^{2} 
\ispa{d\Sigma(\zeta)e_{L},e_{L}}_{\mb{C}^{2}}
\end{split}
\]
Similar computation works well for each line in $\eqref{CH4}$ and hence by $\eqref{CH2}$ we conclude
\begin{equation}\label{CH5}
\begin{split}
\int_{S^{1}} h(\zeta) d\nu_{f}(\zeta) & = 
\int_{S^{1}} h(\zeta) \left(|F_{L}(\zeta)|^{2} \ispa{d\Sigma(\zeta)e_{L},e_{L}}_{\mb{C}^{2}} 
+F_{R}(\zeta)\ol{F_{L}(\zeta)} \ispa{d\Sigma(\zeta)e_{R},e_{L}}_{\mb{C}^{2}} \right.\\
& \hspace{20pt} \left. +F_{L}(\zeta)\ol{F_{R}(\zeta)} \ispa{d\Sigma(\zeta)e_{L},e_{R}}_{\mb{C}^{2}} 
+|F_{R}(\zeta)|^{2} \ispa{d\Sigma(\zeta)e_{R},e_{R}}_{\mb{C}^{2}} \right) \\
& = \int_{S^{1}} h(\zeta) \ispa{d\Sigma(\zeta) \wh{f}^{\ccal} (\zeta), \wh{f}^{\ccal} (\zeta) }_{\mb{C}^{2}}, 
\end{split}
\end{equation}
which completes the proof of $\eqref{GFP}$ and hence Theorem $\ref{WTK}$, (1). 
The assertion (2) in Theorem $\ref{WTK}$ follows from $\eqref{GFP3}$. 
We remark that the equation $\eqref{CH5}$ and the formula $\eqref{QWF2}$ show 
\[
\ispa{h(U)f,g}=\int_{S^{1}} h(\zeta) \ispa{d\Sigma(\zeta) \wh{f}^{\ccal}(\zeta), \wh{g}^{\ccal}(\zeta)}_{\mb{C}^{2}}
\]
for any continuous function $h$ on $S^{1}$. Therefore, the uniqueness of the spectral measure shows 
\[
\ispa{E(A)f,g}=\int_{A} \ispa{d\Sigma(\zeta) \wh{f}^{\ccal}(\zeta), \wh{g}^{\ccal}(\zeta)}_{\mb{C}^{2}}
\]
for any Borel set $A$ in $S^{1}$. 
Taking $g=\delta_{x} \otimes u$ with $x \in \mb{Z}$ and $u \in \mb{C}^{2}$, and 
noting $\fcal_{\ccal}[\delta_{x} \otimes u](\lambda)=F_{\lambda^{*}}(x)^{*}u$, we have 
\[
\ispa{[E(A)f](x),u}_{\mb{C}^{2}}= \int_{A} \ispa{d\Sigma(\zeta) \wh{f}^{\ccal}(\zeta), F_{\zeta}(x)^{*}u}_{\mb{C}^{2}}
= \int_{A} \ispa{F_{\zeta}(x)d\Sigma(\zeta) \wh{f}^{\ccal}(\zeta), u}_{\mb{C}^{2}}. 
\]
From this we conclude $\eqref{Sk1}$. 
\hfill$\square$
\par
\medskip
\noindent{\it Proof of Theorem $\ref{FTT}$.} From $\eqref{PF1}$, it is enough to show that 
the subspace $\fcal_{\ccal}[C_{0}(\mb{Z},\mb{C}^{2})]$ is dense in $L^{2}(S^{1},\mb{C}^{2})_{\Sigma}$. 
Suppose that $k \in L^{2}(S^{1},\mb{C}^{2})_{\Sigma}$ is perpendicular 
to $\wh{f}^{\ccal}$ for any $f \in C_{0}(\mb{Z},\mb{C}^{2})$. For any $u \in \mb{C}^{2}$ and $n \in \mb{Z}$, 
we have, by $\eqref{QWF2}$,  $\wh{U^{n} (\delta_{0} \otimes u)}^{\ccal}(\lambda)=\lambda^{n}u$. 
Therefore, $k$ is perpendicular to any $\mb{C}^{2}$-valued Laurent polynomial. 
The set of all $\mb{C}^{2}$-valued Laurent polynomials is dense in the space $C(S^{1},\mb{C}^{2})$ 
of $\mb{C}^{2}$-valued continuous functions on $S^{1}$ with the norm 
$\dsp \|p\|_{\infty}=\sup_{\zeta \in S^{1}}\|p(\zeta)\|_{\mb{C}^{2}}$. Furthermore, 
the norm $\dsp \|p\|_{\Sigma}^{2}=\int_{S^{1}} \ispa{d\Sigma(\zeta)p(\zeta),p(\zeta)}_{\mb{C}^{2}}$ satisfies 
the inequality 
\[
\| p\|_{\Sigma}^{2} \leq 4\|p\|_{\infty}^{2}. 
\]
Let $f \in C(S^{1},\mb{C}^{2})$ and let $\{p_{n}\}$ be a sequence of $\mb{C}^{2}$-valued 
Laurent polynomials such that $\|f-p_{n}\|_{\infty} \to 0$ as $n \to \infty$. Then 
\[
|\ispa{k,f}_{\Sigma}| \leq |\ispa{k,p_{n}}_{\Sigma}| +|\ispa{k,f-p_{n}}_{\Sigma}| \leq \|k\|_{\Sigma} \|f-p_{n}\|_{\Sigma}
\leq  4\|k\|_{\Sigma} \|f-p_{n}\|_{\infty} \to 0 \ \ (n \to \infty), 
\]
which shows that $k$ is perpendicular to $C(S^{1},\mb{C}^{2})$ in $L^{2}(S^{1},\mb{C}^{2})_{\Sigma}$. 
Since $C(S^{1},\mb{C}^{2})$ is dense in $L^{2}(S^{1},\mb{C}^{2})_{\Sigma}$ we conclude $k=0$. 
Now since $C_{0}(\mb{Z},\mb{C}^{2})$ is dense in $\ell^{2}(\mb{Z},\mb{C}^{2})$, 
the equation $\eqref{QWF2}$ still holds for any $f \in \ell^{2}(\mb{Z},\mb{C}^{2})$. 
This completes the proof of Theorem $\ref{FTT}$. 
\hfill$\square$
\par
\medskip
\section{Examples}\label{EXS}
In the final section, we consider two examples, homogeneous quantum walks and 
certain quantum walks with non-constant coin matrix. The quantum walks with non-constant 
coin matrix considered here is a special case of the so-called two phase model discussed originally in \cite{EKST}. 
In these examples we compute concretely the positive-matrix-valued measure $\Sigma$ 
as a boundary value of the matrix-valued function $\re \mr{x}(\lambda)$. 
\par
\medskip
\noindent{\rm (1)\ {\it Homogeneous quantum walks}.}\hspace{3pt}
First of all, let us consider the fundamental example, namely the case where 
the coin matrix $\ccal$ is constant, say $\ccal(x)=C \in \mr{U}(2)$ for any $x \in \mb{Z}$. 
For simplicity we assume 
\[
C \in \mr{SU}(2)
\]
so that $C$ can be written as 
\[
C=
\begin{bmatrix}
\alpha & \beta \\
-\ol{\beta} & \ol{\alpha}
\end{bmatrix}=
\quad |\alpha|^{2}+|\beta|^{2}=1.  
\]
The transfer matrix $T_{\lambda}(x)$ does not depend on $x \in \mb{Z}$ and is denoted by $T_{C}(\lambda)$, 
which and whose inverse are given by 
\[
T_{C}(\lambda) =
\begin{bmatrix}
{\dsp \frac{1}{\alpha}(\lambda+\lambda^{-1}|\beta|^{2}) }&{\dsp  -\lambda^{-1}\frac{\beta \ol{\alpha}}{\alpha} }\\
{\dsp -\lambda^{-1}\ol{\beta} }&{\dsp \lambda^{-1}\ol{\alpha} }
\end{bmatrix}, 
\quad 
T_{C}(\lambda)^{-1} 
=
\begin{bmatrix}
{\dsp \lambda^{-1} \alpha }&{\dsp \lambda^{-1}\beta }\\
{\dsp \lambda^{-1} \frac{\alpha \ol{\beta}}{\ol{\alpha}} }
&{\dsp \frac{1}{\ol{\alpha}} (\lambda+\lambda^{-1}|\beta|^{2})}
\end{bmatrix}.
\]
The matrix $F_{\lambda}(x)$ is then given by $F_{\lambda}(x)=T_{C}(\lambda)^{x}$ for any $x \in \mb{Z}$. 

When $\beta=0$, we have $|\alpha|=1$ and the transfer matrix $T_{C}(\lambda)$ is a diagonal matrix,  
and $F_{\lambda}(x)$ is given by 
\begin{equation}
F_{\lambda}(x)=T_{C}(\lambda)^{x}=
\begin{bmatrix}
(\lambda/\alpha)^{x} & 0 \\[7pt]
0 & 1/(\alpha\lambda)^{x}
\end{bmatrix}. 
\end{equation}
In this case we can take
\[
\mr{v}_{+}(\lambda)=e_{L},\quad \mr{v}_{-}(\lambda)=e_{R}, 
\]
for the unit vectors in Theorem $\ref{RF1}$, which are the eigenvectors of $T_{C}(\lambda)$ with eigenvalues 
$z_{+}(\lambda)=\lambda/\alpha$, $z_{-}(\lambda)=1/(\alpha \lambda)$, respectively. 
We have $\mr{z}_{L}(0)e_{L}=e_{L}$, $\mr{z}_{R}(0)e_{R}=e_{R}$, 
and hence $\mr{x}_{0}(\lambda)=0$, $\mr{x}(\lambda)=I$ by $\eqref{xmat}$. 
Therefore, by using the characterization $\eqref{BV3}$ of $\Sigma$, 
it is concluded that the positive matrix-valued measure $\Sigma$ is given by $d\Sigma(\zeta)=I \dbar \zeta$, 
identity matrix times normalized Lebesgue measure $\dbar \zeta$. In this case the corresponding 
QW-Fourier transform basically coincides with the usual Fourier series expansion. 

Next we consider the case $\beta \neq 0$. Hence $|\alpha|<1$. 
The characteristic equation for $T_{C}(\lambda)$ is 
\begin{equation}\label{CHE1}
z^{2}-\frac{2}{\alpha} J(\lambda) z +\frac{\ol{\alpha}}{\alpha} =0,\quad J(\lambda)=\frac{\lambda+\lambda^{-1}}{2}. 
\end{equation}
It is easy to show that if $|\lambda|<1$ then $T_{C}(\lambda)$ does not have eigenvalues in $S^{1}$ and 
$T_{C}(\lambda)$ has an eigenvalue with multiplicity two if and only if $\lambda$ is one of the following four points
\[
|\alpha| +i |\beta|,\quad -|\alpha|+i|\beta|,\quad 
-|\alpha|-i|\beta|,\quad |\alpha|-i|\beta|. 
\]
Thus, for $\lambda$ satisfying $0<\lambda<1$, the eigenvalues of $T_{C}(\lambda)$ can be labeled as $z_{\pm}(\lambda)$ 
satisfying $|z_{+}(\lambda)|<1<|z_{-}(\lambda)|$ and $z_{\pm}(\lambda)$ are holomorphic in $\lambda$. 
Concretely $z_{+}(\lambda)$ is given by 
\begin{equation}\label{SOL11}
\alpha z_{+}(\lambda)=
\begin{cases}
{\dsp J(\lambda)-\sqrt{J(\lambda)^{2}-|\alpha|^{2}} }& (|\lambda|<1,\, \re \lambda>0) \ \text{or} \ (|\lambda|>1,\, \re \lambda>0) \\[7pt]
{\dsp J(\lambda)+\sqrt{J(\lambda)^{2}-|\alpha|^{2}} }& (|\lambda|<1,\, \re \lambda <0) \ \text{or} \ (|\lambda|>1,\, \re \lambda<0) \\[7pt] 
{\dsp J(\lambda)+i \sqrt{|\alpha|^{2}-J(\lambda)^{2}} }& (|\lambda|<1,\, \im \lambda>0) \ \text{or} \ (|\lambda|>1,\, \im \lambda<0) \\[7pt]
{\dsp J(\lambda)-i\sqrt{|\alpha|^{2}-J(\lambda)^{2}} }& (|\lambda|<1,\, \im \lambda<0) \ \text{or} \ (|\lambda|>1,\, \im \lambda>0)
\end{cases}
\end{equation}
For the case of the constant coin, it is rather easy to use usual Fourier series 
\[
\fcal \colon \ell^{2}(\mb{Z},\mb{C}^{2}) \to L^{2}(S^{1},\mb{C}^{2}),\quad 
\fcal[f](z)=\sum_{x \in \mb{Z}}f(x) z^{x}, 
\] 
to compute the matrix $\mr{x}_{0}(\lambda)$. Indeed, we have
\begin{equation}\label{RESC1}
R_{\lambda}(x,y)=\int_{S^{1}} z^{y-x} \left(
\wh{U}(z) -\lambda
\right)^{-1}\,\dbar z, 
\end{equation}
where $\dbar z=dz/(2\pi i z)$ as before and $\wh{U}(z)$ is the matrix-valued function given by 
\[
\wh{U}(z)=
\begin{bmatrix}
\alpha z^{-1} & \beta z^{-1} \\
-\ol{\beta} z & \ol{\alpha} z
\end{bmatrix}. 
\]
Performing the integral in $\eqref{RESC1}$ for $x=y=0$, we see 
\begin{equation}\label{COL}
\mr{x}_{0}(\lambda) =
\frac{1}{\lambda \alpha (z_{+}(\lambda) -z_{-}(\lambda))} 
\begin{bmatrix}
{\dsp \lambda -\alpha z_{+}(\lambda) }
&{\dsp \beta z_{+}(\lambda) }\\[10pt]
-{\dsp \frac{\ol{\beta}}{\ol{\alpha}} \alpha z_{+}(\lambda) } & 
{\dsp \lambda-\alpha z_{+}(\lambda) }
\end{bmatrix}. 
\end{equation}
The formula $\eqref{COL}$ can also be deduced by using Corollary $\ref{RF1}$, 
although it needs somehow complicated computation. 
Indeed, we can take the vectors $\mr{v}_{+}(\lambda)$, $\mr{v}_{-}(\lambda)$ as 
eigenvectors of $z_{\pm}(\lambda)$. Explicitly, we set 
\begin{equation}\label{EXV1}
\mr{v}_{+}(\lambda)=\frac{1}{\sqrt{|\lambda z_{+}(\lambda) -\ol{\alpha}|^{2}+|\beta|^{2}}} 
\begin{bmatrix}
\lambda z_{+}(\lambda) -\ol{\alpha} \\
-\ol{\beta}
\end{bmatrix},\quad 
\mr{v}_{-}(\lambda)=\frac{1}{\sqrt{|\lambda z_{-}(\lambda) -\ol{\alpha}|^{2}+|\beta|^{2}}} 
\begin{bmatrix}
\lambda z_{-}(\lambda) -\ol{\alpha} \\
-\ol{\beta}
\end{bmatrix}. 
\end{equation}
A direct computation is used to check that, when $\lambda$ is real and $0<\lambda<1$, 
the vectors $\mr{v}_{+}(\lambda)$ and $\mr{v}_{-}(\lambda)$ form an 
orthonormal basis in $\mb{C}^{2}$ and we have, by Theorem $\ref{RF1}$, 
or rather by the formula $\eqref{VX0}$ in the proof of Theorem $\ref{RF1}$, 
\begin{equation}\label{MT1}
\begin{split} 
\mr{x}_{0}(\lambda)e_{L} & =
-\frac{\lambda^{-1} (\ol{\lambda} \ol{z_{-}(\lambda)} -1/\ol{\alpha})}
{|\lambda z_{-}(\lambda) -\ol{\alpha}|^{2}+|\beta|^{2}}  
\begin{bmatrix}
\lambda z_{-}(\lambda) -\ol{\alpha} \\
-\ol{\beta}
\end{bmatrix} \\
\mr{x}_{0}(\lambda) e_{R} & = 
\frac{\alpha^{-1}\beta \lambda^{-1} \ol{\lambda} \ol{\mr{z_{+}(\lambda)}}}
{|\lambda z_{+}(\lambda) -\ol{\alpha}|^{2}+|\beta|^{2}} 
\begin{bmatrix}
\lambda z_{+}(\lambda) -\ol{\alpha} \\
-\ol{\beta}
\end{bmatrix}
\end{split}
\end{equation}
That the two formulas $\eqref{MT1}$ and $\eqref{COL}$ are identical for $0<\lambda<1$ 
can be verified by the formula 
\[
\begin{split}
 |\lambda z_{+}(\lambda) -\ol{\alpha}|^{2}+|\beta|^{2}
& = \frac{\lambda \alpha}{\ol{\alpha}} (z_{+}-z_{-}) (\lambda z_{+} -\ol{\alpha} ),  \\
|\lambda z_{-}(\lambda) -\ol{\alpha}|^{2}+|\beta|^{2} 
& = \frac{\lambda \alpha}{\ol{\alpha}} (z_{+}-z_{-}) (\ol{\alpha} - \lambda z_{-}  ),  
\end{split}
\]
which is valid for $0<\lambda<1$, and then one can use the analytic continuation for $0<\re \lambda <1$. 
We note that from the formula $\eqref{RESC1}$ the Green kernel must satisfy 
\[
R_{\lambda}(x,x)=R_{\lambda}(0,0)=\mr{x}_{0}(\lambda), 
\]
which is, according to $\eqref{GF2}$ in Theorem $\ref{Green}$, equivalent to 
\begin{equation}\label{GFG1}
T_{C}(\lambda) [\mr{x}_{0}(\lambda)+\mr{z}_{L}(0)] T_{C}(\lambda^{*})^{*}=[\mr{x}_{0}(\lambda)+\mr{z}_{L}(0)]
\end{equation}
because $\mr{z}_{L}(x)=\mr{z}_{L}(0)$. The equation $\eqref{GFG1}$ can be verified 
directly by $\eqref{COL}$ and the definition of $T_{C}(\lambda)$. 
A straightforward computation shows
\begin{equation}\label{MC1}
\mr{x}(\lambda) =I +2 \lambda \mr{x}_{0}(\lambda) 
=\frac{2}{\alpha (z_{+}(\mu) -z_{-}(\lambda))}
\begin{bmatrix}
{\dsp K(\lambda) }&{\dsp \frac{\beta}{\alpha} \alpha z_{+}(\lambda) } \\
{\dsp -\frac{ \ol{\beta}}{\ol{\alpha}} \alpha z_{+}(\lambda) }&{\dsp  K(\lambda) } 
\end{bmatrix}, \quad 
K(\lambda)=\frac{\lambda -\lambda^{-1}}{2}. 
\end{equation}
From the formula $\eqref{MC1}$ it is easy to show the following. 
\begin{thm}\label{CONT1}
The positive matrix-valued measure $d\Sigma(\zeta)$ is given by 
\[
d\Sigma(\zeta)=\frac{1}{\sqrt{|\alpha|^{2}-\re \zeta ^{2}}}
\begin{bmatrix}
{\dsp |\im (\zeta)| }&{\dsp - i \frac{\beta}{\alpha} \re \zeta }\\
{\dsp i \frac{\ol{\beta}}{\ol{\alpha}}  \re\zeta }&{\dsp  |\im (\zeta)| }
\end{bmatrix}
\chi_{\sigma(U)} \,\dbar \zeta, 
\]
where $\chi_{\sigma(U)}$ is the characteristic function of 
the spectrum $\sigma(U)=\{\zeta \in S^{1} \mid \re\zeta \leq |\alpha|\}$ of $U$. 
\end{thm}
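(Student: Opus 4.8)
The plan is to read off $d\Sigma$ directly from its characterization \eqref{xmat3}, namely as the weak-$*$ limit $\wlim_{r\uparrow1}\re\mr{x}(r\zeta)$ of the Hermitian part of the explicit m-Carath\'eodory function \eqref{MC1}. Since $\mr{x}(r\zeta)\to\mr{x}(\zeta)$ as $r\uparrow1$ from inside the disc, this limit is simply $\re$ of the \emph{inner} boundary value of \eqref{MC1}, so the whole problem reduces to the boundary behaviour on $S^{1}$ of the single scalar $z_{+}(\lambda)$. Setting $D:=\alpha z_{+}-J=\sqrt{J(\lambda)^{2}-|\alpha|^{2}}$ (so that $\alpha(z_{+}-z_{-})=2D$), the matrix \eqref{MC1} simplifies: the diagonal of $\mr{x}$ is $K/D$ and its off-diagonal entries are $\frac{\beta}{\alpha}(1+J/D)$ and $-\frac{\ol\beta}{\ol\alpha}(1+J/D)$, where on $S^{1}$ I would use the elementary identities $J(\zeta)=\re\zeta$ and $K(\zeta)=i\,\im\zeta$.

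The cut-off $\chi_{\sigma(U)}$ and the identification $\sigma(U)=\{\,|\re\zeta|\le|\alpha|\,\}$ come from the sign of the discriminant $J(\zeta)^{2}-|\alpha|^{2}=(\re\zeta)^{2}-|\alpha|^{2}$. When $|\re\zeta|>|\alpha|$ the radical $D$ is real; then $K/D$ is purely imaginary while $1+J/D$ is real, so every diagonal entry of $\mr{x}(\zeta)$ is imaginary and the two off-diagonal entries are negatives of one another's conjugate, whence $\re\mr{x}(\zeta)=\frac12(\mr{x}+\mr{x}^{*})=0$; this kills the contribution off the band. When $|\re\zeta|<|\alpha|$ the discriminant is negative and $D$ is purely imaginary, $D=i\,\epsilon\sqrt{|\alpha|^{2}-(\re\zeta)^{2}}$ with $\epsilon=\pm1$. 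I would also note that the only singularities of $\mr{x}$ on $S^{1}$ are the square-root branch points at the band edges $\re\zeta=\pm|\alpha|$, which are integrable and are not poles; hence $\Sigma$ carries no atom and is purely absolutely continuous, so it equals its density times $\dbar\zeta$.

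In the band, substituting $J=\re\zeta$, $K=i\,\im\zeta$ and $D=i\,\epsilon\,\sigma$ with $\sigma=\sqrt{|\alpha|^{2}-(\re\zeta)^{2}}$ turns the diagonal of $\re\mr{x}$ into $\re(K/D)=\epsilon\,\im\zeta/\sigma$ and the $(1,2)$-entry into $\frac{\beta}{\alpha}\,i\,\im(J/D)=-i\,\epsilon\,\frac{\beta}{\alpha}\,\re\zeta/\sigma$, with the conjugate expression in the $(2,1)$-slot. The one remaining point, and the main obstacle, is to fix the branch sign $\epsilon$ as $r\zeta\to\zeta$ from inside the disc: this must be read off from the case analysis \eqref{SOL11}, which distinguishes the upper and lower half-circles. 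Positivity of $\Sigma$, guaranteed by Lemma~\ref{mC1}, pins it down at once, since the diagonal density $\epsilon\,\im\zeta/\sigma$ must be non-negative, forcing $\epsilon=\sgn(\im\zeta)$ and hence the diagonal entry $|\im\zeta|/\sigma$ of the stated matrix; the same value of $\epsilon$ then governs the off-diagonal entries. Assembling the four entries together with the factor $\chi_{\sigma(U)}$ and the normalised measure $\dbar\zeta$ yields the asserted density, which I would finally cross-check against the independent Fourier-series evaluation \eqref{COL}--\eqref{MT1} of $\mr{x}_{0}$.
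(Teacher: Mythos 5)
Your route is the same as the paper's: the paper's entire ``proof'' of Theorem \ref{CONT1} is the remark that it follows from the formula \eqref{MC1}, i.e.\ precisely the boundary-value computation you carry out via \eqref{xmat3}/\eqref{BV3}. Your details are sound where the paper gives none: the identification $\alpha(z_{+}-z_{-})=2D$ with $D=\alpha z_{+}-J$, the vanishing of $\re\mr{x}(\zeta)$ off the band (there $K/D$ is purely imaginary and the off-diagonal entries satisfy $(\mr{x})_{21}=-\ol{(\mr{x})_{12}}$), the absence of atoms because the only boundary singularities are integrable square-root branch points at $\re\zeta=\pm|\alpha|$, and the determination $\epsilon=\sgn(\im\zeta)$ (positivity of the diagonal $\epsilon\,\im\zeta/\sigma$, consistent with \eqref{SOL11}).

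The one genuine problem is your final sentence: ``assembling the four entries \dots yields the asserted density'' is false as stated, and you should not gloss over it. What your computation actually produces for the $(1,2)$-entry is $-i\,\epsilon\,\frac{\beta}{\alpha}\,\re\zeta/\sigma$ with $\epsilon=\sgn(\im\zeta)$, and this sign does not cancel: on the lower half of the band your density has $(1,2)$-entry $+i\frac{\beta}{\alpha}\re\zeta/\sigma$, whereas the displayed formula in Theorem \ref{CONT1} has $-i\frac{\beta}{\alpha}\re\zeta/\sigma$ for all $\zeta$. Note also that positivity cannot arbitrate this, since flipping the sign of both off-diagonal entries leaves the matrix positive semidefinite (the determinant $(\im\zeta)^{2}-|\beta/\alpha|^{2}(\re\zeta)^{2}$ is unchanged); positivity pins only the diagonal, after which \eqref{SOL11} forces the off-diagonal sign. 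In fact your computation is the correct one and the printed statement is missing the factor $\sgn(\im\zeta)$: specializing the paper's own Theorem \ref{TP1} to the homogeneous case $C_{0}=C_{\pm}=C$ (so $s=|\beta|^{2}$, $t=0$, $\rho=|\alpha|$, $1-s=|\alpha|^{2}$) gives the band density $\frac{1}{\sqrt{|\alpha|^{2}-(\re\zeta)^{2}}}$ times the matrix with diagonal $|\im\zeta|$ and $(1,2)$-entry $-i\,\sgn(\im\zeta)\,\frac{\beta}{\alpha}\,\re\zeta$, i.e.\ exactly your formula. (Likewise $\re\zeta\le|\alpha|$ in the description of $\sigma(U)$ should read $|\re\zeta|\le|\alpha|$, as you implicitly assumed.) So the honest conclusion of your argument is not that the stated density is recovered, but that it holds only on the upper half-circle and requires the $\sgn(\im\zeta)$ correction on the lower one.
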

\medskip
\noindent{\rm (2)}\hspace{3pt} {\it Simplest Two-phase model.}\hspace{3pt}
Let $C_{0}$, $C_{\pm}$ be three $2 \times 2$ special unitary matrices, and we write them as 
\[
C_{0} =
\begin{bmatrix}
\alpha_{0} & \beta_{0} \\
-\ol{\beta}_{0} & \alpha_{0}
\end{bmatrix}, \quad 
C_{\pm} =
\begin{bmatrix}
\alpha_{\pm} & \beta_{\pm} \\
-\ol{\beta_{\pm}} & \alpha_{\pm}
\end{bmatrix},\quad 
|\alpha_{0}|^{2}+|\beta_{0}|^{2}=|\alpha_{\pm}|^{2}+|\beta_{\pm}|^{2}=1. 
\]
Let $\ccal \colon \mb{Z} \to \mr{U}(2)$ be a coin matrix defined as 
\[
\ccal(x)=C_{+} \quad (x \geq 1),\quad \ccal(0)=C_{0},\quad \ccal(x)=C_{-}\quad (x \leq -1). 
\]
The quantum walk $U(\ccal)$ defined by the coin matrix $\ccal$ is called a two-phase model with one defect. 
Since the computation is a bit complicated, 
we assume the following strong assumptions. 
\begin{equation}\label{AssA1}
\begin{gathered}
0<\rho:=|\alpha_{+}|=|\alpha_{-}|<1,\quad \beta:=\beta_{+}=\beta_{-}, \quad 
|\re (\beta_{0}\beta)| < |\beta|^{2}. 
\end{gathered}
\end{equation}
Therefore, $C_{+}$ and $C_{-}$ differs only by the phase of the diagonals. 
We note that the first two assumptions are for simplifying the computation, 
but the last assumption is imposed for $U$ to have eigenvalues. 
Eigenvalues of much general two-phase models are studied in \cite{KS}. 
In this case the transfer matrix $T_{\lambda}(x)$ satisfies 
\[
T_{\lambda}(x)=
\begin{cases}
T_{C_{+}}(\lambda) &  (x \geq 1), \\
T_{C_{-}}(\lambda) & (x \leq -2), 
\end{cases}
\]
and the matrix-valued function $F_{\lambda}(x)$ satisfies 
\begin{equation}\label{FM11}
F_{\lambda}(x)=
\begin{cases}
T_{C_{+}}(\lambda)^{x-1} T_{\lambda}(0) & (x \geq 1) \\
T_{C_{-}}(\lambda)^{x+1} T_{\lambda}(-1)^{-1} & (x \leq -1)
\end{cases}
\end{equation}
We denote $z_{\pm}(C_{\pm}, \lambda)$ the eigenvalues of $T_{C_{\pm}}(\lambda)$ satisfying
\begin{equation}\label{ORD1}
|z_{+}(C_{\pm}, \lambda)|<1<|z_{-}(C_{\pm}, \lambda)|,\quad \lambda \in \mb{C}^{\times} \setminus S^{1}, 
\end{equation}
and we define vectors $w_{\pm}(C_{\pm}, \lambda)$ by
\begin{equation}\label{2EG}
w_{\pm}(C_{+}, \lambda) =
\begin{bmatrix}
{\dsp \ol{\alpha_{0}} \left( \lambda  z_{\pm}(C_{+}, \lambda) -\ol{\alpha_{+}} \right)}\\
- \ol{\alpha_{0}} \ol{\beta}
\end{bmatrix},\quad 
w_{\pm}(C_{-}, \lambda) =
\begin{bmatrix}
{\dsp \alpha_{0} \alpha_{-}  \left( z_{\pm}(C_{-}, \lambda) -\ol{\alpha_{-}} \lambda^{-1} \right) }\\
- \alpha_{0} \alpha_{-} \ol{\beta} \lambda^{-1}
\end{bmatrix}. 
\end{equation}
The vectors $w_{\pm}(C_{\pm}, \lambda)$ are eigenvectors of $T_{C_{\pm}}(\lambda)$ 
with the eigenvalues $z_{\pm}(C_{\pm}, \lambda)$, respectively. 
By the equations $\eqref{FM11}$, $\eqref{ORD1}$ and the analysis for the case of 
homogeneous quantum walks with a constant coin matrix, 
the unit vectors $\mr{v}_{+}(\lambda)$, $\mr{v}_{-}(\lambda)$ in Theorem $\ref{RF1}$ can be chosen as 
\begin{equation}\label{VPM22}
\begin{split}
\mr{v}_{+}(\lambda) & =\frac{1}{\|T_{\lambda}(0)^{-1}w_{+}(C_{+}, \lambda)\|_{\mb{C}^{2}}} 
T_{\lambda}(0)^{-1}w_{+}(C_{+}, \lambda), \\
\mr{v}_{-}(\lambda) & =\frac{1}{\|T_{\lambda}(-1) w_{-}(C_{-}, \lambda)\|_{\mb{C}^{2}} }
T_{\lambda}(-1) w_{-}(C_{-}, \lambda). 
\end{split}
\end{equation}
These vectors satisfy
\begin{equation}\label{EFA1}
F_{\lambda}(x) \mr{v}_{+}(\lambda) =z_{+}(C_{+}, \lambda)^{x-1} T_{\lambda}(0) \mr{v}_{+}(\lambda),\quad 
F_{\lambda}(x) \mr{v}_{-}(\lambda) =z_{-}(C_{-}, \lambda)^{x+1} T_{\lambda}(-1)^{-1} \mr{v}_{-}(\lambda)
\end{equation}
We set 
\begin{equation}\label{esp1}
S=\{\zeta \in S^{1} \mid |\re(\zeta)| \leq \rho \}. 
\end{equation}
Then, for $\lambda \in \mb{C}^{\times}$, we have 
$|z_{\pm}(C_{\pm},\lambda)|=1$ if and only if $\lambda \in S$. 
This means that the absolute value of all of the eigenvalues of $T_{\lambda}(x)$ is one. 
Therefore, it is not hard to conclude, with Theorem $\ref{TM1}$ that there are no eigenvalues of $U$ on $S$. 
The concrete form of the vectors $\mr{v}_{\pm}(\lambda)$ are given by  
\[
\mr{v}_{+}(\lambda) =\frac{1}{D_{+}(\lambda)} 
\begin{bmatrix}
{\dsp \ol{\alpha_{0}} \left(Z_{+}(\lambda) -\lambda^{-1}\right) }\\[7pt]
{\dsp \ol{\beta_{0}} \left(Z_{+}(\lambda) -\lambda^{-1} \right) 
-\ol{\beta} \lambda  }
\end{bmatrix}, \quad 
\mr{v}_{-}(\lambda) =\frac{1}{D_{-}(\lambda)} 
\begin{bmatrix}
{\dsp \left(
\lambda +\beta_{0} \ol{\beta} \lambda^{-1} \right) Z_{-}(\lambda) -\rho^{2} }\\[7pt]
{\dsp 
- \alpha_{0} \ol{\beta} \lambda^{-1} Z_{-}(\lambda)
 } 
\end{bmatrix}, 
\]
where $D_{\pm}(\lambda)$ are normalization terms and $Z_{\pm}(\lambda)$ are given by 
\[
Z_{+}(\lambda)=\alpha_{+} z_{+}(C_{+},\lambda),\quad Z_{-}(\lambda) =\alpha_{-} z_{-}(C_{-},\lambda). 
\] 
By the assumption $\eqref{AssA1}$, $z=Z_{\pm}(\lambda)$ are the solutions to the equation 
\[
z^{2}-2J(\lambda) z +\rho^{2}=0. 
\]
The denominators $\ispa{\mr{v}_{+}(\lambda), \mr{v}_{-}(\lambda)^{\perp}}_{\mb{C}^{2}}$ and 
$\ispa{\mr{v}_{-}(\lambda), \mr{v}_{+}(\lambda)^{\perp}}_{\mb{C}^{2}}$ in $\eqref{VX000}$ do not vanish if $|\lambda| \neq 1$ 
because when $\lambda \in \mb{C}^{\times} \setminus S^{1}$ the two unit vectors $v_{\pm}(\lambda)$ are linearly independent. 
We set 
\[
s=\re(\beta_{0}\ol{\beta}),\quad t=\im (\beta_{0} \ol{\beta}). 
\]
Then the denominators are given by 
\[
\ispa{\mr{v}_{+}(\lambda), \mr{v}_{-}(\lambda)^{\perp}}_{\mb{C}^{2}} 
=-\ispa{\mr{v}_{-}(\lambda), \mr{v}_{+}(\lambda)^{\perp}}_{\mb{C}^{2}} 
=\frac{2\ol{\beta} }{D_{+}(\lambda)D_{-}(\lambda)} 
Z_{-}(\lambda)\left(
1-s -J(\lambda) Z_{-}(\lambda)
\right). 
\]
Thus we have to consider the zeros of 
\begin{equation}\label{DN1}
f(\zeta)=1-s -J(\zeta) Z_{-}(\zeta_{(\pm)})
\end{equation}
as a function in $\zeta \in S^{1}$, where, in what follows, we sometimes write 
\[
F(\zeta_{(-)})=\lim_{r \uparrow 1}f(r\zeta),\quad 
F(\zeta_{(+)})=\lim_{r \uparrow 1}f(r^{-1}\zeta) \quad (\zeta \in S^{1}) 
\]
for a function $F$ on $\mb{C} \setminus S^{1}$, if the limit exists. 
We set $\zeta=z+iy \in S^{1}$. For $\zeta \in S$, we see $Z_{-}=x+iw$ with $w=\pm \sqrt{\rho^{2}-x^{2}}$ and 
thus $f(\zeta)=1-s-x^{2}-ixw$ which is zero if and only if $x=\pm \rho$ and $\rho^{2}=1-s$. 
When $\rho^{2}=1-s$, the singularity coming from $f(\zeta)$ is $1/\sqrt{\rho^{2}-x^{2}}$ as will be seen below. 
But this is integrable on the interval $[-\rho,\rho]$. Therefore, we take the pointwise limit 
of $\re \mr{x}(r\zeta)$ to compute the matrix-valued measure $d\Sigma(\zeta)|_{S}$ restricted to $S$. 
When $\zeta \not\in S$, we see $Z_{-}=x +w$ with $w=\pm \sqrt{x^{2}-\rho^{2}}$. 
In this case $f(\zeta)$ vanishes if and only if $\zeta$ is the following four points, 
\begin{equation}\label{EVS}
\zeta_{*}=e^{i\theta_{*}}:=x_{*} +i\sqrt{1-x_{*}^{2}},\quad -\ol{\zeta_{*}},\quad 
-\zeta_{*},\quad \ol{\zeta_{*}},  
\end{equation}
where a positive real number $x_{*}$ is given by 
\[
x_{*}=\frac{1-s}{\sqrt{2-2s -\rho^{2}}}. 
\]
It is well-known (see \cite{KS}) that these points are actually eigenvalues of $U$, 
which will also be shown as a point mass of $\Sigma$ later. 
We have 
\[
\begin{split}
\ispa{\mr{z}_{L}(0)e_{L},\mr{v}_{+}(\lambda)^{\perp}}_{\mb{C}^{2}} 
& =\frac{\ol{\beta} \lambda}{D_{+}},\\ 
\ispa{\mr{z}_{R}(0)e_{R},\mr{v}_{-}(\lambda)^{\perp}}_{\mb{C}^{2}} 
& =\frac{1}{D_{-}} (\lambda Z_{-} -\rho^{2}) =
\frac{Z_{-}}{D_{-}} (Z_{-}-\lambda^{-1}) 
\end{split}
\]
These formulas lead us to obtain 
\begin{equation}\label{X0C}
\begin{split}
\lambda \mr{x}_{0}(\lambda) e_{L} & =
\frac{1}{2 [1-s-J(\lambda) Z_{-}(\lambda)]}
\begin{bmatrix}
\beta_{0} \ol{\beta} -1 + \lambda Z_{-}(\lambda) \\[7pt]
-\alpha_{0} \ol{\beta} 
\end{bmatrix} \\[7pt]
\lambda \mr{x}_{0}(\lambda) e_{R} & = 
\frac{1}{2 [1-s-J(\lambda)Z_{-}(\lambda)]} 
\begin{bmatrix}
\ol{\alpha_{0}} \beta  \\[7pt]
\ol{\beta_{0}} \beta -1 + \lambda Z_{-}(\lambda) 
\end{bmatrix}
\end{split}
\end{equation}
Then we see 
\[
\begin{split}
\zeta \mr{x}_{0}(\zeta)e_{L} & =
\frac{1}{2[1-s-x^{2}-ixw]} 
\begin{bmatrix}
s-y^{2}-yw +i(xy+xw+t) \\
-\alpha_{0}\ol{\beta}
\end{bmatrix} \\[7pt]
\zeta \mr{x}_{0}(\zeta)e_{R} & = 
\frac{1}{2 [1-s-x^{2} -ixw] } 
\begin{bmatrix}
\ol{\alpha_{0}} \beta \\
s-y^{2}-yw +i(xy+xw-t) 
\end{bmatrix}
\end{split}
\]
These formulas make us to obtain the following. 
\begin{equation}\label{2PS}
d\Sigma(\zeta)|_{S}=
\frac{-w}{(1-s)^{2}-[2(1-s) -\rho^{2}]x^{2}}
\begin{bmatrix}
[(1-s)y+tx]  & -i \ol{\alpha_{0}} \beta x \\
i\alpha_{0}\ol{\beta} x & [(1-s)y-tx]
\end{bmatrix}
\end{equation}
Next we consider $\Sigma$ on $S^{1} \setminus S$. On the $S^{1} \setminus S$, we have 
\[
\lim_{r \uparrow 1} (Z_{-}(r\zeta) -Z_{-}(r^{-1} \zeta)) =0 \quad (\zeta \in S^{1} \setminus S \cup \{\pm \zeta_{*}, \pm \ol{\zeta_{*}} \}). 
\]
On the domains 
\[
\Omega_{+}:=\{\lambda \in \mb{C} \mid \re(\lambda) >0,\ \lambda \not\in S \cup i\mb{R}\},  \quad 
\Omega_{-}:=\{\lambda \in \mb{C} \mid \re(\lambda) <0,\ \lambda \not\in S \cup i\mb{R}\}, 
\]
the function $Z_{-}(\lambda)$ is then given as 
\[
Z_{-}(\lambda)=
\begin{cases}
J(\lambda) +\sqrt{J(\lambda)^{2}-\rho^{2}} & \lambda \in \Omega_{+}, \\
J(\lambda) -\sqrt{J(\lambda)^{2} -\rho^{2}} & \lambda \in \Omega_{-}. 
\end{cases}
\]
We denote $\zeta_{o}=x_{o}+iy_{o}$ one of the eigenvalues $\{\pm \zeta_{*},\pm \ol{\zeta_{*}}\}$ 
of $U$ and write $e^{i\theta_{o}}=\zeta_{o}$. For any $r \in (1/2,1)$ and small $\ep>0$, we define 
\[
J_{r,\ep}=\{t e^{i\theta} \mid r \leq t \leq r^{-1},\,|\theta -\theta_{o}| \leq \ep\}, 
\quad A_{\ep}=J_{r,\ep} \cap S^{1}=\{e^{i\theta} \mid |\theta -\theta_{o}| \leq \ep \}, 
\] 
where $\ep>0$ is chosen so that $J_{r,\ep}$ is contained in one of $\Omega_{\pm}$ which contains $\zeta_{o}$ 
and $J_{r,\ep}$ contains only $\zeta_{o}$ among $\{\pm \zeta_{*},\pm \ol{\zeta_{*}}\}$. 
For simplicity we set 
\[
f(\lambda)=1-s-J(\lambda)Z_{-}(\lambda). 
\]
Then $\zeta_{o}$ is a zero of $f$ of order $1$ and there are no zeros of $f$ on a neighborhood of $J_{r,\ep}$. 
Thus, for any holomorphic function $h(\lambda)$ near $J_{r,\ep}$, Cauchy's formula gives 
\begin{equation}\label{CF1}
\frac{1}{2\pi i} \int_{\partial J_{r,\ep}}  \frac{h(\lambda)}{f(\lambda)}\,\frac{d\lambda}{\lambda}
= \frac{h(\zeta_{o})}{\zeta_{o} f'(\zeta_{o})}
=- \frac{h(\zeta_{o})} {\zeta_{o} Z_{-}(\zeta_{o}) Z_{-}'(\zeta_{o})}, 
\end{equation}
where $\partial J_{r,\ep}$ is the boundary of $J_{r,\ep}$ with the counterclockwise direction. 
By computing the contour integral in the right-hand side of $\eqref{CF1}$, we see 
\begin{equation}\label{WL1}
\lim_{r \uparrow 1} 
\left[
\int_{A_{\ep}} \frac{h(r\zeta)}{f(r\zeta)} \,\dbar \zeta -
\int_{A_{\ep}} \frac{h(r^{-1}\zeta)}{f(r^{-1}\zeta)} \,\dbar \zeta
\right]=
\frac{h(\zeta_{o})} {\zeta_{o} Z_{-}(\zeta_{o}) Z_{-}'(\zeta_{o})}.
\end{equation}
The set of all Laurent polynomials is dense in the space of continuous functions on $A_{\ep}$. 
Thus, the formula $\eqref{WL1}$ still holds for any continuous function $h$ on $A_{\ep}$. 
From $\eqref{WL1}$ we obtain, for $\re(\zeta_{o})>0$, 
\begin{equation}\label{WL2}
\begin{split}
\Sigma(\{\zeta_{o}\}) & =
\frac{1}{2\zeta_{o} Z_{-}(\zeta_{o}) Z_{-}'(\zeta_{o})}
\begin{bmatrix}
[\beta_{o} \ol{\beta} -1 +\zeta_{o} Z_{-}(\zeta_{o})] \delta_{\zeta_{o}} & \ol{\alpha_{0}} \beta \delta_{\zeta_{o}} \\
-\alpha_{0} \ol{\beta} \delta_{\zeta_{o}} & [\ol{\beta_{o}} \beta -1 +\zeta_{o} Z_{-}(\zeta_{o})] \delta_{\zeta_{o}}
\end{bmatrix} \\
& = 
\frac{x_{o}\sqrt{x_{o}^{2}-\rho^{2}}}{2y_{o}(1-s)^{2}} 
\begin{bmatrix}
[(1-s) y_{o}+tx_{o}] \delta_{\zeta_{o}} & -i\ol{\alpha_{0}} \beta x_{o} \delta_{\zeta_{o}} \\
i \alpha_{0} \ol{\beta} x_{o} & [(1-s) y_{o}-tx_{o}] \delta_{\zeta_{o}}
\end{bmatrix}
\end{split} 
\end{equation}
where $\delta_{\zeta_{o}}$ is the delta measure at the point $\zeta_{o}$, and for $\re(\zeta_{o})<0$, 
\begin{equation}\label{WL3}
\Sigma(\{\zeta_{o}\})  
 = 
-\frac{x_{o}\sqrt{x_{o}^{2}-\rho^{2}}}{2y_{o}(1-s)^{2}} 
\begin{bmatrix}
[(1-s) y_{o}+tx_{o}] \delta_{\zeta_{o}} & -i\ol{\alpha_{0}} \beta x_{o} \delta_{\zeta_{o}} \\
i \alpha_{0} \ol{\beta} x_{o} & [(1-s) y_{o}-tx_{o}] \delta_{\zeta_{o}}
\end{bmatrix}.
\end{equation}
From this we conclude the following. 
\begin{thm}\label{TP1}
The positive-matrix-valued measure $\Sigma$ for the two-phase model $U(\ccal)$ with 
the assumption $\eqref{AssA1}$ is given by the following. 
\[
\begin{split}
d\Sigma(\zeta) & =
\frac{\sqrt{\rho^{2}-x^{2}}} {(1-s)^{2}-(2-2s -\rho^{2}) x^{2}} 
\begin{bmatrix}
(1-s)|y| + \sgn(y)tx & -\sgn(y) i \ol{\alpha_{0}} \beta x \\
i \sgn(y) \ol{\alpha_{0}} \beta x & (1-s) |y| - \sgn(y) tx
\end{bmatrix} \chi_{S} \dbar \zeta \\
& \hspace{10pt} + \frac{x_{*}\sqrt{x_{*}^{2}-\rho^{2}}}{2y_{*}(1-s)^{2}} 
\begin{bmatrix}
[(1-s) y_{*}+tx_{*}]  & -i\ol{\alpha_{0}} \beta x_{*}  \\
i \alpha_{0} \ol{\beta} x_{*} & [(1-s) y_{*}-tx_{*}] 
\end{bmatrix} \left( \delta_{\zeta_{*}} +\delta_{-\zeta_{*}} \right) \\
& \hspace{10pt} + \frac{x_{*}\sqrt{x_{*}^{2}-\rho^{2}}}{2y_{*}(1-s)^{2}} 
\begin{bmatrix}
[(1-s) y_{*}-tx_{*}]  & i\ol{\alpha_{0}} \beta x_{*}  \\
-i \alpha_{0} \ol{\beta} x_{*} & [(1-s) y_{*}+tx_{*}] 
\end{bmatrix} \left( \delta_{\ol{\zeta_{*}}} + \delta_{-\ol{\zeta_{*}}} \right), 
\end{split}
\]
where we set $\zeta=x+iy \in S^{1}$ with $x,y \in \mb{R}$, 
\[
\sgn(y)=
\begin{cases}
+1 & (y>0) \\
-1 & (y<0)
\end{cases}
\]
and $\zeta_{*}=x^{*}+iy_{*}$ is given as 
\[
x_{*}=\frac{1-s}{\sqrt{2-2s -\rho^{2}}},\quad y_{*}=\sqrt{1-x_{*}^{2}} =\sqrt{\frac{1-\rho^{2}-s^{2}}{2-2s -\rho^{2}}}. 
\]
\end{thm}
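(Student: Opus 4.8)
The plan is to read off $d\Sigma$ from the boundary-value characterization \eqref{xmat3}, treating separately the arc $S$, where the eigenvalues $Z_{\pm}$ of the transfer matrix lie on $S^{1}$ and $U$ has no point spectrum, and the four isolated points $\pm\zeta_{*},\pm\ol{\zeta_{*}}$, where $f(\zeta)=1-s-J(\zeta)Z_{-}(\zeta)$ vanishes and $U$ has eigenvalues. All the analytic ingredients are already assembled: the formula \eqref{X0C} for $\lambda\mr{x}_{0}(\lambda)e_{L}$ and $\lambda\mr{x}_{0}(\lambda)e_{R}$, its boundary form giving $\zeta\mr{x}_{0}(\zeta)e_{L}$ and $\zeta\mr{x}_{0}(\zeta)e_{R}$ on $S$, and the Cauchy-integral extraction \eqref{WL1} with its consequences \eqref{WL2}, \eqref{WL3} at the eigenvalues. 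The theorem is then the assembly of these fragments into a single expression, the sole real work being the correct choice of signs.

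First, on $S$: since $|\re\zeta|\le\rho$ forces $Z_{-}=x+iw$ with $w=\pm\sqrt{\rho^{2}-x^{2}}$, the only singularity of $\re\mr{x}(r\zeta)$ as $r\uparrow1$ is the integrable factor $1/\sqrt{\rho^{2}-x^{2}}$. I would therefore pass to the pointwise limit of $\re\mr{x}(r\zeta)$, insert \eqref{X0C}, and arrive at \eqref{2PS}. The decisive step is to fix the branch of $w$: positivity of $\Sigma$, guaranteed by Lemma \ref{mC1}, forces the trace of the density in \eqref{2PS}, which is proportional to $-w(1-s)y$, to be nonnegative, and this selects $-w=\sgn(y)\sqrt{\rho^{2}-x^{2}}$. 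Substituting this value turns the entries $(1-s)y\pm tx$ of \eqref{2PS} into $(1-s)|y|\pm\sgn(y)tx$ and multiplies the off-diagonal entries by $\sgn(y)$, yielding exactly the absolutely continuous term of the asserted formula, supported on $S$.

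Second, at the eigenvalues: the extraction \eqref{WL1} shows that each simple zero $\zeta_{o}$ of $f$ produces a point mass $\Sigma(\{\zeta_{o}\})$ equal to \eqref{WL2} when $\re\zeta_{o}>0$ and to \eqref{WL3} when $\re\zeta_{o}<0$, with the prefactor $1/\bigl(\zeta_{o}Z_{-}(\zeta_{o})Z_{-}'(\zeta_{o})\bigr)$ reduced to $x_{*}\sqrt{x_{*}^{2}-\rho^{2}}/\bigl(y_{*}(1-s)^{2}\bigr)$ by differentiating the quadratic relation $z^{2}-2J(\zeta)z+\rho^{2}=0$ satisfied by $Z_{-}$. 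Writing $\zeta_{*}=x_{*}+iy_{*}$ with $x_{*},y_{*}>0$, I would evaluate the four masses by tracking the signs of $(x_{o},y_{o})$: the pair $\zeta_{*}$ (via \eqref{WL2}) and $-\zeta_{*}$ (via \eqref{WL3}) yields the same matrix, and likewise $\ol{\zeta_{*}}$ and $-\ol{\zeta_{*}}$. This is precisely what permits the grouping $\delta_{\zeta_{*}}+\delta_{-\zeta_{*}}$ and $\delta_{\ol{\zeta_{*}}}+\delta_{-\ol{\zeta_{*}}}$ with the two matrices in the second and third lines of the statement.

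The difficulty here is bookkeeping rather than analysis: the genuine content lies in the boundary limit \eqref{xmat3}, the integrability of $1/\sqrt{\rho^{2}-x^{2}}$, and the residue identity \eqref{WL1}, all established above. The two places demanding care are (i) the branch choice $-w=\sgn(y)\sqrt{\rho^{2}-x^{2}}$ on $S$, justified by the positivity of $\Sigma$, and (ii) the $\re\zeta_{o}\gtrless0$ case split together with the reflection $\zeta_{o}\mapsto-\zeta_{o}$, which must be checked to collapse the four point masses onto the two symmetric matrices displayed. Once these sign issues are settled, adding the absolutely continuous term to the four weighted delta measures gives the stated formula for $d\Sigma$.
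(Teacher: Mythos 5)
Your proposal is correct and takes essentially the same route as the paper: Theorem \ref{TP1} is obtained there precisely by assembling the absolutely continuous part $\eqref{2PS}$ on the arc $S$ with the point masses $\eqref{WL2}$, $\eqref{WL3}$ at the four zeros of $f$, and your sign bookkeeping at the eigenvalues (using $\eqref{WL2}$ for $\re\zeta_{o}>0$, $\eqref{WL3}$ for $\re\zeta_{o}<0$, and checking that the four masses collapse onto the two matrices attached to $\delta_{\zeta_{*}}+\delta_{-\zeta_{*}}$ and $\delta_{\ol{\zeta_{*}}}+\delta_{-\ol{\zeta_{*}}}$) is exactly what the paper's final step amounts to. Your positivity argument selecting the branch $-w=\sgn(y)\sqrt{\rho^{2}-x^{2}}$ is a valid supplement to a point the paper leaves implicit; it is sound because the denominator $(1-s)^{2}-(2-2s-\rho^{2})x^{2}$ is strictly positive on $S$ (by $\eqref{AssA1}$ one has $1-s>\rho^{2}$, so this linear function of $x^{2}$ stays above $((1-s)-\rho^{2})^{2}>0$ for $|x|\le\rho$), so positivity of the trace indeed forces $-wy\ge 0$ pointwise.
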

\par
\medskip
\appendix
\section{Proof of Theorem $\ref{TM1}$}
Theorem $\ref{TM1}$ is proved in \cite{KKK}, \cite{KS}, \cite{MSSSS}. 
But the way of presentation here is somehow different from them. 
Thus, we give its proof for completeness. 
Suppose that $f \in \map(\mb{Z},\mb{C}^{2})$ satisfies $U(\ccal)f=\lambda f$ with $\lambda \in S^{1}$. 
Then by definition of $U(\ccal)$, we have 
\begin{equation}\label{EE1}
\pi_{L}\ccal(x+1) f(x+1)=\lambda \pi_{L}f(x),\quad 
\pi_{R}\ccal(x-1) f(x-1)=\lambda \pi_{R}f(x)
\end{equation}
Shifting the variable $x$, we get 
\begin{equation}\label{EE2}
\pi_{L}\ccal(x)f(x)=\lambda \pi_{L} f(x-1),\quad \pi_{R} \ccal(x) f(x) = \lambda \pi_{R} f(x+1). 
\end{equation}
The second equation of $\eqref{EE2}$ gives 
\begin{equation}\label{EE3}
\pi_{R}f (x+1) = \lambda^{-1} \pi_{R} \ccal(x) f(x). 
\end{equation}
The first equation of $\eqref{EE1}$ gives
\[
a_{x+1}\pi_{L} f(x+1)+\pi_{L} \ccal(x+1) \pi_{R} f(x+1) =\lambda \pi_{L}f (x). 
\]
Substituting $\eqref{EE3}$ into the above and dividing both side by $a_{x+1}$ show 
\begin{equation}\label{EE4}
\pi_{L}f(x+1) =\frac{1}{a_{x+1}} 
\left(
\lambda \pi_{L} -\lambda^{-1} \pi_{L} \ccal(x+1) \pi_{R} \ccal(x) 
\right) f(x)
\end{equation}
Summing $\eqref{EE3}$ and $\eqref{EE4}$ then gives 
\begin{equation}\label{EE5}
f(x+1)=T_{\lambda}(x)f(x) \quad (x \in \mb{Z}),  
\end{equation}
where $T_{\lambda}(x)$ is given in $\eqref{TRM1}$. From $\eqref{EE5}$, 
it is easy to show that $f(x)=F_{\lambda}(x)f(0)$. 
Conversely, a direct computation using Lemma $\ref{FMS}$ shows that 
the function defined in $\eqref{EF1}$ is an eigenfunction of $U(\ccal)$ with the eigenvalue $\lambda$. 
Therefore $\mcal^{\lambda}=\Phi_{\lambda}(\mb{C}^{2})$. 
Since $\Phi_{\lambda} \colon \mb{C}^{2} \to \map(\mb{Z},\mb{C}^{2})$ is obviously injective, 
we have $\dim \mcal^{\lambda}=2$. Next, we introduce a map 
\[
J \colon \map(\mb{Z},\mb{C}^{2}) \to \map (\mb{Z},\mb{C}^{2}) 
\]
defined as 
\[
(Jf)(x)=
\begin{bmatrix}
f_{L}(x-1) \\
f_{R}(x)
\end{bmatrix},\quad \text{where} \quad 
f(x)=
\begin{bmatrix}
f_{L}(x) \\
f_{R}(x)
\end{bmatrix}.
\]
Let $f=\Phi_{\lambda}(w)$ with $w \in \mb{C}^{2}$. Using $\eqref{EE2}$ we see 
\[
a_{x}\pi_{L} f(x) =\lambda \pi_{L}f(x-1) -\pi_{L} \ccal(x) \pi_{R} f(x), \quad 
\pi_{R} \ccal(x) \pi_{L} f(x) -\lambda \pi_{R} f(x+1) =-d_{x} \pi_{R} f(x). 
\]
This is equivalent to 
\begin{equation}\label{TM22}
Jf(x+1) = S_{\lambda}(x) Jf(x),\quad  
S_{\lambda}(x) =\frac{1}{a_{x}}
\begin{bmatrix}
\lambda & -b_{x} \\
c_{x} & \lambda^{-1} \triangle_{x}
\end{bmatrix}. 
\end{equation} 
The matrix $S_{\lambda}(x)$ is also called the {\it transfer matrix} (\cite{KS}, \cite{MSSSS}). 
Compared with $T_{\lambda}(x)$ which we used throughout the paper, 
$S_{\lambda}(x)$ has a nice property $\det S_{\lambda}(x)=d_{x}/a_{x}$ 
and hence $|\det S_{\lambda}(x)|=1$ whereas $\det T_{\lambda}(x) =d_{x}/a_{x+1}$ 
whose absolute value is, in general, not the unity. 
Let $v,w \in \mb{C}^{2}$ and suppose, as in \cite{MSSSS}, that $\Phi_{\lambda}(v) \in \ell^{2}(\mb{Z},\mb{C}^{2})$ 
and that $\|\Phi_{\lambda}(w)(x)\|_{\mb{C}^{2}}$ is bounded. 
By the definition of $J$, it is easy to show that $\|(J\Phi_{\lambda}(w))(x)\|_{\mb{C}^{2}}$ is bounded. 
Since $J$ is unitary, $J\Phi_{\lambda}(w) \in \ell^{2}(\mb{Z},\mb{C})$. 
Thus we can take a sequence $\{x_{n}\}$ of positive integers such that $\|(J\Phi_{\lambda}(w))(x_{n})\|_{\mb{C}^{2}} \to 0$ 
as $n \to \infty$. 
But, by $\eqref{TM22}$, we see that the function 
\[
W(x)=|\det [(J\Phi_{\lambda}(v))(x),\ (J\Phi_{\lambda}(w))(x)]|
\]
in $x \in \mb{Z}$ is constant. We have $W(x_{n}) \to 0$ as $n \to \infty$ and hence $W(x)$ is identically zero. 
This shows that $J\Phi_{\lambda}(v)(x)$ and $J\Phi_{\lambda}(w)(x)$ are linearly dependent. 
Let $s_{0}, t_{0} \in \mb{C}$ satisfy
\[
s_{0}J\Phi_{\lambda}(v)(0)+t_{0}J\Phi_{\lambda}(w)(0)=0. 
\]
Then for $x\geq 1$, we see 
\[
s_{0}J\Phi_{\lambda}(v)(x)+t_{0}J\Phi_{\lambda}(w)(x)=S_{\lambda}(x-1) \cdots S_{\lambda}(0) 
\left(
s_{0}J\Phi_{\lambda}(v)(0)+t_{0}J\Phi_{\lambda}(w)(0)
\right)=0.  
\]
Similar computation for $x \leq -1$ shows that $s_{0}J\Phi_{\lambda}(v)(x)+t_{0}J\Phi_{\lambda}(w)(x)=0$ 
for any $x \in \mb{Z}$. Therefore, we conclude $\dim \mcal^{\lambda} \cap \ell^{2}(\mb{Z},\mb{C}^{2}) \leq 1$. 
\hfill$\square$

\end{document}